\newcommand{\cref}[1]{Claim~\ref{#1}}
\newtheorem{theorem}{\bf Theorem }
\newtheorem{lemma}{\bf Lemma}
\newtheorem{corollary}{\bf Corollary}
\newtheorem{assumption}{\bf Assumption}
\newcommand{\vect}[1]{\boldsymbol{#1}}
\newcommand{\mat}[1]{\boldsymbol{#1}}
\newcommand{\CMP}{\mathbb{C}}
\newcommand{\REAL}{\mathbb{R}}
\newcommand{\BIGO}{\mathcal{O}}
\newcommand{\IMG}{{\bf i}}
\newcommand{\TSP}{^{\rm T}}
\newcommand{\HET}{^{\rm H}}
\newcommand{\DIAG}[1]{{\rm diag}(#1)}
\newcommand{\PRO}[1]{\mathbb{P}\left\{#1\right\}}
\newcommand{\PROL}[1]{\mathbb{P}\{#1\}}
\newcommand{\EXP}[1]{\mathbb{E}\left[#1\right]}
\newcommand{\EXPL}[1]{\mathbb{E}[#1]}
\newcommand{\ABS}[1]{\left|#1\right|}
\newcommand{\ABSL}[1]{|#1|}
\newcommand{\SUPP}[1]{{\rm supp}(#1)}
\DeclareMathOperator{\polylog}{polylog}
\DeclareMathOperator{\poly}{poly}
\begin{document}
\title{Learning Mixtures of Sparse Linear Regressions Using Sparse Graph Codes}

\author{Dong~Yin,
        Ramtin~Pedarsani,
        Yudong~Chen,
        and~Kannan~Ramchandran%
\thanks{D. Yin and K. Ramchandran are with the Department
of EECS at UC Berkeley, email: \{dongyin, kannanr\}@eecs.berkeley.edu.}%
\thanks{R. Pedarsani is with the Department of ECE at UC Santa Barbara, email: ramtin@ece.ucsb.edu.}%
\thanks{Y. Chen is with the School of ORIE at Cornell University, email: yudong.chen@cornell.edu.}
\thanks{This paper was presented in part at IEEE 55th Annual Allerton Conference on Communication, Control, and Computing, 2017.}}
 
\maketitle

\begin{abstract}
In this paper, we consider the \emph{mixture of sparse linear regressions} model. Let $\vect{\beta}^{(1)},\ldots,\vect{\beta}^{(L)}\in\CMP^n$ be $ L $ unknown sparse parameter vectors with a total of $ K $ non-zero elements. Noisy linear measurements are obtained in the form $y_i=\vect{x}_i\HET \vect{\beta}^{(\ell_i)} + w_i$, each of which is generated randomly from one of the sparse vectors with the label $ \ell_i $ unknown.
The goal is to estimate the parameter vectors efficiently with low sample and computational costs. This problem presents significant challenges as one needs to simultaneously solve the \emph{demixing} problem of recovering the labels $ \ell_i $ as well as the \emph{estimation} problem of recovering the sparse vectors $ \vect{\beta}^{(\ell)} $.

Our solution to the problem leverages the connection between modern coding theory and statistical inference. We introduce a new algorithm, \emph{Mixed-Coloring}, which samples the mixture strategically using query vectors $ \vect{x}_i $ constructed based on ideas from sparse graph codes. Our novel code design allows for both efficient demixing and parameter estimation. To find $K$ non-zero elements, it is clear that we need at least $\Theta(K)$ measurements, and thus the time complexity is at least $\Theta(K)$. In the noiseless setting, for a constant number of sparse parameter vectors, our algorithm achieves the order-optimal sample and time complexities of $\Theta(K)$. In the presence of Gaussian noise,\footnote{The proposed algorithm works even when the noise is non-Gaussian in nature, but the guarantees on sample and time complexities are difficult to obtain.} for the problem with two parameter vectors (i.e., $L=2$), we show that the Robust Mixed-Coloring algorithm achieves near-optimal $\Theta(K\polylog(n))$ sample and time complexities. When $K=\BIGO(n^{\alpha})$ for some constant $\alpha\in(0,1)$ (i.e., $K$ is sublinear in $n$), we can achieve sample and time complexities both sublinear in the ambient dimension. In one of our experiments, to recover a mixture of two regressions with dimension $n=500$ and sparsity $K=50$, our algorithm is more than $300$ times faster than EM algorithm, with about one third of its sample cost.
\end{abstract}

\IEEEpeerreviewmaketitle
  
\section{Introduction}\label{sec:intro} 
Mixture and latent variable models, such as Gaussian mixtures and subspace clustering, are expressive, flexible, and widely used in a broad range of problems including background modeling~\cite{harville2002framework}, speaker identification~\cite{reynolds2000speaker} and recommender systems~\cite{zhang2012guess}. However, parameter estimation in mixture models is notoriously difficult due to the non-convexity of the likelihood functions and the existence of local optima. In particular, it often requires a large sample size and many re-initializations of the algorithms to achieve an acceptable accuracy.

Our goal is to develop provably fast and efficient algorithms for mixture models---with sample and time complexities \emph{sublinear} in the problem's ambient dimension when the parameter vectors of interest are sparse---by leveraging the underlying low-dimensional structures.

In this paper we focus on a powerful class of models called \emph{mixtures of linear regressions}~\cite{de1989mixtures}. We consider the \emph{sparse} setting with a \emph{query-based} algorithmic framework. In particular, we assume that each query-measurement pair $ (\vect{x}_i, y_i) $ is generated from a  sparse linear model chosen randomly from $ L $ possible models:\footnote{We use $ \vect{x}_i\HET $ to denote the conjugate transpose of $ \vect{x}_i $. In this paper, for any positive integer $N$, $[N]$ denotes the set $\{1,2,\ldots,N\}$.}
\begin{align}
y_i=\vect{x}_i\HET\vect{\beta}^{(\ell)}+w_i ~ \text{ with probability } q_{\ell}, \; \text{ for  }  \ell \in [L] ,
\label{eq:model}
\end{align}
where $ w_i $ is noise. Here, the probability $q_{\ell} > 0 $ is also called the \emph{mixture weight} of $\vect{\beta}^{(\ell)}$, and they satisfy  $\sum_{\ell=1}^L q_{\ell} = 1$.
The total number of non-zero elements in the parameter vectors $ \{ \vect{\beta}^{(\ell)} \in \CMP^n, \ell\in [L] \}$ is assumed to be $K$. The goal is to estimate the $\vect{\beta}^{(\ell)}$'s, without knowing which $ \vect{\beta}^{(\ell)}  $ generates each query-measurement pair. We also note that when $L=1$, we recover the compressive sensing problem that has been extensively studied in recent years~\cite{candes2006robust,candes2006stable}.

A mixture of regressions provides a flexible model for various heterogeneous settings where the regression coefficients differ for different subsets of observations. This model has been applied to a broad range of tasks including medicine measurement design~\cite{blackwell2006applying}, behavioral health care~\cite{deb2002estimates} and music perception modeling~\cite{viele2002modeling}. Here, we study the problem when the query vectors $\vect{x}_i$ can be \emph{designed} by the user;  in  Section~\ref{sec:motivation} we discuss several practical applications that motivate the study of this query-based setting.  Our results show that by appropriately exploiting this design freedom, one can achieve significant reduction in the sample and computational costs. 
  
To recover $ K $ unknown non-zero elements, the number of linear measurements needed scales at least as $ \Theta(K) $. The corresponding time complexity is also at least $ \Theta(K) $, which is the time needed to write down $K$ numbers as the solution. We introduce a new algorithm, called the \emph{Mixed-Coloring} algorithm, that \emph{matches these sublinear sample and time complexity lower bounds}. The design of query vectors and decoding algorithm leverages ideas from sparse graph codes such as low-density parity-check (LDPC) codes~\cite{gallager1962low}.
For any $L=\Theta(1)$, our algorithm recovers the parameter vectors with optimal $ \Theta(K) $ sample and time complexities in the noiseless setting, both in theory and empirically. In the noisy setting, for compressive sensing problems (i.e., $L=1$), it is known from an information-theoretic point of view that the optimal sample complexity is $\Theta(K\log(n/K))$~\cite{wainwright2009information,akccakaya2010shannon}. In this work, we show that when the noise is Gaussian distributed, $L=2$, and the non-zero elements take value in a finite quantized set, the Robust Mixed-Coloring algorithm has $ \Theta(K\polylog(n))$ sample and time complexities. Since our problem is harder than compressive sensing, the sample and time complexities of our algorithm are optimal up to polylogarithmic factors. When $K=\BIGO(n^\alpha)$ for some $\alpha\in(0,1)$, the sample and time complexities are sublinear in the ambient dimension $n$. In noisy setting with continuous-valued parameter vectors, we provide experimental results and show that our algorithm can successfully recover the best quantized approximation of the parameter vectors, provided that the continuous-valued parameter vectors are close to the quantized grid in $\ell_\infty$ norm\footnote{In Section~\ref{sec:experiments}, we formally define the \emph{perturbation} of the continuous-valued parameter vector $\vect{\beta}$ with respect to the quantized alphabet. The notion of perturbation essentially measures the distance between the continuous-valued parameter vector and the quantized grid in $\ell_\infty$ norm.}. Prior literature on this problem that does not utilize the design freedom typically have sample and time complexities that are at least polynomial in $n$; we provide a survey of prior work and a more detailed comparison in Section~\ref{sec:related}. Empirically, we find that our algorithm is orders of magnitude faster than standard Expectation-Maximization (EM) algorithms for mixture of regressions. For example,  in one of our experiments, detailed in Section~\ref{sec:experiments}, we consider recovering a mixture of two regressions with dimension $n=500$ and sparsity $K=50$; our algorithm is more than $300$ times faster than EM algorithm, with about $ 1/3 $ of its sample cost.

\subsection{Algorithm Overview}\label{sec:overview}
Our Mixed-Coloring algorithm solves two problems simultaneously: (i) rapid {demixing}, namely identifying the \emph{label} $ \ell_i $ of the vector $ \vect{\beta}^{(\ell_i)} $ that generates each measurement $ y_i $; (ii) efficient identification of the \emph{location} and \emph{value} of the non-zero elements of the $\vect{\beta}^{(\ell)}  $'s. The main idea is to use a divide-and-conquer approach that iteratively reduces the original problem into simpler ones with much sparser parameter vectors. More specifically, we design $\Theta(K)$ sets of sparse query vectors, with each set only associated with a subset of all the non-zero elements. The design of the query vectors ensures that we can first identify the sets which are associated with a single non-zero element (called singletons), and recover the location and value of that element (motivated by a balls-and-bins model that we utilize for designing our measurements, we call them singleton balls, shown as shaded balls in Figure~\ref{subfig:singleton}). We further identify the pairs of singleton balls which have the same (but unknown) label, indicated by the edges in Figure~\ref{subfig:singleton}. Results from random graph theory guarantee that, with high probability, the $L$ largest connected components (giant components) of the singleton graph have different labels, and thus we recover a fraction of the non-zero elements in each $\vect{\beta}^{(\ell)}$, as shown in Figure~\ref{subfig:giant}. We can then iteratively enlarge the recovered fraction with a guess-and-check method until finding all the non-zero elements. We revisit Figure~\ref{fig:algorithm} when describing the details of our algorithm in Section~\ref{sec:algoutline}.
\begin{figure}
	\centering
	\begin{minipage}{\textwidth}
		\centering
		\begin{subfigure}[b]{0.2\textwidth}
			\includegraphics[width=\textwidth]{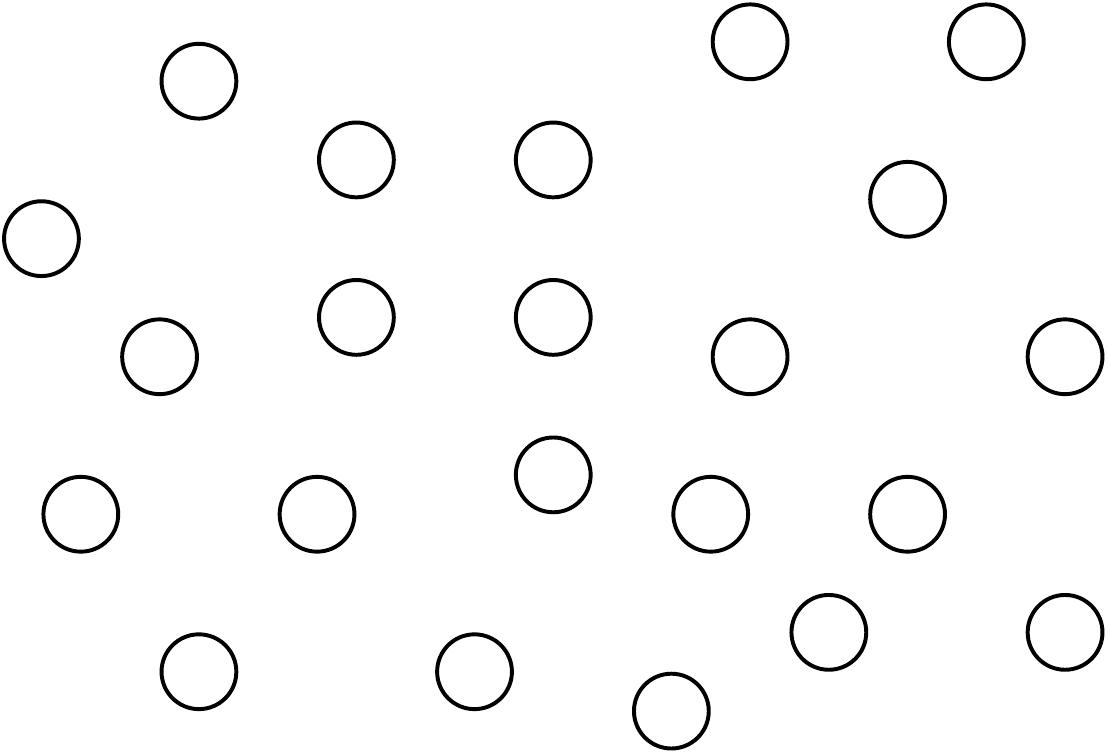}
			\caption{Non-zero elements}
			\label{subfig:nonzero}
		\end{subfigure}
		\quad 
		\begin{subfigure}[b]{0.2\textwidth}
			\includegraphics[width=\textwidth]{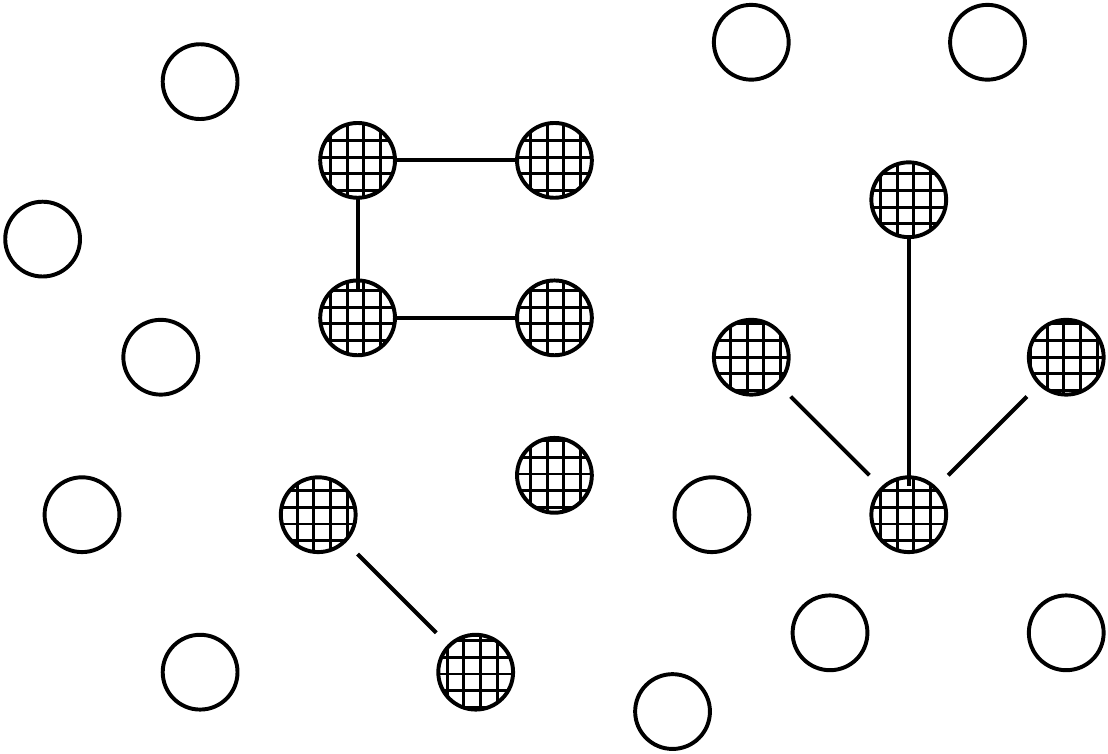}
			\caption{Singleton balls}
			\label{subfig:singleton}
		\end{subfigure}
		\quad 
		\begin{subfigure}[b]{0.2\textwidth}
			\includegraphics[width=\textwidth]{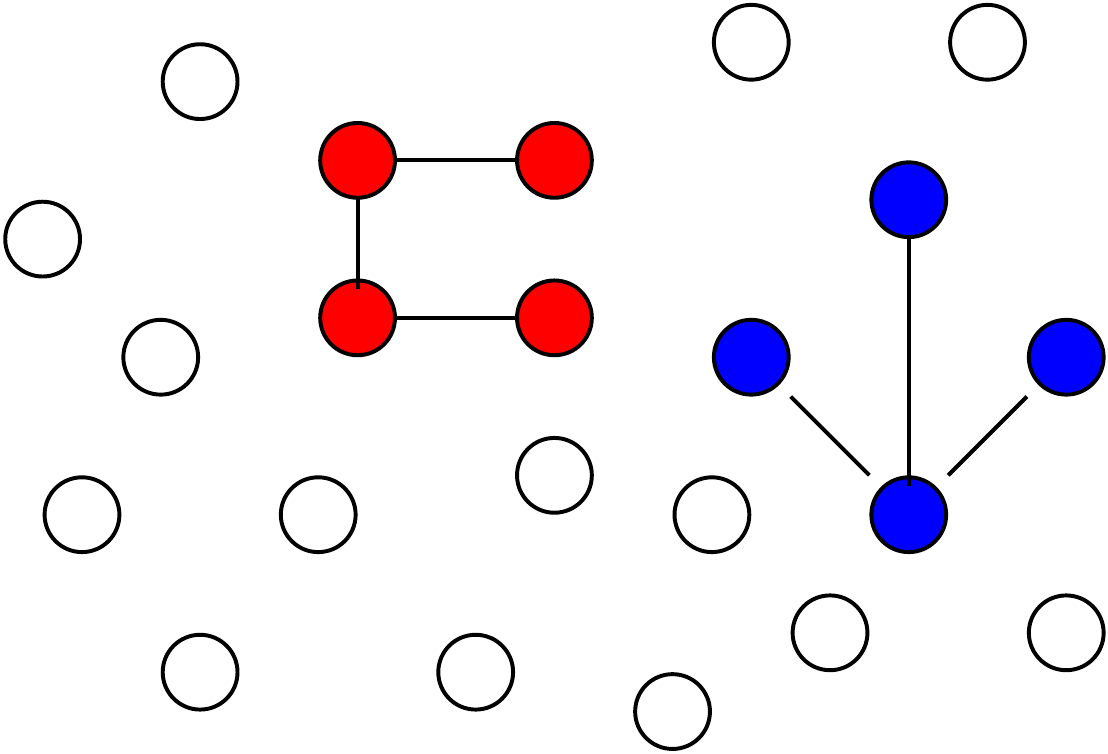}
			\caption{Giant components}
			\label{subfig:giant}
		\end{subfigure}
		\quad 
		\begin{subfigure}[b]{0.2\textwidth}
			\includegraphics[width=\textwidth]{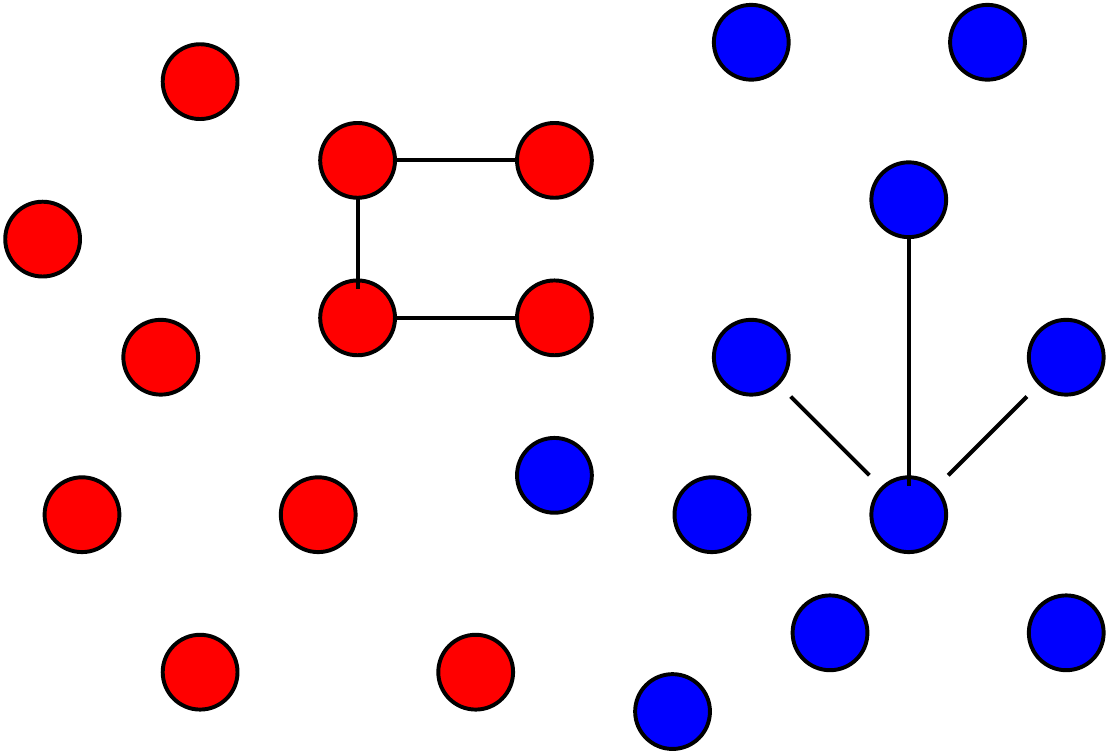}
			\caption{Results}
			\label{subfig:colored}
		\end{subfigure}
		\caption{\small{Mixed-Coloring algorithm with $ L=2 $. (a) A graph with nodes corresponding to the non-zero elements in the two parameter vectors. (b) Detect some non-zero elements without knowing their labels, i.e., singleton balls, shown as shaded balls, and find pairs of singleton balls which must have the same label, shown as the edges between shaded balls. (c) Find the two largest connected components of the graph, and the non-zero elements in these two components must have different labels, shown as red and blue. Thus, we recover a fraction of the non-zero elements in each parameter vector. (d) Recover the whole parameter vectors using iterative decoding. The non-zero elements in the two parameter vectors are shown in blue and red, respectively.}}\label{fig:algorithm}
	\end{minipage}
\end{figure}

\subsection{Motivation}\label{sec:motivation}
Our problem is a natural extension of the setting of compressive sensing, in which one often has full freedom of designing query vectors in order to estimate a sparse parameter vector. In many applications, the unknown sparse parameter vector can be affected by latent variables, leading to a mixture of sparse linear regressions, and these scenarios have been observed in neuroscience~\cite{lewicki1998review}, genetics~\cite{jansen1992general}, psychology~\cite{blackwell2006applying}, etc. Here, we provide a concrete example motivated by neuroscience applications. In neural signal processing, sensors are used to measure the brain activities, represented by an unknown sparse vector $\vect{\beta}$. The sensors can be modeled as digital filters, and one can \emph{design} the linear filter weights ($\vect{x}_i$'s) when measuring the neural signal. Multiple sensors are usually placed in a particular area of the brain in order to acquire enough compressed measurements. However, there may be more than one neuron affecting a particular area of the brain, as shown in Figure~\ref{fig:neuron}, and different neurons may have different activities, corresponding to the $\vect{\beta}^{(\ell)}$'s. Consequently, each sensor may be measuring one of several different sparse signals. Further, if we use the sensors multiple times, a single sensor may even obtain measurements that are generated by different neurons, since neurons are different depths may be active during different time periods. Thus, the problem can be formulated as a mixture of sparse linear regressions. Variants of this problem, such as neural spike sorting~\cite{lewicki1998review}, have been studied in neuroscience. While the common solution is to use clustering algorithms on the spike signals, we believe that our algorithm provides the potential of improving sensor design and reducing sample and time complexities.
\begin{figure}[h]
\centering
\includegraphics[width=0.48\textwidth]{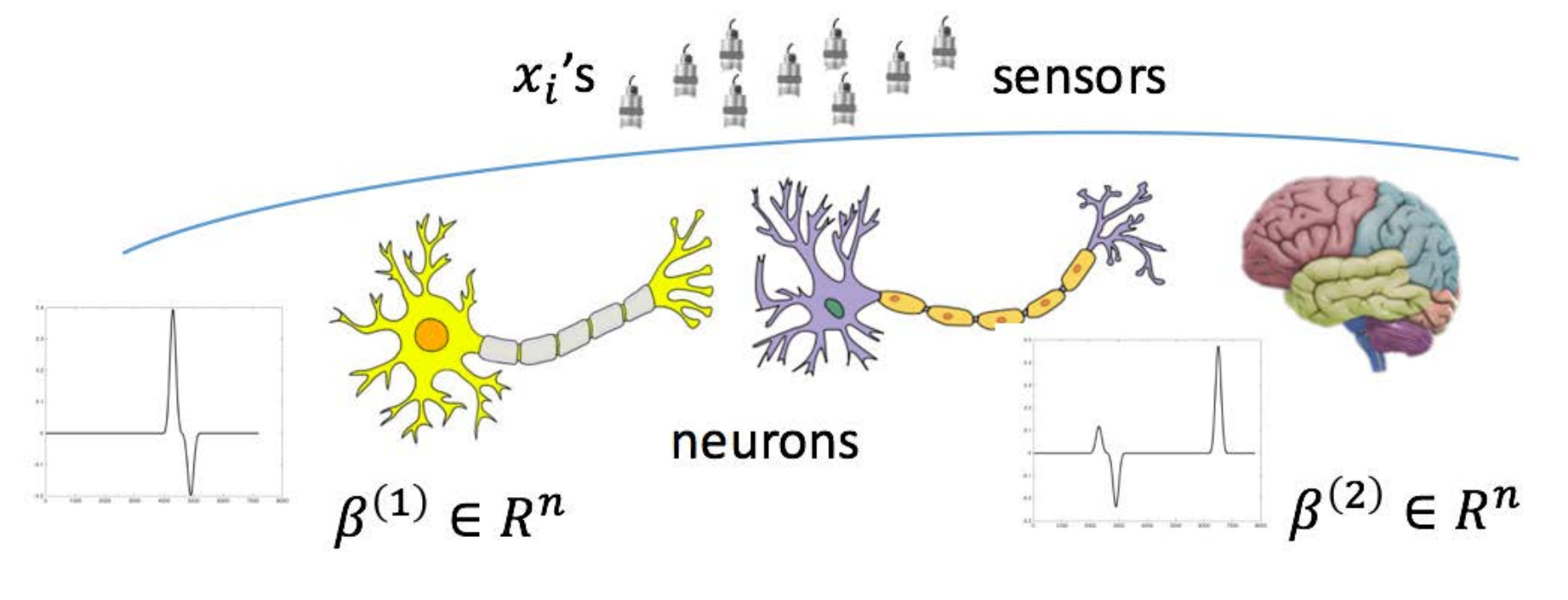}
\caption{\small{Mixture of neural signals. Sensors (modeled as digital filters) are used to measure the brain activities, represented by an unknown sparse vector $\vect{\beta}$. One can \emph{design} the linear filter weights of the sensors ($\vect{x}_i$'s) when measuring the neural signals. Multiple sensors are usually placed in a particular area of the brain in order to acquire enough compressed measurements. However, there may be more than one neuron affecting a particular area of the brain, and different neurons may have different activities, corresponding to the $\vect{\beta}^{(\ell)}$'s. In this case, we have a mixture-of-sparse-linear-regressions problem.}}
\label{fig:neuron}
\end{figure}

In addition, our work adds the intellectual value of the power of design freedom in tackling sparse mixture problems by highlighting the significant performance gap between algorithms that can exploit the design freedom and those that cannot. We also believe that our ideas are applicable more broadly for other latent-variable problems that require experimental designs, such as survey designs in psychology with mixed type of respondents and biology experiments with mixed cell interior environments.

\subsection{Organization}
We summarize our main results in Section~\ref{sec:main_thm}, discuss related works in Section~\ref{sec:related}, present the details of our algorithm in the noiseless and noisy settings in Sections~\ref{sec:algoutline} and~\ref{sec:noise}, respectively, provide experimental results in Section~\ref{sec:experiments}, and make conclusions in Section~\ref{sec:con}.

\section{Main Results}\label{sec:main_thm}
In this section, we present the recovery guarantees for the Mixed-Coloring algorithm, and provide bounds on its sample and time complexities. We assume there are $L$ unknown $ n $-dimensional parameter vectors $\vect{\beta}^{(1)},\ldots,\vect{\beta}^{(L)}$. Each $ \vect{\beta}^{(\ell)} $ has $ K_\ell $ non-zero elements, i.e., $ \ABSL{ \SUPP{\vect{\beta}^{(\ell)} } }  = \ABSL{ \{ j : \beta_j^{(\ell)} \neq 0 \} } = K_\ell$. Let $ K=\sum_{\ell=1}^{L} K_\ell$ be the total number of non-zero elements. 
Using the query vectors $\{ \vect{x}_i \}\in\CMP^n$, the Mixed-Coloring algorithms obtains $m$ measurements $ y_i $, $i\in[m]$ generated independently according to the model~\eqref{eq:model}, and outputs an estimate $\{ \hat{\vect{\beta}}^{(\ell)}$, $\ell\in[L] \}$ of the unknown parameter vectors. 
We defer more details to Sections~\ref{sec:algoutline} and~\ref{sec:noise}.

Our results are stated in the asymptotic regime where $n$ and $K$ approach infinity. 
A constant is a quantity that does not depend on $n$ and $K$, with the associated Big-O notations $ \BIGO(\cdot)$ and $\Theta(\cdot) $. We assume that $ L $ is a known and fixed constant, and the mixture weights satisfy $q_{\ell}=\Theta(1)$ for each $\ell\in [L]$ and thus are of the same order.  Similarly, the sparsity levels of the parameter vectors are also of the same order with $ K_\ell = \Theta(K)$. 

\subsection{Guarantees for the Noiseless Setting}\label{sec:noiseless_results}
In the noiseless case, i.e., $ w_i \equiv 0 $, we consider for generality the complex-valued setting with $ \vect{\beta}^{(\ell)}  \in\CMP^n $ (our results can be easily applied to real case). We make a mild technical assumption, which stipulates that if any pair of parameter vectors have overlapping support, then the elements in the overlap are different. 

\begin{assumption}
\label{asm:identifiability}
For each pair $\ell_1,\ell_2\in[L]$, $\ell_1\neq \ell_2$ and each index $j\in\SUPP{\vect{\beta}^{(\ell_1)}}\cap\SUPP{\vect{\beta}^{(\ell_2)}}$, we have $\beta_j^{(\ell_1)}\neq \beta_j^{(\ell_2)}$.
\end{assumption}
 
We need this assumption due to our element-wise recovery strategy. However, this assumption is mild in practice. In particular, in the noiseless case, if the non-zero elements are generated from some continuous distribution, it is a measure zero event that two elements at the same coordinate share exactly the same value.
Under the above setting, we have the following recovery guarantees for the Mixed-Coloring algorithm.
\begin{theorem}\label{thm:main_noiseless}
Consider the asymptotic regime where $n$ and $K$ approach infinity. Under Assumption~\ref{asm:identifiability}, for any fixed constant $p^* \in (0,1)$, there exists a constant $C>0$ such that if the number of measurements is $ m \ge CK $, then the Mixed-Coloring algorithm guarantees the following three properties for each $\ell\in[L]$ (up to a label permutation):
\begin{compactenum}
	\item (No false discovery) For each $j\in\SUPP{\vect{\beta}^{(\ell)}}$, $\hat{\beta}^{(\ell)}_j$ equals either $\beta^{(\ell)}_j$ or $0$; for each $j\notin\SUPP{\vect{\beta}^{(\ell)}}$, $\hat{\beta}^{(\ell)}_j=0$.
	\item (Support recovery) 
	$$
	\mathbb{P} \big\{\ABSL{\SUPP{\hat{\vect{\beta}}^{(\ell)}}} \ge (1-p^*)\ABSL{\SUPP{\vect{\beta}^{(\ell)}}} \big\} \ge 1- \BIGO(1/K).
	$$
	\item (Element-wise recovery) For each $j\in\SUPP{\vect{\beta}^{(\ell)}}$,
	$$
	\PROL{\hat{\beta}^{(\ell)}_j = \beta^{(\ell)}_j}\ge 1-p^*-\BIGO(1/K).
	$$
\end{compactenum}
Moreover, the computational time of the Mixed-Coloring algorithm is $\Theta(K)$.
\end{theorem}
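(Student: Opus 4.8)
The plan is to track the three stages of the Mixed-Coloring algorithm of Section~\ref{sec:algoutline}: (i) an initial pass that recovers \emph{singleton balls} together with the \emph{edges} linking equal-label singletons; (ii) identifying the $L$ giant components of the resulting singleton graph and showing they carry $L$ distinct labels; and (iii) an iterative guess-and-check ``peeling'' stage that grows the recovered support. The $\Theta(K)$ query vectors induce a random bipartite graph between the $K$ non-zero coordinates and $\Theta(K)$ measurement \emph{bins}, each bin being probed by a constant number $d$ of sub-queries whose generating labels are i.i.d.\ $(q_1,\dots,q_L)$ and independent across bins. The first move is to \emph{condition on the label pattern of every bin}: since these patterns are mutually independent, the remaining randomness is exactly that of a sparse random bipartite graph, and all subsequent arguments are run on this conditioned object. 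Property~1 (no false discovery) I would then argue deterministically: the algorithm commits a value $\hat\beta^{(\ell)}_j$ only after it is consistent with \emph{all} measurements of an incident bin resolved to label $\ell$, which in the noiseless case pins $\hat\beta^{(\ell)}_j=\beta^{(\ell)}_j$ exactly; any non-singleton or mixed-label configuration fails such a check (here Assumption~\ref{asm:identifiability} is what separates colors even at shared coordinates) and is discarded, so the output never contains a wrong value or a spurious support entry.

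Next I would handle the giant components. Conditioned on the bin label patterns, a bin holding exactly one non-zero of $\vect{\beta}^{(\ell)}$ and none of the others is resolved as a singleton of $\vect{\beta}^{(\ell)}$ precisely when all $d$ of its sub-queries carry label $\ell$, i.e.\ with probability $q_\ell^{d}=\Theta(1)$; and the edge-creating bins link two already-recovered singletons of a common (unknown) label with constant probability, while---again by exact noiseless arithmetic under Assumption~\ref{asm:identifiability}---they never link singletons of different labels. Hence, for each fixed $\ell$, the recovered singletons of $\vect{\beta}^{(\ell)}$ with their edges form a sparse random graph of the Erd\H{o}s--R\'enyi type whose mean degree can be made an arbitrarily large constant by increasing $C$. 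I would then invoke the standard theory of component sizes in sparse random graphs: for $C$ large enough (depending on $p^{*}$, $d$, and the $q_\ell$'s) this graph has, with probability $1-\BIGO(1/K)$, a unique giant component containing a $(1-p^{*})$ fraction of the recovered singletons, with every other component of size $\BIGO(\log K)$. A union bound over the constant number $L$ of colors, together with the gap between the giant and second-largest component sizes, shows that the $L$ giants are exactly the $L$ largest connected components of the union graph and carry $L$ distinct labels---all with probability $1-\BIGO(1/K)$; matching them to $1,\dots,L$ is what produces the permutation ambiguity in the statement.

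Finally I would analyze the peeling stage. Seeded with the correctly-labeled non-zeros from the giants, the algorithm re-examines the remaining bins: whenever all sub-queries of a bin carry a common label $\ell$ and all but one of its incident non-zeros are already known, subtracting the known contributions exposes a fresh singleton of $\vect{\beta}^{(\ell)}$. This is ordinary peeling decoding on the random bipartite graph, and the usual density-evolution / empty-$2$-core argument gives that for $m\ge CK$ the process leaves at most $p^{*}K_\ell$ non-zeros of each $\vect{\beta}^{(\ell)}$ unrecovered with probability $1-\BIGO(1/K)$; combined with the preceding paragraph this is Property~2. For Property~3 I would fix $j\in\SUPP{\vect{\beta}^{(\ell)}}$: on the $1-\BIGO(1/K)$ event that label identification succeeds, $j$ is missed only if it lands in the residual un-peeled set, whose expected size is $\le p^{*}K_\ell$; since the construction is exchangeable among the $K_\ell$ non-zeros of $\vect{\beta}^{(\ell)}$, this yields $\PROL{\hat\beta^{(\ell)}_j=\beta^{(\ell)}_j}\ge 1-p^{*}-\BIGO(1/K)$. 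The complexity claim follows by accounting: $m=\Theta(K)$ by construction, the bipartite graph has $\Theta(K)$ edges, each bin is visited $\BIGO(1)$ times and each singleton decode is a constant number of arithmetic operations on $\BIGO(\log n)$-bit words, and the random-graph component computations are linear in the graph size, for $\Theta(K)$ total time in the word-RAM model.

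The step I expect to be the main obstacle is the random-graph analysis: one must (a) legitimately reduce the ``coloring with unknown labels'' process to a clean sparse random graph, which the conditioning on per-bin label patterns accomplishes, and---harder---(b) obtain the component-size and giant-versus-second-component bounds with failure probability $\BIGO(1/K)$ rather than merely $o(1)$, since Properties~2 and~3 are stated with that rate; this calls for sharp concentration estimates for connected components in sparse random graphs, carried uniformly over the $L$ colors and over the coupled peeling process.
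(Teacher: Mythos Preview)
Your high-level architecture matches the paper's: singleton detection, strong-doubleton edges, Erd\H{o}s--R\'enyi giant components for label separation, then density-evolution peeling with a Doob-martingale/Azuma concentration, and finally exchangeability for Property~3. However, you have misread the query design in ways that would make your concrete probability computations wrong.

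First, $d$ is \emph{not} the number of sub-queries in a bin; it is the left degree of the coordinate--bin bipartite graph (each coordinate is placed in $d$ bins). Each bin instead contains $2R+V$ measurements: $R$ repetitions of each of two indexing vectors and $V$ repetitions of their sum. A bin acquires ``color $\ell$'' not when ``all $d$ of its sub-queries carry label $\ell$'' but when the summation check finds one type-I, one type-II, and one verification measurement all with label $\ell$; the correct probability is
\[
Q_\ell \;=\; \bigl[1-(1-q_\ell)^V\bigr]\bigl[1-(1-q_\ell)^R\bigr]^2,
\]
not $q_\ell^{\,d}$. This $Q_\ell$ is the quantity that enters the singleton probability $\rho_1^{(\ell)}=Q_\ell e^{-\lambda_\ell}$, the strong-doubleton count, and the density-evolution recursion $p_{j+1}^{(\ell)}=(1-Q_\ell(e^{-\lambda_\ell p_j^{(\ell)}}-e^{-\lambda_\ell}))^{d-1}$; your $q_\ell^{\,d}$ would yield the wrong threshold conditions for both the giant component and the peeling fixed point.

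Second, your restriction that a singleton bin for $\vect{\beta}^{(\ell)}$ must contain ``none of the others'' is unnecessary and not what the algorithm does: once a consistent pair is isolated by the summation check, only the non-zeros of $\vect{\beta}^{(\ell)}$ that hash to this bin matter; non-zeros of other parameter vectors in the same bin are irrelevant. A bin can simultaneously yield singletons of several colors.

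Third, the giant component in the paper does \emph{not} itself reach a $(1-p^*)$ fraction; it captures a fixed constant fraction $\zeta_\ell q_s^{(\ell)}$ determined by the Erd\H{o}s--R\'enyi equation $\zeta_\ell+\exp(-2\zeta_\ell M\nu_\ell/(K_\ell q_s^{(\ell)}))=1$. It is the subsequent density evolution, with $d$ (the left degree) tuned so that the recursion's fixed point is below $p^*$, that drives the unrecovered fraction down. Your proposal blurs these two mechanisms.

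Once you correct the roles of $d$, $R$, $V$ and replace $q_\ell^{\,d}$ by $Q_\ell$, the rest of your outline---including your identification of the $\BIGO(1/K)$ bottleneck at the giant-component step and the exchangeability argument for Property~3---lines up with the paper's proof.
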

As we can see, to recover an arbitrarily large fraction of the non-zero elements, our Mixed-Coloring algorithm has \emph{order-optimal} $\Theta(K)$ sample and time complexities. More specifically, the first property ensures that Mixed-Coloring algorithm has no false discovery: for zero elements in the parameter vectors, our algorithm does not produce non-zero estimates, and for non-zero elements, our algorithm outputs either the true value or zero. The second property ensures that the Mixed-Coloring algorithm recovers $(1-p^*)$ fraction of the non-zero elements with high probability. The third property ensures that for each non-zero element, the probability that it can be recovered is asymptotically at least $1-p^*$. In fact, the recovered fraction of the non-zero elements is uniformly distributed on the support of the parameter vectors.

The error fraction $ p^* $ is an input parameter to algorithm, and can be made arbitrarily close to zero by adjusting the oversampling ratio $C \equiv C(p^*, L, \{q_\ell\} )$. By more careful analysis, one can show that the dependence of $C$ on $p^*$ is $C=\BIGO(\log(1/p^*))$ (see the proof of Lemma~\ref{lem:errorfloor} in Appendix~\ref{sec:prf_noiseless}). Thus, when $p^*$ approaches $0$, the sample and time complexities grow slowly as $\log(1/p^*)$. Here, since we set $p^*$ as a constant, we hide this dependence in the constant $C$. Given the number of components $ L $, mixture weights $ \{q_\ell \}$ and the target $ p^* $, the value of the constant $ C $ can be computed numerically. Table~\ref{tab:thmconst} gives some of the $ C $ values for several $p^*$ and~$ L $, under the setting  $q_\ell=1/L,\forall \ell \in [L]$. We see that the value of $ C $ is quite modest. More details of computing the constants in the sample complexity can be found in Appendix~\ref{sec:constant}.

\begin{table}[h]
\caption{Sample complexity of the Mixed-Coloring algorithm}
\begin{center}
  \begin{tabular}{ c | c | c | c }
    \hline
    $L$ & $2$ & $3$ & $4$ \\ \hline
    $p^*$ & $5.1\times10^{-6}$ & $8.8\times 10^{-6}$ & $8.1\times 10^{-6}$ \\ \hline
    $m = CK$ & $33.39K$ & $37.80K$ & $40.32K$\\
    \hline   
  \end{tabular} 
\end{center}
\label{tab:thmconst}
\end{table}

We can in fact boost the above guarantee to recover all the non-zero elements, by running the Mixed-Coloring algorithm $\Theta(\log K)$ times independently and aggregating the results by majority voting. By property 2 in Theorem~\ref{thm:main_noiseless} and a union bound argument, this procedure {\it exactly} recovers all the parameter vectors with probability $1-\BIGO(1/\poly(K))$ with $\Theta(K\log K)$ sample and time complexities.	

\subsection{Guarantees for the Noisy Setting}\label{sec:noisy_results}

An extension of the previous algorithm, \emph{Robust Mixed-Coloring}, handles noise in the measurement model~\eqref{eq:model}, in the case of two parameter vectors which appear equally likely, i.e., $L=2$ and $q_\ell=1/2$, $\ell=1,2$. Many interesting applications have binary latent factors: gene mutation present/not, gender, healthy/sick individual, children/adult, etc; see also the examples in~\cite{de1989mixtures,viele2002modeling,chen2014convex}. We would like to mention that our goal is to design a query-based algorithm that can simultaneously conduct fast demixing and robust estimation in the presence of noise. Even if there are only two possible parameter vectors, achieving this goal is highly non-trivial, and we believe that our framework provides useful intellectual insights to this problem. Extending our results to the setting with $ L > 2 $ is an important and interesting direction, and we leave it to future work.

The noise $ w_i $ is assumed to be i.i.d. Gaussian with mean zero and constant variance $ \sigma^2 $. We note that the Gaussian noise assumption is mainly for theoretical reason. As one can see in Subsection~\ref{sec:noisy_decoding_alg}, our algorithm uses EM algorithm as a subroutine to estimate the component means of a mixture of two random variables. The analysis of EM algorithm is known to be hard due to the non-convexity of the likelihood functions. For simplicity, in this paper we assume that the noise is Gaussian and employ the recent convergence results on EM algorithm for two-component Gaussian mixtures in~\cite{balakrishnan2014statistical}. Since EM algorithm is widely used for non-Gaussian noise and is shown to have good performance in many applications, we believe that our algorithm can work well in practice even if the noise is not Gaussian distributed. 

In the noisy setting, we make an additional assumption that the non-zero elements in the parameter vectors take value in a finite quantized set.  
\begin{assumption}\label{asm:quantize}
The non-zero elements of the parameter vectors satisfy $ \beta_j^{(\ell)} \in \mathbb{D}, \forall \beta_j^{(\ell)}\neq0, \ell\in[L] $, where
$$
\mathbb{D} \triangleq \{\pm\Delta, \pm2\Delta,\ldots,\pm b\Delta\}\subset\REAL,
$$
The positive constants $\Delta$ and $b$ are known to the algorithms.
\end{assumption} 
Here, we note that this assumption is mild in practice. As mentioned in Theorem~\ref{thm:main_noisy}, the quantization step size $\Delta$ can be as small as a constant multiple of the standard error of the noise, and this quantization step size should be small enough for most applications. In fact, for continuous-valued parameter vectors, in the noisy setting, it is fundamental that the non-zero elements can only be recovered up to certain precision. Moreover, in our empirical results in Section~\ref{sec:experiments}, the Robust Mixed-Coloring algorithm works even when the assumption is violated. In this case, the algorithm produces the best quantized approximation to the unknown parameter vectors,  provided that they are not too far off the quantized set. We would also like to mention that it is a major challenge to develop algorithms with sublinear complexity and provable guarantees in noisy settings of sparse mixed regression, even with the assumption of quantized non-zero elements. Thus, even with this mild simplifying assumption, our work demonstrates a significant progress. Establishing strong theoretical guarantees for a fast recovery algorithm with sublinear sample and time complexities for the continuous alphabet setting remains to be an open problem..

When the quantization assumption holds, exact recovery is possible, as guaranteed in the following theorem. The Robust Mixed-Coloring algorithm maintains sublinear sample and time complexities, and recovers the parameter vectors in the presence of i.i.d. Gaussian noise.

\begin{theorem}\label{thm:main_noisy}
Consider the asymptotic regime where $K$ and $n$ approach infinity. Suppose that the noise in the measurements are i.i.d. Gaussian distributed with mean $0$ and variance $\sigma^2$, and that $\Delta/\sigma \ge \frac{4}{\sqrt{3}}$.
When $L=2$ and Assumptions~\ref{asm:identifiability} and~\ref{asm:quantize} hold, if the number of measurements is $ m = \Theta(K\polylog(n)) $, then, the Robust Mixed-Coloring algorithm guarantees the following three properties for each $\ell\in\{1,2\}$ (up to a label permutation):
\begin{compactenum}
	\item (No false discovery) With probability at least $1-\BIGO(1/\poly(n))$, for each $j\in\SUPP{\vect{\beta}^{(\ell)}}$, $\hat{\beta}^{(\ell)}_j$ equals either $\beta^{(\ell)}_j$ or $0$; for each $j\notin\SUPP{\vect{\beta}^{(\ell)}}$, $\hat{\beta}^{(\ell)}_j=0$.
	\item (Support recovery) 
$$
	\mathbb{P} \big\{\ABSL{\SUPP{\hat{\vect{\beta}}^{(\ell)}}} \ge (1-p^*)\ABSL{\SUPP{\vect{\beta}^{(\ell)}}} \big\} \ge 1- \BIGO(1/K).
$$
	\item (Element-wise recovery) For each $j\in\SUPP{\vect{\beta}^{(\ell)}}$, 
$$
	\PROL{\hat{\beta}^{(\ell)}_j = \beta^{(\ell)}_j}\ge 1-p^*-\BIGO(1/K).
$$
\end{compactenum}
Moreover, the computational time of the Robust Mixed-Coloring algorithm is $\Theta(K\polylog(n))$.
\end{theorem}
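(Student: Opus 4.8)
The plan is to bootstrap the analysis behind Theorem~\ref{thm:main_noiseless}: keep the sparse-graph / balls-and-bins construction and the peeling (guess-and-check) decoder of Section~\ref{sec:noise}, but replace every noiseless decoding primitive with a noise-robust version, and then argue that the per-primitive failure probabilities are small enough that a union bound leaves the global analysis essentially unchanged. \textbf{Step 1 (per-bin model).} Recall that the $\Theta(K)$ query groups (``bins'') are indexed by a sparse bipartite graph between the $K$ active coordinates and the bins, and that in the noisy regime each bin is probed with a batch of $d=\Theta(\polylog(n))$ measurements rather than $\Theta(1)$. Since $L=2$ and $q_1=q_2=1/2$, I would first write the law of a batch conditioned on the numbers of active coordinates of $\vect{\beta}^{(1)}$ and of $\vect{\beta}^{(2)}$ that hash into the bin: a bin is a zeroton, singleton, or multiton according to how many distinct active coordinates it sees, and a batch from such a bin is a sample from a mixture of at most two deterministic signals plus i.i.d.\ Gaussian noise. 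The decoder must extract from each singleton bin (i) a correct zeroton/singleton/multiton classification; (ii) the index $j$ of the lone active coordinate; (iii) its value, which by Assumption~\ref{asm:quantize} lies in $\mathbb{D}$; and (iv) a consistent label/color to attach to $j$ for demixing.

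\textbf{Step 2 (robust singleton and value decoding).} The core of the proof is to show that with $d=\Theta(\polylog(n))$ measurements per bin, and under $\Delta/\sigma\ge 4/\sqrt{3}$, steps (i)--(iii) succeed with failure probability $\BIGO(1/\poly(n))$ per bin. Classification and location use energy/correlation statistics whose fluctuations are sub-Gaussian over the batch, so $\BIGO(\log n)$ repetitions give the needed concentration (Hoeffding/Bernstein); the separation $\Delta/\sigma \ge 4/\sqrt{3}$ guarantees a constant margin between the candidate signal levels so that rounding to $\mathbb{D}$ is exact off the failure event. The delicate subroutine is a bin that is a singleton for \emph{both} vectors (whether or not the two coordinates coincide): estimating the two component values is a one-dimensional two-component Gaussian-mixture mean-estimation problem, for which I would invoke the local convergence guarantee for EM in~\cite{balakrishnan2014statistical}, initialized by a batch-consistent moment/grid estimate. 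The hypothesis $\Delta/\sigma\ge 4/\sqrt{3}$ is exactly the signal-to-noise level needed to place the initialization in the basin of attraction and to reach estimation error $o(\Delta)$ after $\BIGO(\log n)$ EM iterations, which then rounds exactly onto $\mathbb{D}$; Assumption~\ref{asm:identifiability} ensures the two values are distinct whenever the two coordinates coincide, so the mixture is genuinely separated.

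\textbf{Step 3 (edges, coloring, propagation, and complexity).} The same batch concentration shows that the noisy analogue of the ``same-label'' certificate between two singleton balls is correct with probability $1-\BIGO(1/\poly(n))$. Conditioned on all per-bin primitives succeeding, the singleton graph is \emph{exactly} the object analyzed in the proof of Theorem~\ref{thm:main_noiseless}, so the random-graph facts used there --- that the two giant components carry distinct labels and cover a constant fraction of each $\SUPP{\vect{\beta}^{(\ell)}}$, and that guess-and-check peeling recovers a $(1-p^*)$ fraction --- transfer verbatim, giving properties 2 and 3 with the same $\BIGO(1/K)$ slack (the $\BIGO(1/K)$ comes from the random-graph analysis, with an additional $\BIGO(1/\poly(n))$ from the decoding-failure events absorbed into it). Property 1 holds because every reported non-zero value was certified by some bin, and certification never fabricates a spurious non-zero off the $\BIGO(1/\poly(n))$-probability failure event; a union bound over the $\Theta(K)$ bins gives the stated $1-\BIGO(1/\poly(n))$. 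For complexity, $\Theta(K)$ bins times $\Theta(\polylog(n))$ measurements is $m=\Theta(K\polylog(n))$ samples, each bin is decoded in $\polylog(n)$ time (an FFT-type location step plus $\BIGO(\log n)$ EM iterations on $\BIGO(\log n)$ scalars), and peeling runs $\Theta(K)$ iterations of $\polylog(n)$ cost, so the total time is $\Theta(K\polylog(n))$.

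\textbf{Main obstacle.} I expect Step 2 to be the hard part: cleanly decoupling within-bin demixing from the global peeling. One must (a) pick a batch size and an initialization that place the two-component EM of~\cite{balakrishnan2014statistical} in its provable convergence regime with high probability \emph{uniformly over all $\Theta(K)$ bins}; (b) handle bins that are a singleton for one vector but a zeroton or a multiton for the other without misclassifying them; and (c) verify that the constant $4/\sqrt{3}$ is indeed what propagates through both the classification margins and the EM basin-of-attraction condition, so that no hidden dependence on $n$ or $K$ creeps into $d$. Once the per-bin guarantee is established at level $1-\BIGO(1/\poly(n))$, the remainder is the noiseless peeling argument of Theorem~\ref{thm:main_noiseless} together with a union bound.
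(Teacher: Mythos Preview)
Your reduction strategy is exactly the paper's: robustify the per-bin primitives so that each fails with probability $\BIGO(1/\poly(n))$, union bound over the $\Theta(K\polylog(n))$ primitive calls, and then invoke the noiseless analysis of Theorem~\ref{thm:main_noiseless} verbatim on the good event. Your identification of the EM guarantee from~\cite{balakrishnan2014statistical} as the source of the $\Delta/\sigma\ge 4/\sqrt{3}$ hypothesis is also correct.

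Where your proposal diverges from the paper is in \emph{how} the primitives are robustified, and this difference matters for the obstacles you flag. You envision a per-bin analysis that mixes concentration (Hoeffding/Bernstein) for classification and location with EM invoked only when a bin is a singleton for both vectors. The paper instead applies EM \emph{uniformly}: every query vector in every bin is repeated $N=\Theta(\polylog(n))$ times, and EM (Theorem~\ref{thm:em}) recovers the two exact denoised values $\vect{x}_i\TSP\vect{\beta}^{(1)}$ and $\vect{x}_i\TSP\vect{\beta}^{(2)}$ for each query, exploiting quantization to round. This collapses the noisy problem to a genuinely noiseless one, after which location uses binary-indexing vectors (not an FFT/ratio test), singleton verification uses a Rademacher matrix and Johnson--Lindenstrauss (Lemma~\ref{lem:verification}), and the within-bin label consistency you did not address is handled by a separate ``consecutive summation check'' (Algorithm~\ref{alg:consecutive}) that matches denoised values across the $\Theta(\log^2 n)$ queries in a bin. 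The paper's decomposition therefore sidesteps your obstacle~(b) entirely: there is no need to special-case bins that are singleton for one vector and multiton for the other, because denoising is oblivious to the bin type, and singleton detection happens afterwards on exact values.

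So your plan is sound at the level of the union-bound reduction, but your Step~2 would require substantially more work than the paper's, and your main obstacle partly dissolves once you adopt the paper's ``denoise everything first, then run the noiseless decoder'' modularization.
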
 

We can make similar remarks as in the noiseless case: 1) with high probability, the Robust Mixed-Coloring has no false discovery, 2) the algorithm can recover an arbitrarily large $(1-p^*)$ fraction of the supports, and 3) each element is recovered with probability asymptotically at least $1-p^*$. As we can see, the sample and time complexities of the Robust Mixed-Coloring algorithm are both $\Theta(K\polylog(n))$, and thus, when $K=\BIGO(n^\alpha)$ for some $\alpha\in(0,1)$, we can achieve \emph{sublinear} sample and time complexities in the ambient dimension $n$. We also note that, Assumption~\ref{asm:identifiability} is still needed in the noisy setting, i.e., the two parameter vectors differ at overlapping support. This assumption can still be mild in the sublinear regime where $K=o(n)$. In particular, if the supports of the two parameter vectors are independently drawn from certain distributions, then the probability that the two parameter vectors have overlapping supports \emph{vanishes} as $n$ approaches infinity. As for the dependence on $p^*$, we again note that when $p^*$ approaches $0$, the sample and time complexities grow slowly as $\log(1/p^*)$. Here, since we set $p^*$ as a constant, we hide this dependence in the big-O notation.

Similar to the noiseless case, by running the Robust Mixed-Coloring algorithm $\Theta(\log K)$ times, one can exactly recover the two parameter vectors with probability $1-\BIGO(1/\poly(K))$. In this case, the sample and time complexities are $\Theta(K\log(K)\polylog(n))$, and further, if we assume that $K=\Theta(n^\alpha)$ for some constant $\alpha$, we can still conclude that the sample and time complexities for full recovery are $\Theta(K\polylog(n))$.

\section{Related Work}\label{sec:related}
\subsection{Mixture of Regressions}
Parameter estimation using the expectation-maximization (EM) algorithm is studied empirically in~\cite{faria2010fitting}. In~\cite{stadler2010}, an $ \ell_1 $-penalized EM algorithm is proposed for the sparse setting. Theoretical analysis of the EM algorithm is difficult due to non-convexity. Progress was made in~\cite{yi2014alternating},~\cite{balakrishnan2014statistical}, and~\cite{yi2016solving} under stylized Gaussian settings with dense $ \vect{\beta} $, for which a sample complexity of $\Theta(n \polylog(n))$ is proved given a suitable initialization of EM. The algorithm uses a grid search initialization step to guarantee that the EM algorithm can find the global optimal solution, with the assumption that the query vectors are i.i.d. Gaussian distributed. The time complexity is polynomial in $n$. An alternative algorithm is proposed in~\cite{chen2014convex}, which achieves optimal $ \BIGO(n) $ sample complexity, but has high computational cost due to the use of semidefinite lifting. The algorithm in~\cite{chaganty2013spectral} makes use of tensor decomposing techniques, but suffers from a high sample complexity of $ \BIGO(n^6) $. In comparison, our approach has near-optimal sample and time complexities by utilizing the potential design freedom. The classification version of this problem has also been studied in~\cite{sun2014learning}.

\subsection{Coding-theoretic Methods and Group Testing} 
Many modern error-correcting codes such as LDPC codes and polar codes~\cite{arikan2009channel} with their roots in communication problems, exploit redundancy to achieve robustness, and use structural design to allow for fast decoding. These properties of codes have recently found applications in statistical problems, including graph sketching~\cite{li2015active}, sparse covariance estimation~\cite{pedarsani2015sparse}, low-rank approximation~\cite{ubaru2015low}, and discrete inference~\cite{ermon2014low}.
Most related to our approach is the work in~\cite{li2014sub,yincompressed,pedarsani2014phasecode,yin2015fast}, which apply sparse graph codes with peeling-style decoding algorithms to compressive sensing and phase retrieval problems. In our setting we need to handle a mixture distribution, which requires more sophisticated query design and novel demixing algorithms that go beyond the standard peeling-style decoding.

Another line of work relevant to our scheme is designing measurements in group testing~\cite{du2000combinatorial} via error correcting codes and expander graphs~\cite{d2000new,cheraghchi2010derandomization,lee2015saffron,mazumdar2016nonadaptive}. These results bear some similarities to our algorithm as they also exploit linear sketches of data for efficient sparsity pattern recovery. Our scheme differs from these works since we tackle problems in real and complex fields, whereas in group testing problems one consideres binary OR operations. In addition, we aim to solve the demixing problem in sparse recovery, and this is a more challenging task that has not been studied in the context of group testing.

\subsection{Combinatorial and Dimension Reduction Techniques} 
Our results demonstrate the power of strategic query and coding theoretic tools in mixture problems, and can be considered as efficient linear sketching of a mixture of sparse vectors. In this sense, our work is in line with recent works that make use of combinatorial and dimension reduction techniques in high-dimensional and large scale statistical problems. These techniques, such as locality-sensitive hashing~\cite{dhillon2011nearest}, sketching of convex optimization~\cite{pilanci2014iterative}, and coding-theoretic methods~\cite{achlioptas2015stochastic}, allow one to design highly efficient and robust algorithms applicable to computationally challenging datasets without compromising statistical accuracy.

\section{Mixed-Coloring Algorithm for Noiseless Recovery}\label{sec:algoutline}
In this section, we provide details of the Mixed-Coloring algorithm in the noiseless setting. We first provide some primitives that serve as important ingredients in the algorithm, and then describe the design of query vectors and decoding algorithm in detail.

\subsection{Primitives}\label{sec:primitive}
The algorithm makes use of four basic primitives: \textbf{summation check}, \textbf{indexing}, \textbf{guess-and-check}, and \textbf{peeling}, which are described below.

\textbf{Summation Check:} Suppose that we generate two query vectors $\vect{x}_1$ and $\vect{x}_2$ independently from some continuous distribution on $ \CMP^{n} $, and a third query vector of the form $\vect{x}_1+\vect{x}_2$. Let $y_1$, $y_2$, and $y_3$ be the corresponding measurements. We check the sum of the measurements and in the noiseless case, if $y_3=y_1+y_2$, then we know that these three measurements are generated from the same parameter vector $ \vect{\beta}^{(\ell)} $ almost surely. In this case we call $ \{y_1, y_2\} $ a \emph{consistent pair} of measurements as they are from the same $ \vect{\beta}^{(\ell)} $ (the third measurement $ y_3 $ is now redundant). 

\textbf{Indexing:}
The indexing procedure is to find the locations and values of the non-zero elements by carefully designed query vectors. In the noiseless case, this can be done by suitably designed \emph{ratio tests}. We sketch the idea of ratio test here. Consider a consistent pair of measurements $ \{y_1, y_2\} $ and corresponding query vectors $ \{\vect{x}_1, \vect{x}_2\} $. We design the query vectors such that the information of the locations of the non-zero elements is encoded in the relative phase between $y_1$ and $y_2$. In particular,  we generate  $n$ i.i.d. random variables $r_j, j\in[n]$ uniformly distributed on the unit circle. Letting $W=e^{\IMG\frac{2\pi}{n}}$ where $\IMG$ is the imaginary unit, we set the $ j $-th entries of $ \vect{x}_1 $ and $ \vect{x}_2 $ to be either $x_{1,j}=x_{2,j}=0$, or $x_{1,j}=r_j$ and $x_{2,j}=r_jW^{j-1}$. (The locations of the zeros are determined using sparse-graph codes and discussed later.)  
Below is an example of such a consistent pair of measurements and the corresponding linear system:
\begin{equation}\label{eq:example_ratio_test}
 \left[\begin{array}{c}
y_1\\y_2
\end{array}\right]
=
\left[\begin{array}{c}
\vect{x}_1\HET \\ \vect{x}_2\HET
\end{array}\right]
\vect{\beta}^{(1)} \\
= 
\left[\begin{array}{cccccccc}
0 & r_2 & r_3 & 0 & 0 & r_6 & 0 & 0 \\ 
0 & r_2W & r_3W^2 & 0 & 0 & r_6W^5 & 0 & 0 
\end{array}\right]
\vect{\beta}^{(1)}.
\end{equation}
Suppose that $ \vect{\beta}^{(1)} $ is $ 3 $-sparse and of the form $\vect{\beta}^{(1)} = [0~0~{*}~0~{*}~0~0~{*}]\TSP$. There is only one non-zero element, $\beta_3^{(1)}$, that contributes to the measurements $ y_1 $ and $ y_2 $ .
In this case the consistent measurement pair $ \{y_1, y_2\} $ is called a \emph{singleton}. 
A singleton can be detected by testing the integrality of the relative phase of the ratio $ y_1/y_2 $. 
In the above example, since $y_1=r_3\beta_3^{(1)}$ and $y_2=r_3W^2\beta_3^{(1)}$,  we observe that $\ABSL{y_1}=\ABSL{y_2}$ and the relative phase $\angle(y_2/y_1)=2\cdot  \frac{2\pi}{8}$ is an integral multiple of $\frac{2\pi}{8}$. We therefore know that with probability one, this consistent pair is a singleton, and moreover the corresponding non-zero element is located at the $ 3 $-rd coordinate with value $\beta_3^{(1)}=y_1/r_3$. In general, for a consistent measurement pair $ \{y_1, y_2\} $, if we observe that $\ABSL{y_1}=\ABSL{y_2}$ and the relative phase $\angle(y_2/y_1)=k\cdot  \frac{2\pi}{n}$ for some nonnegative integer $k$, then, we know that this consistent pair is a singleton, and the corresponding non-zero element is located at the $(k+1)$-th coordinate with value $y_1/r_{k+1}$. We would like to remark that the indexing step can also be done using real-valued query vectors.

\textbf{Guess-and-check and Peeling:}
After the ratio tests, we have already found some singletons, i.e., consistent pairs that are only associated with a single non-zero element. Ideally, we would like to iteratively reduce the problem by subtracting off recovered elements, in a Gaussian elimination-like manner, and find other non-zero elements. However, although we have recovered the locations and values of some non-zero elements, we still do not know their \emph{labels}, and the uncertainty in the labels brings additional difficulty to the problem. To resolve this issue, we use a guess-and-check strategy. In the example above, suppose instead that $ \vect{\beta}^{(1)} $ is $ 4 $-sparse, i.e., $\vect{\beta}^{(1)}=[0~{*}~{*}~0~{*}~0~0~{*}]\TSP$, in which case the consistent pair 
\begin{align}
y_i = x_{i,2} \beta^{(1)}_{2} + x_{i,3} \beta^{(1)}_{3}, \quad i = 1,2
\label{eq:doubleton}
\end{align}
is associated with two non-zero elements of $ \vect{\beta}^{(1)} $. 
Suppose that, in a previous iteration of the algorithm we have recovered the location and value of $ \beta^{(1)}_{2} $. At this point, we only know that this non-zero element is located at the second coordinate, and has value $\beta$ ($\beta=\beta^{(1)}_{2}$), but we do not know that this element belongs to the parameter vector $ \vect{\beta}^{(1)} $, nor do we know that the consistent pair $\{y_1, y_2\}$ is generated by $ \vect{\beta}^{(1)} $. Despite the uncertainty in the labels, we \emph{guess} that, this non-zero element belongs to the parameter vector that generates $\{y_1, y_2\}$. Then we can \emph{peel off} (i.e., subtract) this recovered element by
$$
y_i \leftarrow y_i - x_{i,2} \beta \quad i =1,2.
$$
The updated measurement pair satisfies $ y_i =  x_{i,3} \beta^{(1)}_{3}, i=1,2$. Then, we can \emph{check} whether our previous guess is correct, by doing ratio test on the updated pair. In this example, since the updated measurements are only associated with $\beta^{(1)}_{3}$ (i.e., this pair becomes a singleton), they can pass the ratio test. Then, we know that the peeling step is valid, and that the previous non-zero element at the second coordinate (with value $\beta$) and the newly recovered element at the third coordinate belong to the same parameter vector almost surely. If the updated measurement pair cannot pass the ratio test, there are two possibilities: 1) this pair is generated by some other parameter vector, or 2) this pair is associated with more than two non-zero elements. In this case, we keep both the measurement pairs before and after peeling for future usage. In general, the guess-and-check strategy and the peeling step can be combined to detect that two non-zero elements are from the same parameter vectors.

The continuing execution of these four primitives is made possible by the design of the query vectors using sparse-graph codes, which we describe next.

\subsection{Design of Query Vectors}\label{sec:design}

\begin{figure}[t]
    \centering
    \begin{subfigure}[b]{0.15\textwidth}
        \includegraphics[width=\textwidth]{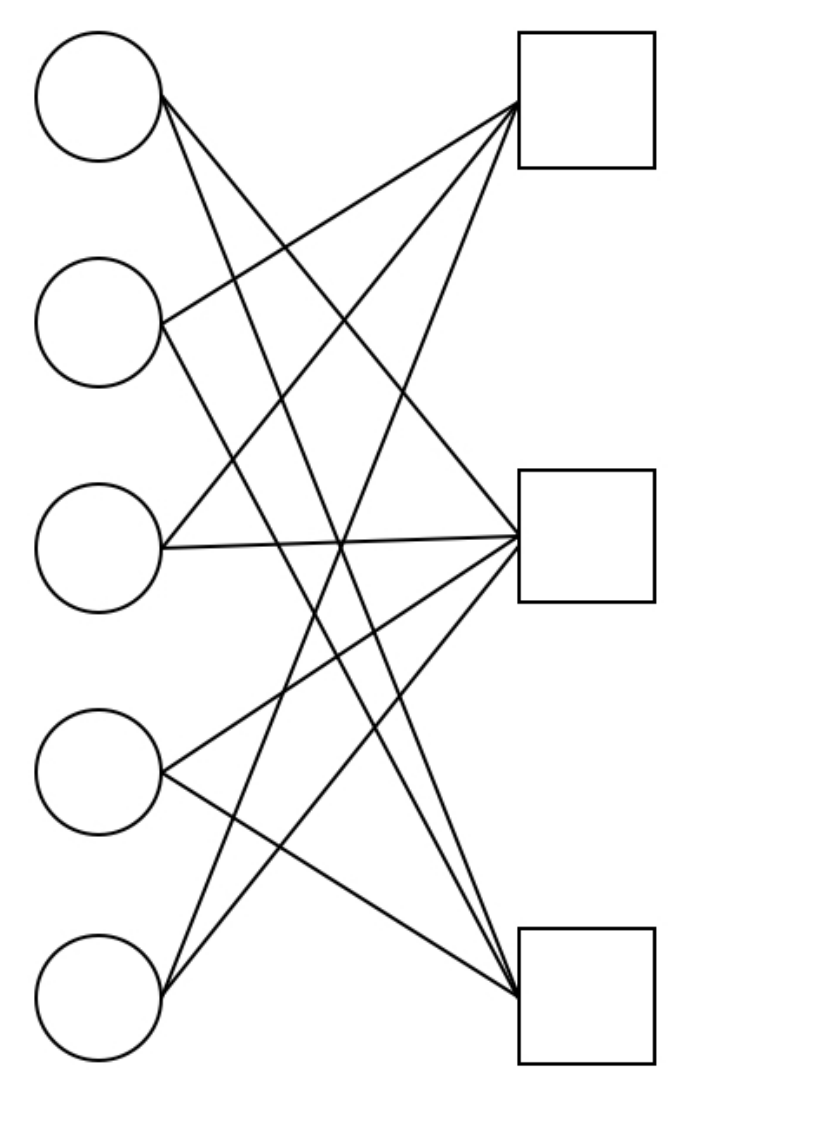}
        \caption{ }
        \label{fig:bipartite}
    \end{subfigure}
    \quad 
    \begin{subfigure}[b]{0.35\textwidth}
        \includegraphics[width=\textwidth]{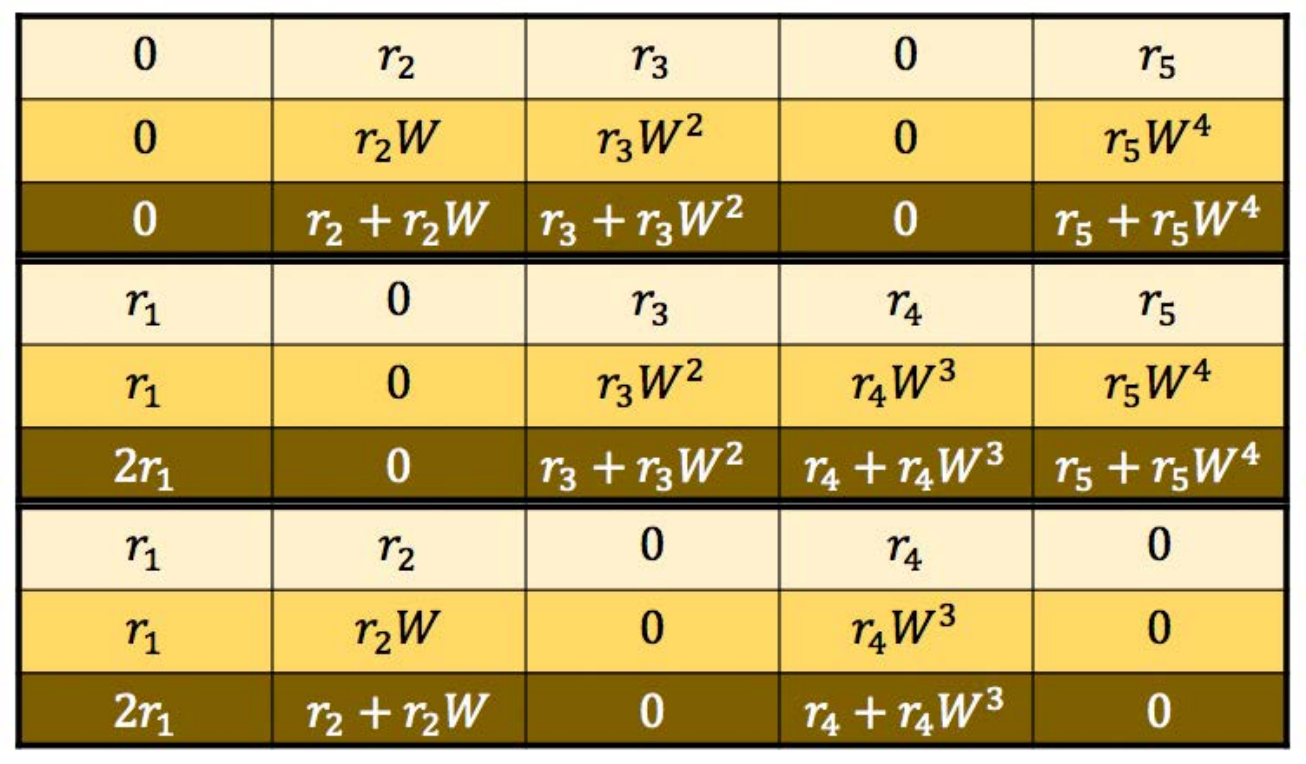}
        \caption{ }
        \label{fig:query_vectors}
    \end{subfigure}
    \caption{\small{Example of query vector design with $n=5$, $M=3$, $d=2$. (a) Bipartite graph with biadjacency given in~\eqref{eq:biadjacency}. (b) Design of query vectors in three bins. In the $i$-th bin, the query vectors are $\vect{r}_1\DIAG{\vect{h}_i}$, $\vect{r}_2\DIAG{\vect{h}_i}$, and $\vect{r}_3\DIAG{\vect{h}_i}$.}}\label{fig:design_example}
\end{figure}

As illustrated in Figure~\ref{fig:query}, we construct $M=\Theta(K)$ sets of query vectors (called \emph{bins}). The query vectors in each bin are associated with some coordinates of the parameter vectors (i.e., the query vectors are non-zero only on those coordinates). The association between the coordinates and bins is determined by a $d$-left regular bipartite graph with $n$ left nodes (coordinates) and $M$ right nodes (bins), where each left node is connected to $d=\Theta(1)$ right nodes chosen independently uniformly at random. Here, we note that other designs of the bipartite graph may also be employed, such as expander graphs~\cite{sipser1996expander,jafarpour2009efficient}. As we see in later sections, as long as the bipartite graph structures allow for a density evolution analysis~\cite{richardson2008modern}, we may be able to use such graphs. In this paper, we choose to use $d$-left regular bipartite graph since it is amenable to a transparent analysis and already achieves order-optimal sample and time complexities in the noiseless setting. In our design, each bin consists of three query vectors. The values of the non-zero elements of the first two query vectors are in the form of~\eqref{eq:example_ratio_test}, enabling the ratio test. The third query vectors equals the sum of the first two and is used for the summation check. 

More precisely, we first design three random vectors $\vect{r}_1, \vect{r}_2, \vect{r}_3\in\CMP^n$, where $\vect{r}_1=[r_1,r_2,\ldots,r_n]\TSP$ consists of elements that are i.i.d. uniformly distributed on the unit circle $\{z:|z|=1\}$, and $\vect{r}_2$ is a vector with elements of $\vect{r}_1$ being modulated by Fourier coefficients $W^j$, $j = 0, \ldots, n-1$, $W=e^{\IMG\frac{2\pi}{n}}$, and $\vect{r}_3 = \vect{r}_1 + \vect{r}_2$. Let $\mat{H}\in\{0,1\}^{M\times n}$ be the biadjacency matrix of the bipartite graph, and $\vect{h}_i\TSP$ be the $i$-th row of $\mat{H}$. Then, the query vectors of the $i$-th bin is $\vect{r}_1\DIAG{\vect{h}_i}$, $\vect{r}_2\DIAG{\vect{h}_i}$, and $\vect{r}_3\DIAG{\vect{h}_i}$. We provide a concrete example with $n=5$, $M=3$, $d=2$ in Figure~\ref{fig:design_example}. The biadjacency matrix is given in~\eqref{eq:biadjacency}.
\begin{equation}\label{eq:biadjacency}
\mat{H} = \left[\begin{array}{ccccc}
0 & 1 & 1 & 0 & 1\\
1 & 0 & 1 & 1 & 1\\
1 & 1& 0 & 1 & 0
\end{array}\right].
\end{equation}

If the query vectors in each bin were used only once, then we would have very few bins passing the summation check and hence few consistent pairs. Instead, we use the first two query vectors repeatedly for $R=\Theta(1)$ times, obtaining two sets of measurements, each of size $ R $ and called \emph{type-I} and \emph{type-II index measurements}. We use the third query vector $V=\Theta(1)$ times to obtain a set of \emph{verification measurements}. We therefore have $2R+V$ measurements associated with each of the $ M $ bins, hence a total of $m=(2R+V)M=\Theta(K)$ measurements, as shown in Figure~\ref{fig:query}.

\begin{figure}[t!]
\centering
\includegraphics[width=0.30\textwidth]{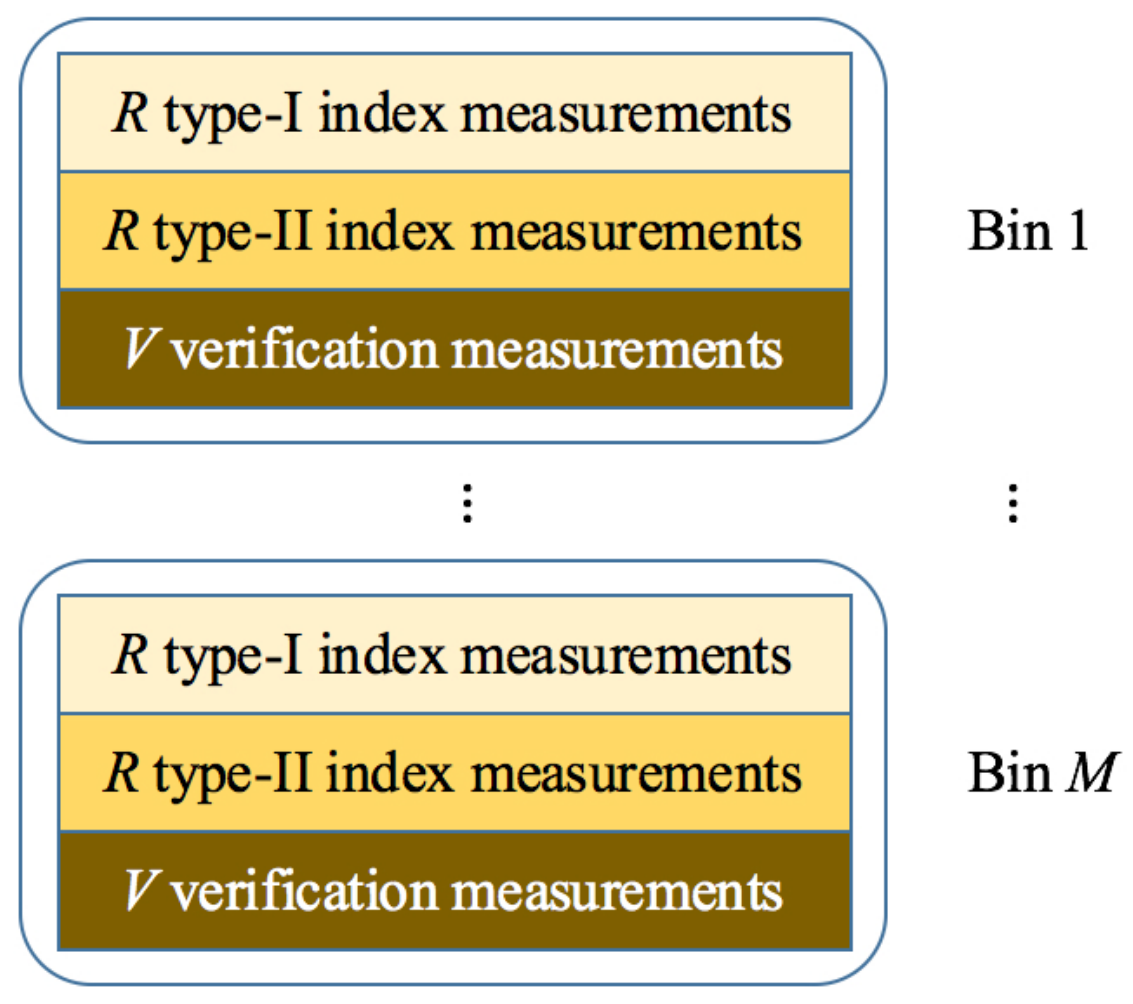}
\caption{ \small{Design of query vectors in noiseless setting. We design $M=\Theta(K)$ bins, and in each bin, we design three query vectors: two for indexing and one for summation check. Each indexing query vector is repeated $R=\Theta(1)$ times, and the verification query vector is repeated $V=\Theta(1)$ times. Thus, the total number of measurements is $ (2R+V)M = \Theta(K)$.} }\label{fig:query}
\end{figure}

\begin{algorithm}[t]
\caption{Mixed-Coloring}\label{alg:mixedcoloring}
\begin{algorithmic} 
\REQUIRE Query vectors $\vect{x}_i$, measurements $y_i$, $i\in[m]$, number of parameter vectors $L$.
\ENSURE Estimates of parameter vectors $\vect{\beta}^{(\ell)}$, $\ell\in[L]$.
\STATE Find consistent pairs via summation check.
\STATE Find singleton balls via ratio test.
\STATE Construct a graph $\mathcal{G}$ with nodes representing all the singleton balls.
\STATE Identify the consistent pairs that contain two singleton balls. Add an edge between the two nodes in $\mathcal{G}$.
\STATE Find $L$ largest connected components in $\mathcal{G}$. Recover a fraction of non-zero elements in each $\vect{\beta}^{(\ell)}$.
\STATE Iteratively find the rest of the non-zero elements in each $\vect{\beta}^{(\ell)}$ via peeling.
\end{algorithmic}
\end{algorithm}

\subsection{Decoding Algorithm}\label{sec:decode}
We provide an outline of the decoding algorithm in Algorithm~\ref{alg:mixedcoloring}. The decoding algorithm first finds consistent pairs (by summation check) in each bin, within which singletons are identified (by the ratio test). The ratio test also recovers the locations and values of several non-zero elements, some of which can then be associated with the same $ \vect{\beta}^{(\ell)} $ by guess-and-check. Using tools from random graph theory, we can separate part of the recovered non-zero elements from different parameter vectors. At this point, for each $ \vect{\beta}^{(\ell)} $, we have recovered some of its non-zero elements (including their locations, values and labels). We iteratively conduct a combined operation of guess-and-check and peeling, so that we can subtract the recovered elements from the remaining consistent pairs, until no more non-zero elements can be found. Below we elaborate on these steps.

\textbf{Finding Consistent Pairs:} 
The decoding procedure starts by finding all the consistent pairs. In each bin, we perform summation checks on all triplets $(y_1,y_2,y_3)$ in which $y_1$, $y_2$, and $y_3$ are the type-I index measurement, type-II index measurement and verification measurement, respectively. If a triplet passes the summation check, then a consistent pair $\{y_1,y_2\}$ is found. Note that in each bin the number of triplets of the above form is a constant, so this step can be done in $\Theta(K)$ time. The subsequent steps of the algorithm are based on the consistent pairs found in this step. We also note that, since for every $\ell$, the probability that each measurement is generated by the parameter vector $\vect{\beta}^{(\ell)}$ is a constant $q_\ell$, the probability that one can find a consistent pair in a particular bin is a constant.

We classify the consistent pairs into a few different types. As we have seen, each consistent pair is only associated with a subset of the non-zero elements of a particular parameter vector, due to the design of the bipartite graph. As before, a consistent pair associated with only one non-zero element is called a \emph{singleton}, and we call this non-zero element a \emph{singleton ball}. The consistent pairs associated with two non-zero elements are called \emph{doubletons}; and those associated with more than one non-zero elements are called \emph{multitons}\footnote{Doubletons are also multitons.}. These terminologies are useful for our following discussions.

\textbf{Finding Singletons:}
Each non-zero element of the parameter vectors can be identified by its label-location-value triplet  $ (\ell, j, \beta^{(\ell)}_j) $. We visualize these triplets (i.e., non-zero elements) as balls, as shown in Figure~\ref{subfig:nonzero}, and initially their labels, locations and values are unknown\footnote{Note that the graph in Figure~\ref{fig:algorithm} differs from the bipartite graph that we use to design the query vectors.}. We run the ratio test on the consistent pairs to identify singletons and their associated singleton balls. The singleton balls found are illustrated in Figure~\ref{subfig:singleton} as shaded balls. The ratio test also recovers the locations and values of these singleton balls, although at this point we do not know the label $ \ell $ of the balls. 

To better understand the algorithm, here we analyze the expected number of singleton balls that belong to parameter vector $\vect{\beta}^{(\ell)}$. We show that a constant fraction of the non-zero elements in $\vect{\beta}^{(\ell)}$ can be found as singleton balls in this stage. First, we analyze the probability $Q_{\ell}$ that a particular bin can produce a consistent pair that is generated by $\vect{\beta}^{(\ell)}$. According to our probabilistic model, the measurements are generated independently, and therefore, we have
$$
Q_{\ell} = [1-(1-q_{\ell})^V][1-(1-q_{\ell})^R]^2.
$$
Denote by $\xi_{k}^{(\ell)}$ the probability of the event that a particular bin produces a consistent pair that is generated by $\vect{\beta}^{(\ell)}$, and is associated with $k$ non-zero elements in $\vect{\beta}^{(\ell)}$. Since each non-zero element is associated with $d$ bins among the $M$ bins independently and uniformly at random, for a consistent pair generated by $\vect{\beta}^{(\ell)}$, the number of non-zero elements associated with this pair is binomial distributed with parameters $K_\ell$ (recall that $K_\ell$ is the number of non-zero elements in $\vect{\beta}^{(\ell)}$) and $\frac{d}{M}$, and we have
$$
\xi_{k}^{(\ell)} = Q_{\ell} {K_{\ell}\choose k} \left(\frac{d}{M}\right)^k \left(1-\frac{d}{M}\right)^{K_{\ell}-k}.
$$
In addition, we can use Poisson distribution to approximate the binomial distribution when $\lambda_{\ell}:=\frac{K_{\ell}d}{M}$ is a constant and $K_{\ell}$ approaches infinity, i.e., we have
$$
\xi_{k}^{(\ell)} \approx Q_{\ell} \frac{\lambda_{\ell}^ke^{-\lambda_{\ell}}}{k!}.
$$
Let us ignore the zero elements in $\vect{\beta}^{(\ell)}$ and consider the bipartite graph representing the association between the $K_\ell$ non-zero elements (left notes) in $\vect{\beta}^{(\ell)}$ and the $M$ bins (right nodes). We know that the total number of edges in this bipartite graph is $K_{\ell}d$, and we denote by $\rho_{k}^{(\ell)}$ the expected fraction of the edges that are connected to a right node (bin) with degree $k$. Thus, we have
$$
\rho_{k}^{(\ell)} = \frac{kM}{K_{\ell}d}\xi_k^{(\ell)}=Q_{\ell}\frac{\lambda_{\ell}^{k-1}e^{-\lambda_{\ell}}}{(k-1)!}.
$$
We then proceed to analyze the expected fraction of the singleton balls. Let $q_s^{(\ell)}$ be the probability that a non-zero element in $\vect{\beta}^{(\ell)}$ becomes a singleton ball in a certain consistent pair. The event is equivalent to the event that at least one of its $d$ associated right nodes (bins) has degree $1$. Then, when $K_\ell$ approaches infinity, we have asymptotically
$$
q_s^{(\ell)} = 1-(1-\rho_1^{(\ell)})^d = \Theta(1).
$$
Thus, the expected number of non-zero elements in $\vect{\beta}^{(\ell)}$ that are found as singleton balls in this stage of the algorithm is $q_s^{(\ell)}K_{\ell} = \Theta(K)$, and this implies that we can recover a \emph{constant} fraction of the non-zero elements in each parameter vector, without knowing their labels. We can further prove high probability bounds for this fraction, and more details of this analysis are relegated to Lemma~\ref{lem:single} in Appendix~\ref{sec:prf_noiseless}.

\textbf{Recovering a Subset of Non-zero Elements:} 
The next step is crucial: for two singleton balls and a consistent pair associated with the locations of these two balls, we run the guess-and-check and peeling operations to detect if these two singleton balls indeed have the same label (or equivalently, the two non-zero elements are in the same parameter vector). If so, we call this consistent pair a \emph{strong doubleton} (i.e., doubletons that contain two singleton balls that we find in the previous stage of the algorithm), and connect these two balls with an edge, as shown in Figure~\ref{subfig:singleton}. Doing so creates a graph over the balls (i.e., non-zero elements), and each connected component of the graph is from a single parameter vector. Since each non-zero element is associated with a constant number of consistent pairs (due to using a $d$-left regular bipartite graph with constant $d$), this step can in fact be done efficiently in $\Theta(K)$ time without enumerating all the combinations of singleton ball pairs. Similar to the analysis of singleton balls, we can analyze the number of strong doubletons. In fact, we can show that with high probability, a \emph{constant} fraction of the consistent pairs are strong doubletons. More details are relegated to Lemma~\ref{lem:edge} in Appendix~\ref{sec:prf_noiseless}.

By carefully choosing the parameters\footnote{To make our main sections concise, here we omit the condition on the design parameters $d$, $M$, $R$, and $V$ in order to form giant components with sizes $\Theta(K)$; the precise statement of this condition is relegated to Lemma~\ref{lem:giant} in Appendix~\ref{sec:prf_noiseless}.} $d$, $M$, $R$, and $V$, and using tools from random graph theory, we can ensure that with high probability the $ L $ \emph{largest} connected components (called \emph{giant components}) correspond to the $ L $ parameter vectors, and each of these components has size $\Theta(K)$. Then, the labels of the balls in these components are identified. This is illustrated in Figure~\ref{subfig:giant} for $ L=2 $, where colors represent the labels. More details of this demixing process are provided in Lemma~\ref{lem:giant} in Appendix~\ref{sec:prf_noiseless}. In summary, at this point we have recovered the labels, locations and values of a \emph{constant} fraction of the non-zero elements (i.e., balls) of each parameter vector.

\textbf{Iterative Decoding:}
The decoding procedure proceeds by identifying the labels of the remaining balls via iteratively applying the guess-and-check and peeling primitives. The connected components in Figure~\ref{subfig:giant} are therefore expanded, until no more changes can be made, as illustrated in Figure~\ref{subfig:colored}.

We provide an example of this iterative  procedure in Figure~\ref{fig:simpleexample2}. Recall that the association between the coordinates of the parameter vectors and the bins (or consistent pairs) is determined by a bipartite graph. Here, we only show one consistent pair for each bin and omit the zero elements. The non-zero elements and the consistent pairs are shown as balls and squares, respectively, as in Figure~\ref{fig:subfigure21}. The steps described in the last part recover a subset of these balls, which are shown in red and blue in Figure~\ref{fig:subfigure22}. For simplicity, let us call the corresponding $\vect{\beta}^{(\ell)}$'s red and blue parameter vectors, respectively. If a consistent pair is generated by the red (blue) parameter vector, we say that this consistent pair is red (blue). Now consider the consistent pair 1, which is associated with the coordinates that the balls $ a  $, $ b $ and $ c $ are located. Although we do \emph{not} know whether this pair is red or blue, we can \emph{guess} that this pair is blue, and peel the blue balls $ a $ and $ b $ off from consistent pair 1. Since this consistent pair is indeed blue, the updated measurements can pass the ratio test, and thus we can recover the label, location and value of the non-zero element represented by blue ball $ c $. Similarly, by guessing that the consistent pair 3 is red and peeling off the recovered red ball $ v $ from the consistent pair $ 3 $, we can recover red ball $ w $, as illustrated in Figure~\ref{fig:subfigure23}. We continue this process iteratively, guessing the color of the consistent pairs and peeling off balls recovered in the previous iterations to recover more balls. For example, we peel off balls $ b $ and $ c $ from the measurement pair $ 2 $ to recover ball $ d $, and ball $ w $ from pair $  4$ to recover ball $ z $, resulting in Figure~\ref{fig:subfigure24}.

\subsection{Choice of Design Parameters}

We have completed the description of the Mixed-Coloring algorithm for the noiseless case. The algorithm involves several design parameters including $ d $, $ R $, $ V $, and $ M $. We need to choose these parameters properly in order to guarantee successful decoding. The precise conditions that these parameters need to satisfy are somewhat technical, and they are relegated to Lemmas~\ref{lem:giant} and~\ref{lem:errorfloor} in Appendix~\ref{sec:prf_noiseless}. More specifically, Lemma~\ref{lem:giant} provides the condition that the $ L $ giant components in Figure~\ref{subfig:giant} correspond to the $ L $ parameter vectors, and each of these components has size $\Theta(K)$; and Lemma~\ref{lem:errorfloor} provides the condition that the iterative decoding process can find an arbitrarily large fraction of the non-zero elements, and the analysis is based on density evolution from modern coding theory~\cite{RU01}. In addition, Lemma~\ref{lem:concentration} in Appendix~\ref{sec:prf_noiseless} provides high probability bound on the recovered fraction of non-zero elements. For concrete settings, the optimal choices of these parameters can actually be computed numerically via the density evolution analysis. In particular, for \emph{any} upper bound $p^* \in (0,1) $ of the error fraction, we can find the proper values of $ d $, $ R $, $ V $, and $ M $ for which the peeling process is guaranteed to proceed successfully and recover all but a fraction $ p^* $ of the non-zero elements. As examples, in Table~\ref{tab:design_para} we list the optimal values of the design parameters for $L=2,3$ or $4$ parameter vectors with equal probability (mixing weights), i.e., $q_\ell = \frac{1}{L}$. The details of these numerical computations are provided in Appendix~\ref{sec:constant}.

\begin{table}[h]
\caption{Design parameters of Mixed-Coloring algorithm}
\begin{center}
  \begin{tabular}{ c | c | c | c | c }
    \hline
    Parameter & Description & $L=2$ & $L=3$ & $L=4$ \\ \hline
    $p^*$  &  error fraction & $5.1\times 10^{-6}$ & $8.8\times 10^{-6}$ & $8.1\times 10^{-6}$ \\ \hline
    $d$ & left degree of bipartite graph & $15$ & $15$ & $13$  \\ \hline 
    $R$ & number of type-I / type-II index measurements in each bin & $3$ & $5$ & $8$ \\ \hline
    $V$ & number of verification measurements in each bin & $3$ & $5$ & $8$ \\  \hline
    $M$ & number of bins & $3.71K$ & $2.52K$ & $1.68K$ \\ \hline
    $m$ & total number of measurements, $m=(2R+V)M$ & $33.39K$ & $37.80K$ & $40.32K$ \\ \hline
  \end{tabular} 
\end{center}
\label{tab:design_para}
\end{table}

\begin{figure}
    \centering
    \begin{subfigure}[b]{0.15\textwidth}
        \includegraphics[width=\textwidth]{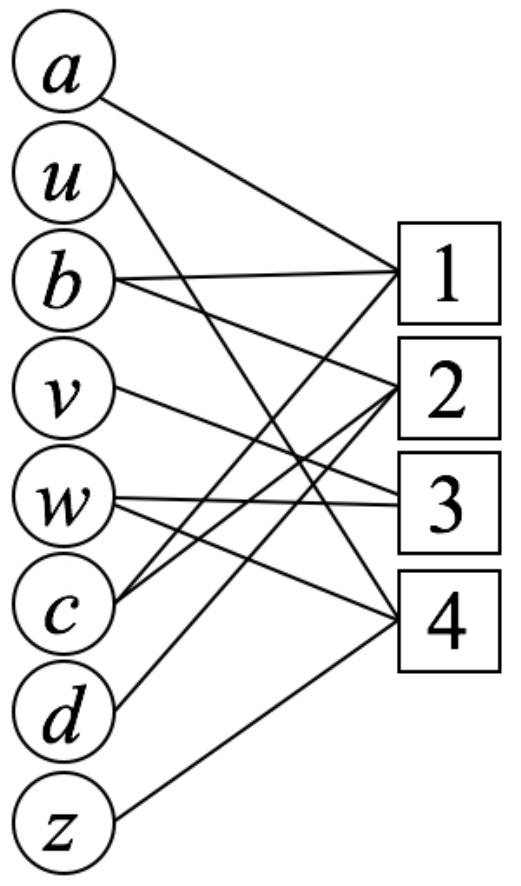}
        \caption{ }
        \label{fig:subfigure21}
    \end{subfigure}
    \quad 
    \begin{subfigure}[b]{0.15\textwidth}
        \includegraphics[width=\textwidth]{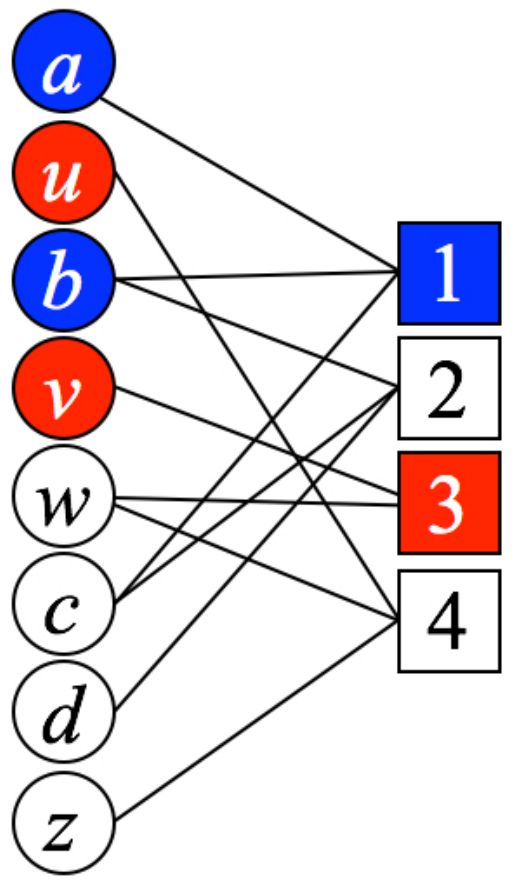}
        \caption{ }
        \label{fig:subfigure22}
    \end{subfigure}
    \quad
    \begin{subfigure}[b]{0.15\textwidth}
        \includegraphics[width=\textwidth]{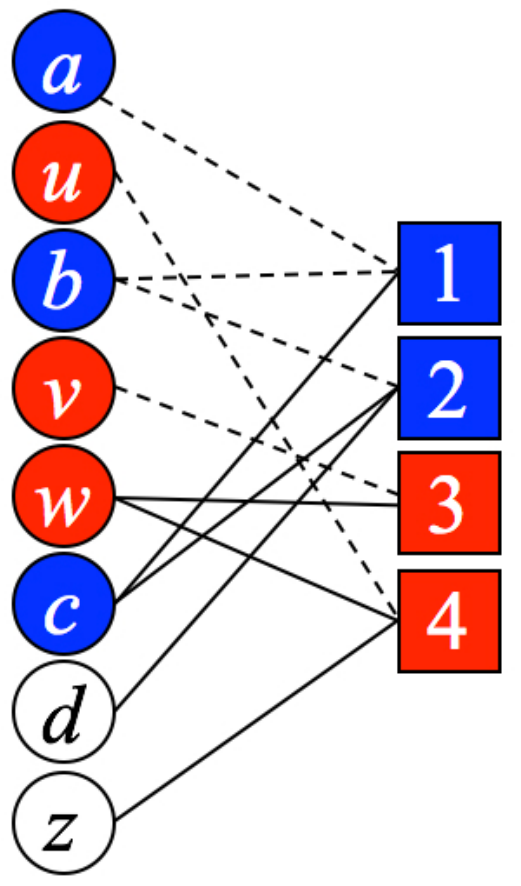}
        \caption{ }
        \label{fig:subfigure23}
    \end{subfigure}
    \quad 
    \begin{subfigure}[b]{0.15\textwidth}
        \includegraphics[width=\textwidth]{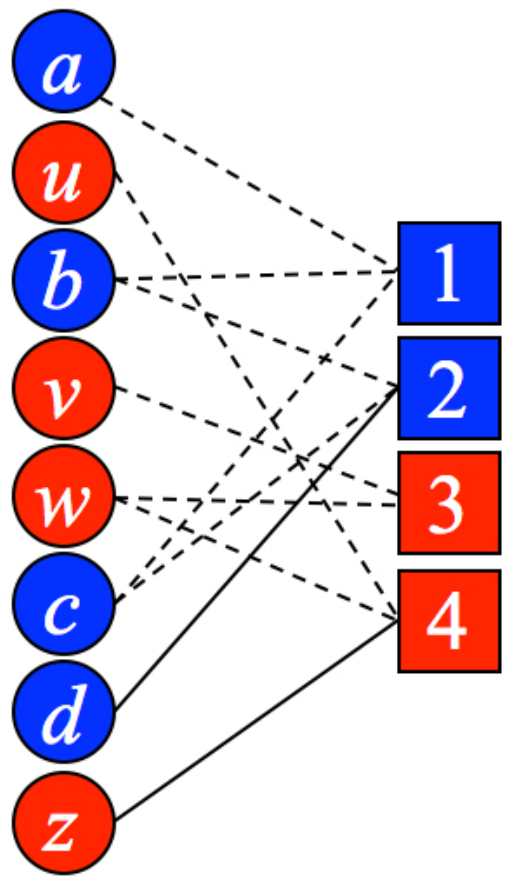}
        \caption{ }
        \label{fig:subfigure24}
    \end{subfigure}
    \caption{\small{Iterative decoding. In each of the bipartite graph, the left nodes (balls) denote the non-zero elements in the two parameter vectors, and the right nodes (squares) denote the consistent pairs. The balls and consistent pairs associated with different parameter vectors are shown in different colors. Here, elements $a$, $b$, $c$, $d$ and consistent pairs $1$, $2$ are associated with $\vect{\beta}^{(1)}$, and elements $u$, $v$, $w$, $z$ and consistent pairs $3$, $4$ are associated with $\vect{\beta}^{(2)}$. If a ball is peeled off, the edges connected to it are shown in dashed lines. (a) The connection between balls and consistent pairs. (b) Using the giant component method, we find balls $a$ and $b$ are in the same color (i.e., in the same parameter vector), and balls $u$ and $v$ are in the other color. (c) Using the guess-and-check approach, we peel $a$ and $b$ from their connected pairs and recover $c$ via ratio rest; similarly, we peel $u$ and $v$ from their connected pairs and recover $w$. (d) Continue the process. Peel $c$ and $w$ from their associated pairs and recover all the non-zero elements.}}\label{fig:simpleexample2}
\end{figure}

\section{Robust Mixed-Coloring Algorithm for Noisy Recovery}\label{sec:noise}
The key idea of Robust Mixed-Coloring algorithm is to turn the noisy problem to a noiseless one. We keep the overall structure of the Robust Mixed-Coloring algorithm the same as its noiseless counterpart. We still use a balls-and-bins model to design the query vectors. In particular, we keep using a $d$-left regular bipartite graph to represent the association between coordinates and bins (sets of measurements), and the algorithm still proceeds as shown in Algorithm~\ref{alg:mixedcoloring}. However, the steps in Mixed-Coloring algorithm that rely on the fact that there is no noise in the measurements should be robustified. In particular, in the presence of noise, the ratio test method for indexing (finding the location and value of non-zero elements) and the summation check for finding consistent measurements (measurements that are generated by the same parameter vector) need to be modified. To this end, we use a new design of query vectors, and employ an EM-based noise reduction scheme to effectively obtain the noiseless measurements of these query vectors. We provide more details of the Robust Mixed-Coloring algorithm in the following.

\subsection{Design of Query Vectors}
We keep the high-level design of query vectors as in the noiseless setting. This means that we still use a $d$-left regular bipartite graph with $n$ left nodes and $M$ right nodes to represent the association between the coordinates and the bins. However, we change the design of query vectors within each bin, and in particular, we design three types of query vectors. The first type, called \emph{binary indexing} vectors, encodes the location information using binary representations with $\lceil \log_2(n) \rceil$ bits (as opposed to using the relative phases in the noiseless case). The second type is called \emph{verification} vectors, which are used to verify the singleton balls (or equivalently, non-zero elements) found by the binary indexing vectors. We robustify the indexing process by replacing the ratio test query vectors with these two types of query vectors. A similar approach is considered in~\cite{yin2015fast} for compressive phase retrieval. The third type of query vectors is used for \emph{consecutive summation check}, which finds \emph{consistent sets} of measurements, and robustifies the summation check step in the noiseless case.

We now provide details of the design. Let $\mat{H}$ denote the biadjacency matrix of the bipartite graph. For a particular bin (we omit the label of the bin for simplicity), let $\vect{h}\in\{0,1\}^n$ denote the association between this bin and the coordinates. We design $P=\Theta(\log^2(n))$ query vectors $\vect{x}_i\in\REAL^{n}$, $i\in[P]$ for this bin as follows:
$$
[\vect{x}_1~\cdots~\vect{x}_P]\TSP = [\mat{B}\TSP~\mat{V}\TSP~\mat{C}\TSP]\TSP\DIAG{\vect{h}}.
$$
where $\mat{B}\in\{0,1\}^{P_1\times n}$, $\mat{V}\in\{1,-1\}^{P_2\times n}$, and $\mat{C}\in\mathbb{Z}^{P_3\times n}$ correspond to the three types of new query vectors, meaning that they are used for binary indexing, verification, and consecutive summation check, respectively. The matrix $\mat{B}$ has $P_1=\lceil \log_2(n) \rceil$ rows, and the $i$-th column of $\mat{B}$ is the binary representation of integer $i-1$. The matrix $\mat{V}$ has $P_2=\Theta(\log(n))$ rows and consists of i.i.d. Rademacher entries, i.e., the entries of $\mat{V}$ are equally likely to be either $1$ or $-1$. The matrix $\mat{C}$ contains $P_3={P_1+P_2\choose 2}$ rows, and the rows of $\mat{C}$ are indexed by pairs $(j,k)$, $1\le j<k\le P_1+P_2$. Let $\mat{D}=[\mat{B}\TSP~\mat{V}\TSP]\TSP$ be a collection of the first two matrices. The row of $\mat{C}$ indexed by $(j,k)$ (denoted by $\vect{c}_{(j,k)}\TSP$) is the summation of the $j$-th and the $k$-th row of $\mat{D}$, i.e., $\vect{c}_{(j,k)}\TSP=\vect{d}_j\TSP+\vect{d}_k\TSP$. Here, we give a simple example with $n=4$, $P_1 = 2$, $P_2=2$, and $P_3 = 6$ in Figure~\ref{fig:example_noisy}.

\begin{figure}[h]
\centering
\includegraphics[width=0.6\textwidth]{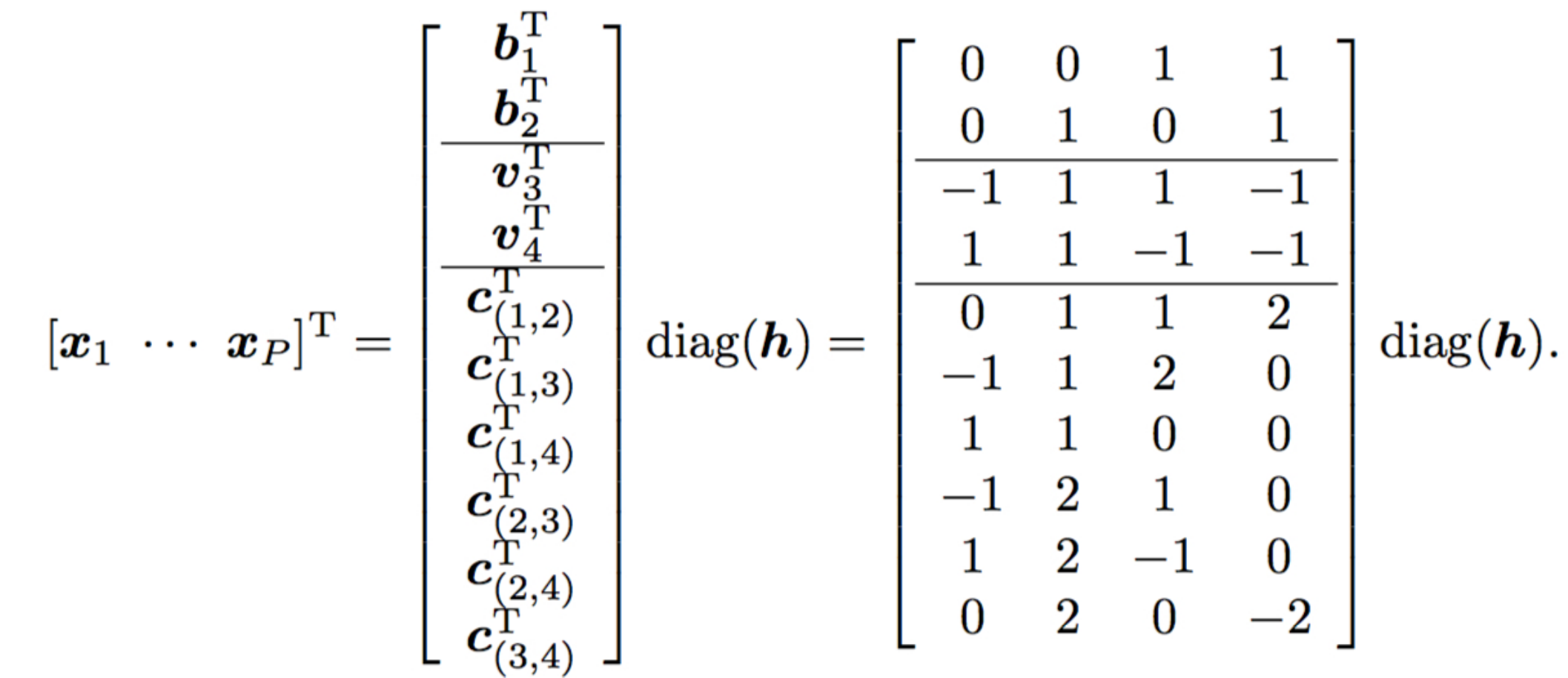}
\caption{\small{Example of query vectors in noisy setting with $n=4$, $P_1 = 2$, $P_2=2$, and $P_3 = 6$. Each row represents a query vector (in most cases we use column vectors, but here we use row vectors for ease of description). Here, $\vect{b}_1\TSP\DIAG{\vect{h}}$ and $\vect{b}_2\TSP\DIAG{\vect{h}}$ are used for binary indexing; $\vect{v}_3\TSP\DIAG{\vect{h}}$ and $\vect{v}_4\TSP\DIAG{\vect{h}}$ are used for verification; and $\vect{c}_{(1,2)}\TSP\DIAG{\vect{h}}, \ldots, \vect{c}_{(3,4)}\TSP\DIAG{\vect{h}}$ are used for consecutive summation check. As we can see, the vectors $\vect{c}_{(j,k)}$ are summations of all the pairs in the first two sets vectors, i.e., $\vect{c}_{(1,2)} = \vect{b}_{1} + \vect{b}_{2}, \vect{c}_{(1,3)} = \vect{b}_{1} + \vect{v}_{3}, \vect{c}_{(1,4)} = \vect{b}_{1} + \vect{v}_{4}, \vect{c}_{(2,3)} = \vect{b}_{2} + \vect{v}_{3}, \vect{c}_{(2,4)} = \vect{b}_{2} + \vect{v}_{4}, \vect{c}_{(3,4)} = \vect{v}_{3} + \vect{v}_{4}$.}}
\label{fig:example_noisy}
\end{figure}

\subsection{Decoding Algorithm}\label{sec:noisy_decoding_alg}
We now describe the decoding part of the Robust Mixed-Coloring algorithm. As mentioned, we use an EM-based noise reduction scheme to find the noiseless measurements of the query vectors, and also conduct robustified summation check and indexing process. Other parts of the algorithm are kept the same. We elaborate the details in the following.

\textbf{Noise Reduction:}
Due to the presence of noise, the first step that we need to take is a noise reduction operation. More specifically, we use each query vector $N=\Theta(\polylog(n))$ times, repeatedly. According to our model, in the presence of Gaussian noise, one can see that if $\vect{x}_i\TSP\vect{\beta}^{(1)} = \vect{x}_i\TSP\vect{\beta}^{(2)}$, the $N$ measurements are i.i.d. Gaussian distributed; otherwise the $N$ measurements are independently distributed as a mixture of two equally weighted Gaussian random variables. Therefore, the problem becomes a standard estimation problem for a one dimensional Gaussian mixture distribution. We propose an EM algorithm with an initialization step using method of moments to estimate the two centers of the mixture. The performance of our proposed EM algorithm can be characterized by Theorem~\ref{thm:em}, proved in Appendix~\ref{sec:em}.
\begin{theorem}\label{thm:em}
Suppose that $\Delta/\sigma \ge \frac{4}{\sqrt{3}}$. Then, by using $N=\Theta(\polylog(n))$ measurements, the proposed EM algorithm, with initialization via method of moments, can recover the exact value of the two centers\footnote{Note that in our problem, the centers take quantized values and the quantization step is known to the decoder, so the estimation can take the exact value of the true centers.} of the mixture of Gaussian distributions with probability at least $1-\BIGO(1/{\rm poly}(n))$.
\end{theorem}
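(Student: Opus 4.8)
\emph{Proof idea.} The plan is to combine a method-of-moments initialization with the population-to-sample analysis of EM for two-component Gaussian mixtures from~\cite{balakrishnan2014statistical}, and to convert an $\tilde{\BIGO}(1/\sqrt{N})$-accurate estimate into an \emph{exact} one by exploiting the quantization in Assumption~\ref{asm:quantize}. Fix a query vector $\vect{x}$ of the robust design and write $\mu_1=\vect{x}\TSP\vect{\beta}^{(1)}$, $\mu_2=\vect{x}\TSP\vect{\beta}^{(2)}$, $c=\tfrac12(\mu_1+\mu_2)$, $\gamma=\tfrac12(\mu_1-\mu_2)$; the $N$ repeated measurements for this vector are i.i.d.\ from the symmetric-about-$c$ mixture $\tfrac12\mathcal{N}(\mu_1,\sigma^2)+\tfrac12\mathcal{N}(\mu_2,\sigma^2)$, which degenerates to a single Gaussian when $\gamma=0$. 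The first point to record is structural: every entry of $\mat{B},\mat{V},\mat{C}$, and hence of $\vect{x}$, is an integer, and by Assumptions~\ref{asm:identifiability} and~\ref{asm:quantize} every coordinate of $\vect{\beta}^{(1)}-\vect{\beta}^{(2)}$ lies in $\Delta\mathbb{Z}$; therefore $2\gamma=\mu_1-\mu_2\in\Delta\mathbb{Z}$ and $\mu_1,\mu_2\in\Delta\mathbb{Z}$, so either $\gamma=0$ or $|\gamma|\ge\Delta/2$. In the latter case $|\gamma|/\sigma\ge\Delta/(2\sigma)\ge\tfrac{2}{\sqrt{3}}$, i.e.\ the half-separation-to-noise ratio is bounded below by a fixed constant, and it suffices to locate $\mu_1,\mu_2$ within accuracy $<\Delta/2$ and round to $\Delta\mathbb{Z}$. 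Finally, since the number of nonzero coordinates of $\vect{\beta}^{(\ell)}$ assigned to a given bin is $\BIGO(\polylog(n))$ with high probability, we have $|\mu_\ell|,|\gamma|=\BIGO(\polylog(n))$; this is what forces $N$ to be polylogarithmic rather than logarithmic in $n$.

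I would first analyze the moment-based initialization. Using the empirical mean and second central moment (and the fourth central moment, if $\sigma$ is not assumed known), solve the population moment identities $\EXPL{Y}=c$, $\mathrm{Var}(Y)=\sigma^2+\gamma^2$ (and $\EXPL{(Y-c)^4}=3\sigma^4+6\sigma^2\gamma^2+\gamma^4$) to obtain $(\widehat{c}^{\,0},\widehat{\gamma}^{\,0})$. Since $Y-c$ is sub-Gaussian with parameter $\BIGO(\sigma+|\gamma|)=\BIGO(\polylog(n))$, Bernstein- and truncation-type bounds for the first few empirical moments give $|\widehat{c}^{\,0}-c|,\,|(\widehat{\gamma}^{\,0})^2-\gamma^2|=\tilde{\BIGO}(1/\sqrt{N})$ with probability $1-\BIGO(1/\poly(n))$, provided $N=\Theta(\polylog(n))$ is chosen large enough. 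Thresholding $\widehat{\gamma}^{\,0}$ at $\Delta/4$ then correctly decides, with high probability, whether $\gamma=0$: in the degenerate branch we output $\widehat{\mu}_1=\widehat{\mu}_2=\mathrm{round}_{\Delta\mathbb{Z}}(\widehat{c}^{\,0})$ and stop; in the non-degenerate branch $|\gamma|\ge\Delta/2=\Theta(1)$, so for $n$ large the initializer $\widehat{\theta}^{\,0}=(\widehat{\mu}_1^{\,0},\widehat{\mu}_2^{\,0})$ lies inside a ball of radius at most $c_0|\gamma|$ around $\theta^\star=(\mu_1,\mu_2)$ for any prescribed $c_0>0$, i.e.\ inside the basin of attraction demanded by the EM analysis.

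Next I would invoke the framework of~\cite{balakrishnan2014statistical}. After centering the data by $\widehat{c}^{\,0}$, which reduces the problem to the symmetric two-component model up to an $\tilde{\BIGO}(1/\sqrt{N})$ bias, the hypothesis $\Delta/\sigma\ge\tfrac{4}{\sqrt{3}}$ puts the half-separation-to-noise ratio above the threshold for which the population EM operator $M$ is a strict contraction on that ball, $\OPNL{M(\theta)-\theta^\star}\le\kappa\OPNL{\theta-\theta^\star}$ with a fixed $\kappa<1$; the one-dimensional computation behind this is exactly what yields the constant $4/\sqrt{3}$. A uniform deviation bound $\sup_{\theta}\OPNL{M_N(\theta)-M(\theta)}\le\varepsilon_N$ with $\varepsilon_N=\tilde{\BIGO}(1/\sqrt{N})$, also available from that framework and especially simple in dimension one, then gives by induction $\OPNL{\theta^t-\theta^\star}\le\kappa^t\OPNL{\widehat{\theta}^{\,0}-\theta^\star}+\varepsilon_N/(1-\kappa)$, so after $T=\Theta(\log N)$ iterations $\OPNL{\theta^T-\theta^\star}\le 2\varepsilon_N/(1-\kappa)=\tilde{\BIGO}(1/\sqrt{N})$. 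Taking $N=\Theta(\polylog(n))$ large enough that $2\varepsilon_N/(1-\kappa)<\Delta/4$ and that all the events above hold with probability $1-\BIGO(1/\poly(n))$, I round each coordinate of $\theta^T$ to $\Delta\mathbb{Z}$ to recover $\mu_1,\mu_2$ exactly; together with the degenerate branch this proves the statement for a single query vector (the union bound over all query vectors used by the algorithm, needed for Theorem~\ref{thm:main_noisy}, is then routine).

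The hardest part will be the EM step: certifying that the specific threshold $\Delta/\sigma\ge 4/\sqrt{3}$ produces a contraction coefficient $\kappa$ bounded strictly below $1$, and establishing the \emph{uniform} sample-to-population deviation bound for the EM update over the whole initialization ball (not merely at $\theta^\star$) --- this is the technical core of the population-to-sample method. The mild asymmetry of the mixture (the results in~\cite{balakrishnan2014statistical} are stated for the symmetric case) is a secondary nuisance that is absorbed into the $\tilde{\BIGO}(1/\sqrt{N})$ error; the moment-concentration and rounding arguments are otherwise routine.
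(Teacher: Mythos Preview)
Your approach is essentially the paper's: moment-based initialization, threshold at $\Delta/4$ to detect the degenerate case, EM with the contraction guarantee of~\cite{balakrishnan2014statistical}, and final rounding to the $\Delta$-grid. The one difference worth flagging is that the paper exploits quantization \emph{earlier}: since $c=\tfrac12(\mu_1+\mu_2)$ itself lies on the known half-grid $\mathbb{D}^+=\{\tfrac{k}{2}\Delta\}$, the paper rounds the sample mean to $\mathbb{D}^+$ first and, with high probability, recovers $c$ \emph{exactly}; the remaining data are then centered exactly, so EM is run on the truly symmetric mixture $\tfrac12\mathcal{N}(\theta_*,\sigma^2)+\tfrac12\mathcal{N}(-\theta_*,\sigma^2)$ and the result of~\cite{balakrishnan2014statistical} applies verbatim. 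This removes at a stroke the ``asymmetry nuisance'' you identify as a secondary difficulty. Your variant (center approximately and absorb the $\tilde{\BIGO}(1/\sqrt{N})$ bias into the EM error) also goes through, but the early-rounding shortcut is the cleaner route. The paper additionally splits the $N$ samples into three disjoint batches---one for $\hat c$, one for the moment-based $\theta_0$, one for the EM iterations---to keep the conditioning transparent; otherwise the arguments coincide.
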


In addition, we can see that since each query vector in each bin is repeated $N=\Theta(\polylog(n))$ times, and there is $P=\Theta(\log^2(n))$ query vectors in each bin, the total number of measurements we get for each bin is $NP=\Theta(\polylog(n))$. Since there are $\Theta(K)$ bins, the total number of measurements of the Robust Mixed-Coloring algorithm is $\Theta(K\polylog(n))$.

\textbf{Consecutive Summation Check:}
After the noise reduction operations, for each query vector $\vect{x}_i$, $i\in[P]$, we get at most two ``centers'' $\{y_{i,1},y_{i,2}\}$ (called \emph{denoised measurements}), which correspond to the inner product of the query vector and the parameter vectors in the noiseless case. However, we do not know the correspondence between the denoised measurements and the two parameter vectors. This means that we can have either $(y_{i,1},y_{i,2})=(\vect{x}_i\TSP\vect{\beta}^{(1)}, \vect{x}_i\TSP\vect{\beta}^{(2)})$ or $(y_{i,1},y_{i,2})=(\vect{x}_i\TSP\vect{\beta}^{(2)}, \vect{x}_i\TSP\vect{\beta}^{(1)})$. Therefore, we need to use the consecutive summation check method to find the denoised measurements which are generated by the same parameter vector.

We illustrate the consecutive summation check process using a simple example in Figure~\ref{fig:summation}. Assume that we have three query vectors $\vect{x}_1$, $\vect{x}_2$, $\vect{x}_3$, and two summation check query vectors $\vect{x}_1+\vect{x}_2$ and $\vect{x}_2+\vect{x}_3$. Suppose that the denoised measurements that we get for $\vect{x}_i$, $i=1,2,3$ are $(y_{1,1},y_{1,2})=(1,5)$, $(y_{2,1},y_{2,2})=(2,4)$, and $(y_{3,1},y_{3,2})=(2,3)$, and the denoised measurements for the summation check query vectors are $(y_{(1,2),1}, y_{(1,2), 2})=(5,7)$ and $(y_{(2,3),1}, y_{(2,3),2})=(5,6)$. By \emph{matching} summations, one can easily find that the only possible case that we can observe these denoised measurements is that $(y_{1,1}, y_{2,2}, y_{3,1})$ and $(y_{1,2}, y_{2,1}, y_{3,2})$ are generated by the same parameter vector (we call them \emph{consistent sets}), respectively, as shown in different colors in Figure~\ref{fig:summation}. In our algorithm, we need to conduct consecutive summation check on all the denoised indexing and verification measurements, using the denoised summation check measurements. We also mention that the reason that we need summations of all the ${P_1+P_2\choose 2}$ pairs of the first $P_1+P_2$ query vectors is that we might have the two denoised measurements taking the same value, i.e., $\vect{x}_i\TSP\vect{\beta}^{(1)} = \vect{x}_i\TSP\vect{\beta}^{(2)}$, and then we have to conduct summation check on two query vectors which are not adjacent. We provide the precise procedures of consecutive summation check in Algorithm~\ref{alg:consecutive}.

\begin{figure}[h]
\centering
\includegraphics[width=0.4\textwidth]{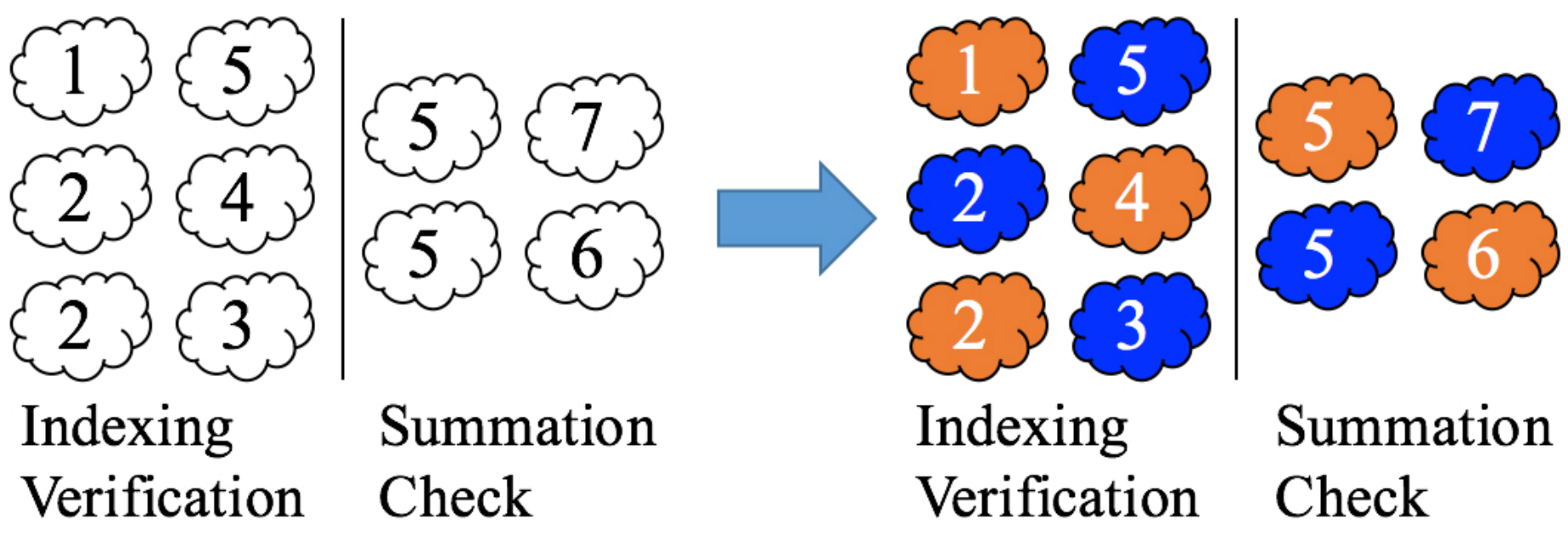}
\caption{\small{Consecutive summation check. In this example, we have three indexing and verification query vectors $\vect{x}_1$, $\vect{x}_2$, $\vect{x}_3$, and two summation check query vectors $\vect{x}_1+\vect{x}_2$ and $\vect{x}_2+\vect{x}_3$. The numbers in the center of each ``cloud'' are the denoised measurements. Here, the denoised measurements that we get for $\vect{x}_i$, $i=1,2,3$ are $(1,5)$, $(2,4)$, and $(2,3)$, respectively, and the denoised measurements for $\vect{x}_1+\vect{x}_2$ and $\vect{x}_2+\vect{x}_3$ are $(5,7)$ and $(5,6)$, respectively. By matching summations, one can easily find the denoised measurements that are generated by the same parameter vector, i.e., consistent sets. In the figure on the right, the numbers shown with the same color are generated by the same parameter vector.}}
\label{fig:summation}
\end{figure}

\begin{algorithm}[t]
\caption{Consecutive Summation Check}\label{alg:consecutive}
\begin{algorithmic} 
\REQUIRE Denoised binary indexing and verification measurements $y_{i,\ell}$, $i=1,\ldots,P_1+P_2$, $\ell=1,2$. \\
\STATE Denoised summation check measurements $y_{(j,k),\ell}$, $1\le j < k \le P_1+P_2$, $\ell=1,2$.
\ENSURE Consistent denoised indexing and verification measurements $y_{i,\ell}$, $i=1,\ldots,P_1+P_2$, $\ell=1,2$.
\STATE $t\leftarrow 0$
\WHILE{$t < P_1+P_2$}
\STATE $s\leftarrow \arg\min\{i>t:y_{i,1}\neq y_{i,2}\}$
\IF{$(y_{t,1}+y_{s,2}, y_{t,2}+y_{s,1}) = (y_{(t,s), 1}, y_{(t,s), 2})$ or $(y_{t,1}+y_{s,2}, y_{t,2}+y_{s,1}) = (y_{(t,s), 2}, y_{(t,s), 1})$}
\STATE{swap $y_{s,1}$ and $y_{s,2}$}
\ENDIF
\STATE $t\leftarrow s$
\ENDWHILE
\end{algorithmic}
\end{algorithm}

\textbf{Indexing:}
We conduct indexing process on the consistent sets. The purpose of the indexing process is to check whether there is a single non-zero element associated with a set of consistent measurements (i.e., whether these measurements form a singleton), and find the location and value of the non-zero element. Recall that after the swapping procedures in Algorithm~\ref{alg:consecutive}, we obtain two consistent sets of denoised measurements $(y_{1,1},\ldots,y_{P_1+P_2,1})$ and $(y_{1,2},\ldots,y_{P_1+P_2,2})$. Without loss of generality, we omit the second subscript and use $(y_{1},\ldots,y_{P_1+P_2})$ to denote one of the consistent set of denoised measurements.

We check the first $P_1$ denoised measurements, which correspond to the binary indexing query vectors. We can see that for the consistent set to be a singleton, it is necessary that all the non-zero denoised binary indexing measurements take the same value in $\mathbb{D}$, say $a\Delta$.
The only possible location index $j$ of the non-zero element satisfies the fact that integer $j-1$ has binary representation $\{\frac{1}{a\Delta}y_{i}\}_{i=1}^{P_1}\in\{0,1\}^{P_1}$. For instance, in the example in Figure~\ref{fig:example_noisy}, suppose that we find a consistent set of measurements generated by $\vect{\beta}^{(1)}$, and the quantization step size $\Delta=1$, i.e., the non-zero elements take integer values. Assume that we observe the denoised consistent measurements $(\vect{b}_1\TSP\DIAG{\vect{h}}\vect{\beta}^{(1)}, \vect{b}_2\TSP\DIAG{\vect{h}}\vect{\beta}^{(1)}) = (2,2)$. Then, it is possible that this is a singleton, and $\beta_4^{(1)} = 2$ is the only non-zero element associated with these consistent measurements.

However, the procedure above is not enough to guarantee that the consistent measurements form a singleton. We continue the example in Figure~\ref{fig:example_noisy}. Suppose that the bipartite graph gives us $\vect{h} = [0~1~1~1]\TSP$. Then, when we observe measurements $(\vect{b}_1\TSP\DIAG{\vect{h}}\vect{\beta}^{(1)}, \vect{b}_2\TSP\DIAG{\vect{h}}\vect{\beta}^{(1)}) = (2,2)$, we can have either $\DIAG{\vect{h}}\vect{\beta}^{(1)} = [0~0~0~2]\TSP$ or $\DIAG{\vect{h}}\vect{\beta}^{(1)} = [0~2~2~0]\TSP$; and in the latter case, this set of measurements does not form a singleton any more. To verify that this consistent set is a singleton, we need to use the next $P_2$ denoised verification measurements. Recall that for the verification query vectors, we design a Rademacher matrix $\mat{V}\in\{-1,1\}^{P_2\times n}$ with elements $V_{i,j}$, $i\in[P_2],j\in[n]$, and use the rows of $\mat{V}\DIAG{\vect{h}}$ as query vectors. Here, we make the following claim on the singleton verification procedure.
\begin{lemma}\label{lem:verification}
Suppose that all the denoised binary indexing measurements $y_{1},\ldots, y_{P_1}$ take value in $\{0, a\Delta\}$, and the sequence $\{\frac{1}{a\Delta}y_{i}\}_{i=1}^{P_1}$ form the binary representation of integer $j-1$. Then, if the verification measurements satisfy $y_{i}=a\Delta V_{i-P_1,j}$ for all $i=P_1+1,\ldots,P_1+P_2$, and $P_2=\Theta(\log(n))$, with probability $1-\BIGO(1/\poly(n))$, this consistent set is indeed a singleton with the non-zero element located at the $j$-th coordinate and taking value $a\Delta$.
\end{lemma}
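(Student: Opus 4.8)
The plan is to show that a consistent set which is actually a multiton (associated with two or more non-zero elements) can survive \emph{both} the binary indexing test and the Rademacher verification test only with probability $\BIGO(1/\poly(n))$; since a genuine singleton always passes both tests deterministically, it then follows that any set that passes is a singleton with high probability, and the binary measurements pin down its coordinate and value. Fix a bin with association vector $\vect{h}\in\{0,1\}^n$, and for the parameter vector $\vect{\beta}^{(\ell)}$ generating the consistent set in question let $S=\SUPP{\vect{\beta}^{(\ell)}}\cap\{i:h_i=1\}$; assuming (as justified below) that the EM denoising is exact, the denoised measurements of this set are $y_i=\sum_{j'\in S}B_{i,j'}\beta^{(\ell)}_{j'}$ for the $P_1$ binary indexing vectors and $y_{P_1+i}=\sum_{j'\in S}V_{i,j'}\beta^{(\ell)}_{j'}$ for the $P_2$ verification vectors. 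If $|S|\le 1$ the set is genuinely a singleton (or empty) and there is nothing to prove, so assume $|S|\ge 2$. Conditioning on the hypothesis of the lemma, $y_1,\ldots,y_{P_1}\in\{0,a\Delta\}$ and $(y_1/(a\Delta),\ldots,y_{P_1}/(a\Delta))$ is the binary representation of $j-1$; because $\mat{B}$ is a fixed matrix, this event depends only on $\vect{\beta}^{(\ell)}$ and $\vect{h}$, so all remaining randomness is in $\mat{V}$. It suffices to bound $\mathbb{P}[\mathcal{E}]$, where $\mathcal{E}$ is the event that $y_{P_1+i}=a\Delta V_{i,j}$ for every $i\in[P_2]$.

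I would bound $\mathbb{P}[\mathcal{E}]$ one verification row at a time, using independence of the rows of $\mat{V}$ and a one-variable Rademacher anti-concentration argument. Fix $i\in[P_2]$ and write $y_{P_1+i}-a\Delta V_{i,j}=\sum_{j'\in S\setminus\{j\}}V_{i,j'}\beta^{(\ell)}_{j'}+(\beta^{(\ell)}_j-a\Delta)V_{i,j}$, with the convention $\beta^{(\ell)}_j=0$ when $j\notin S$. If $j\notin S$ or $\beta^{(\ell)}_j\neq a\Delta$, the coefficient of $V_{i,j}$ is nonzero (here we use $a\Delta\neq 0$), so conditioning on $\{V_{i,j'}:j'\in S\setminus\{j\}\}$ forces $V_{i,j}$ to a single value if row $i$ is to be consistent with $\mathcal{E}$; if $j\in S$ and $\beta^{(\ell)}_j=a\Delta$, consistency on row $i$ requires $\sum_{j'\in S\setminus\{j\}}V_{i,j'}\beta^{(\ell)}_{j'}=0$, and since $|S\setminus\{j\}|\ge 1$ we pick some $j_0\in S\setminus\{j\}$ and condition on $\{V_{i,j'}:j'\in S\setminus\{j,j_0\}\}$ to force $V_{i,j_0}$ to a single value. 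In every case the conditional, hence the unconditional, probability that row $i$ is consistent with $\mathcal{E}$ is at most $1/2$; by independence of the rows of $\mat{V}$, $\mathbb{P}[\mathcal{E}]\le 2^{-P_2}$.

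Choosing $P_2=c\log_2 n$ with $c$ a sufficiently large constant gives that any fixed non-singleton consistent set passes both tests with probability at most $n^{-c}$. A union bound over the $M=\Theta(K)=\BIGO(n)$ bins and the (at most) two consistent sets per bin shows that, with probability $1-\BIGO(n^{1-c})=1-\BIGO(1/\poly(n))$, every consistent set that passes the verification is a genuine singleton, whose unique non-zero coordinate equals the index $j$ recovered from the binary measurements and whose value is $a\Delta$. Finally, to discharge the assumption of exact denoising: by Theorem~\ref{thm:em}, each query vector is denoised exactly with probability $1-\BIGO(1/\poly(n))$, so a union bound over the $\Theta(K\polylog(n))$ query vectors used by the algorithm shows all denoised measurements equal the true noiseless inner products with probability $1-\BIGO(1/\poly(n))$; we intersect this event with the one above.

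The part requiring the most care is the bookkeeping of the randomness. The binary-indexing output, and therefore the candidate coordinate $j$, is a deterministic function of $\vect{\beta}^{(\ell)}$ and $\vect{h}$, so the whole argument must extract its probability from $\mat{V}$ only; the case split on whether $j\in S$ and whether $\beta^{(\ell)}_j=a\Delta$ is precisely what ensures that each verification row still contains one ``fresh'' Rademacher entry whose value $\mathcal{E}$ would have to match exactly, which is what makes the per-row bound $1/2$ legitimate. Once this is in place, the constant $c$ is dictated simply by requiring $n^{1-c}=\BIGO(1/\poly(n))$ with enough slack for the union bounds, and the rest is routine.
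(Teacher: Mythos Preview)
Your proof is correct, but the route differs from the paper's. The paper restates the lemma as: if $\tilde{\vect{\beta}}:=\DIAG{\vect{h}}\vect{\beta}^{(\ell)}\neq a\Delta\vect{e}_j$, then $\PROL{\mat{V}\tilde{\vect{\beta}}=a\Delta\vect{v}_j}$ is small, and bounds this in one shot by the Johnson--Lindenstrauss/Rademacher concentration inequality
\[
\PRO{\ABS{\frac{1}{P_2}\TWONL{\mat{V}(\tilde{\vect{\beta}}-a\Delta\vect{e}_j)}^2-\TWONL{\tilde{\vect{\beta}}-a\Delta\vect{e}_j}^2}\ge\tfrac12\TWONL{\tilde{\vect{\beta}}-a\Delta\vect{e}_j}^2}\le 2e^{-P_2/24},
\]
which of course contains the event $\mat{V}(\tilde{\vect{\beta}}-a\Delta\vect{e}_j)=\vect{0}$. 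Your argument instead works row by row: for each verification row you isolate one Rademacher coordinate whose coefficient in the linear form $\sum_{j'\in S}V_{i,j'}\beta^{(\ell)}_{j'}-a\Delta V_{i,j}$ is nonzero (either $V_{i,j}$ itself when $\beta^{(\ell)}_j\neq a\Delta$, or some $V_{i,j_0}$ with $j_0\in S\setminus\{j\}$ otherwise), condition on the rest, and get a per-row bound of $1/2$; independence across rows then gives $2^{-P_2}$. This is more elementary and in fact yields a better exponent than the JL-based bound, at the cost of being somewhat more bookkeeping-heavy in the case split. Your observation that $j$ is a deterministic function of $\tilde{\vect{\beta}}$ (since $\mat{B}$ is fixed) and hence all the remaining probability comes from $\mat{V}$ is exactly the point that makes either argument go through. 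The union bound over bins and the intersection with the exact-denoising event are not strictly part of this lemma---the paper defers those to the proof of Theorem~\ref{thm:main_noisy}---but including them here is harmless.
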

This result is a corollary of the Johnson-Lindenstrauss Lemma~\cite{johnson1984extensions}, and we provide the proof in Appendix~\ref{sec:prf_verification}. If the denoised measurements pass the verification in Lemma~\ref{lem:verification}, we know that with high probability, the consistent set of measurement is indeed a singleton, and we also obtain the location and value of the non-zero element. We provide the precise procedures of the indexing algorithm in Algorithm~\ref{alg:indexing}.

\begin{algorithm}[t]
\caption{Indexing Algorithm in Noisy Setting}\label{alg:indexing}
\begin{algorithmic} 
\REQUIRE Denoised consistent binary indexing measurements $y_i$, $i=1,\ldots,P_1$.
\STATE Denoised consistent verification measurements $y_{i,\ell}$, $i=P_1+1,\ldots,P_1+P_2$, $\ell=1,2$.
\STATE Rademacher matrix $\mat{V}\in\{-1,1\}^{P_2\times n}$ for verification, quantized set $\mathbb{D}$.
\ENSURE Singleton/non-singleton, singleton location $j$, singleton non-zero value $\beta_j$.
\IF{ $\forall~i\in[P_1]$, $y_i\in\{0,a\Delta\}$ for some $a\Delta\in\mathbb{D}$}
\STATE $j\leftarrow 1+ \sum_{i=1}^{P_1} 2^{P_1-i} \frac{y_i}{a\Delta}$
\IF{ $\forall~i = P_1+1,\ldots,P_1+P_2$, $y_{i}=a\Delta V_{i-P_1,j}$}
\STATE return singleton, $j$, $\beta_j = a\Delta$ 
\ELSE
\STATE return non-singleton
\ENDIF
\ELSE
\STATE return non-singleton
\ENDIF
\end{algorithmic}
\end{algorithm}

So far, we have demonstrated how we robustify the summation check and indexing process. Once these two parts are robustified, other parts of the algorithm, such as finding giant components, guess-and-check, and peeling-style iterative decoding can proceed as in the noiseless case. We relegate the analysis of Robust Mixed-Coloring algorithm to Appendix~\ref{sec:prf_noisy}. Again, there are a few design parameters in the Robust Mixed-Coloring algorithm, and we summarize the choices of these parameters in Table~\ref{tab:design_para_noisy} for a particular target error fraction $p^*$.

Finally, we point out that extending the Robust Mixed-Coloring algorithm to cases where $L>2$ is an important future direction. Although the summation check technique does not provably work in the noisy setting when $L>2$, we believe that using similar but more sophisticated design, we may still be able to obtain consistent sets of measurements.

\begin{table}[h]
\caption{Design parameters of Robust Mixed-Coloring algorithm}
\begin{center}
  \begin{tabular}{ c | c | c  }
    \hline
    Parameter & Description & Choice  \\ \hline
    $p^*$  &  error fraction & $5.1\times 10^{-6}$ \\ \hline
    $d$ & left degree of bipartite graph & $15$  \\ \hline 
    $N$ & number of repetitions of each query vector & $\Theta(\polylog(n))$ \\ \hline
    $P_1$ & number of binary indexing vectors in each bin & $\lceil \log_2(n) \rceil$  \\  \hline
    $P_2$ & number of verification vectors in each bin & $\Theta(\log(n))$ \\ \hline
    $P_3$ & number of summation check vectors in each bin & ${P_1 + P_2 \choose 2} = \Theta(\log^2(n))$  \\ \hline
    $M$ & number of bins & $3.71K$  \\ \hline
    $m$ & total number of measurements & $m=KN(P_1+P_2+P_3) = \Theta(K\polylog(n))$  \\ \hline
  \end{tabular} 
\end{center}
\label{tab:design_para_noisy}
\end{table}

\section{Experimental Results}\label{sec:experiments}
In this section, we test the sample and time complexities of the Mixed-Coloring algorithm in both noiseless and noisy cases to verify our theoretical results. All simulations are done on a laptop with 2.8 GHz Intel Core i7 CPU and 16 GB memory using Python.

We first investigate the sample complexity of Mixed-Coloring algorithm in the noiseless case. The goal of this experiment is to show that in numerical experiments, the number of measurements that we need to successfully recover the parameter vectors \emph{matches} the predictions of the density evolution analysis. We use the optimal parameters $(d,R,V)$ from numerical calculations of the density evolution, presented in Table~\ref{tab:design_para}. We generate instances with different number of measurements $m$ by choosing different number of bins $M$. Recall that $m = (2R+V)M$, and thus varying the number of bins is equivalent to varying the total number of measurements. The parameter vectors that we use have equal sparsity, i.e., $K_\ell = \frac{1}{L}K$, and the mixing weights are equal for all the parameter vectors, i.e., $q_\ell = \frac{1}{L}$. The supports of the parameter vectors are chosen uniformly at random, and the values of the non-zero elements are generated from Gaussian distribution. We choose a few pairs of $L$ and $K$, increase the total number of measurements, and record the empirical success probability and running time averaged over $100 $ trials. Here, we use a sufficiently small $p^*$ so that the success event is equivalent to recovery of \emph{all} the non-zero elements. The results are shown in Figure \ref{subfig:noiseless_sample}. The phase transition occurs at some $ C=m/K $ that matches the values in Table~\ref{tab:thmconst}, predicted by our theory. More specifically, when $L=2$, $L=3$, and $L=4$, we need about $33K$, $38K$, and $40K$ measurements for successful recovery, respectively.

We also test the time complexity of our algorithm in the noiseless case. We use the design parameters that can guarantee successful recovery, as we find in the experiment on sample complexity. More specifically, for $L=2$, we choose $(d, R, V) = (15, 3, 3)$, and $m=34.2K$ (i.e., $M=3.8K$); for $L = 3$, we choose $(d, R, V)=(15, 5, 5)$, and $m=39K$ (i.e., $M=2.6K$); and for $ L=4 $, we choose $(d,R,V)=(13, 8, 8)$, and $m=43.2K$ (i.e., $M=1.8K$).  As shown in Figure~\ref{subfig:noiseless_time}, the running time is linear in $K$ and does not depend on $n$.

\begin{figure}[h]
    \centering
    \begin{subfigure}[b]{0.48\textwidth}
        \includegraphics[width=\textwidth]{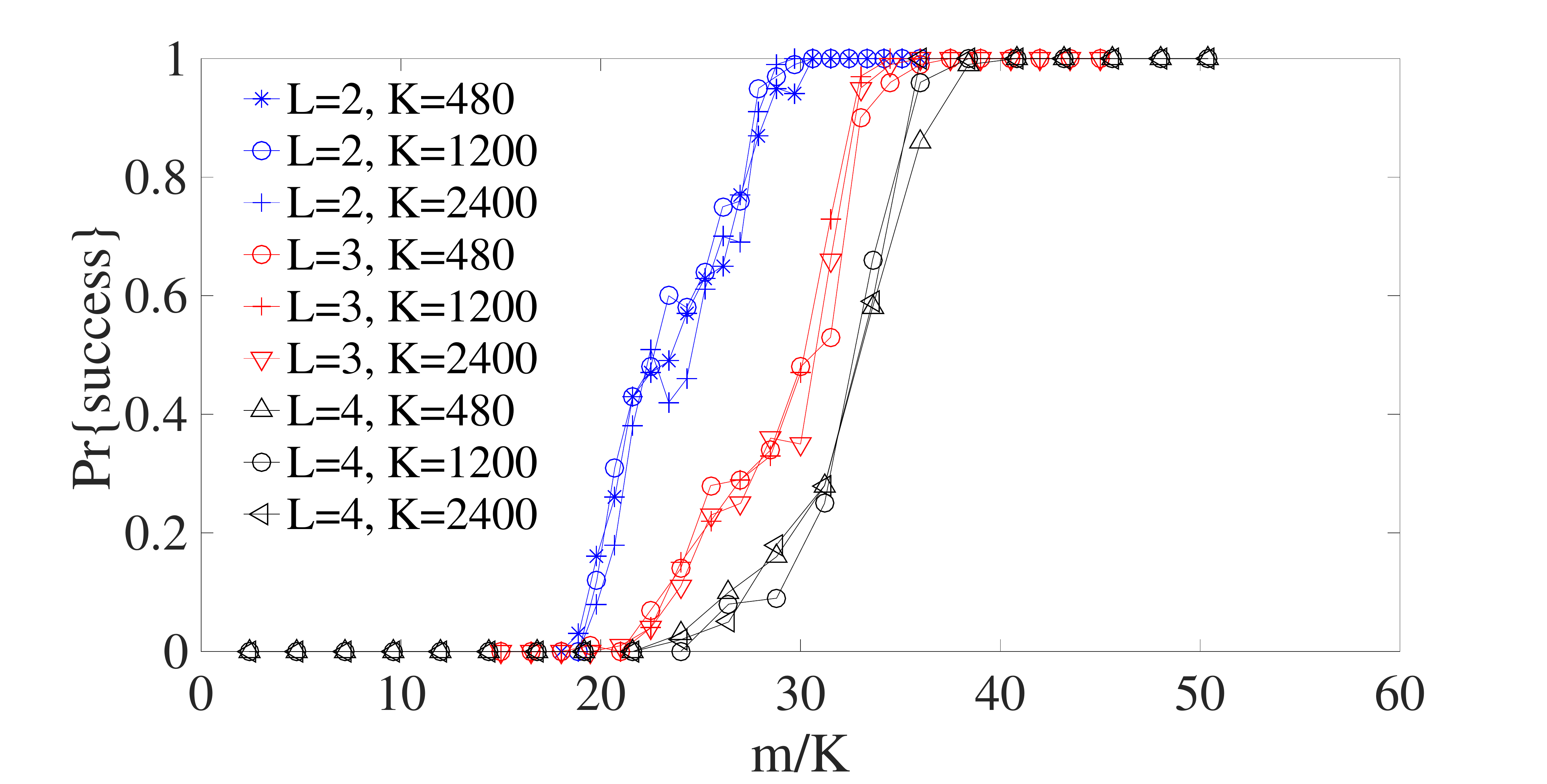}
        \caption{Probability of success $(n=10^5)$}
        \label{subfig:noiseless_sample}
    \end{subfigure}
    \quad 
    \begin{subfigure}[b]{0.48\textwidth}
        \includegraphics[width=\textwidth]{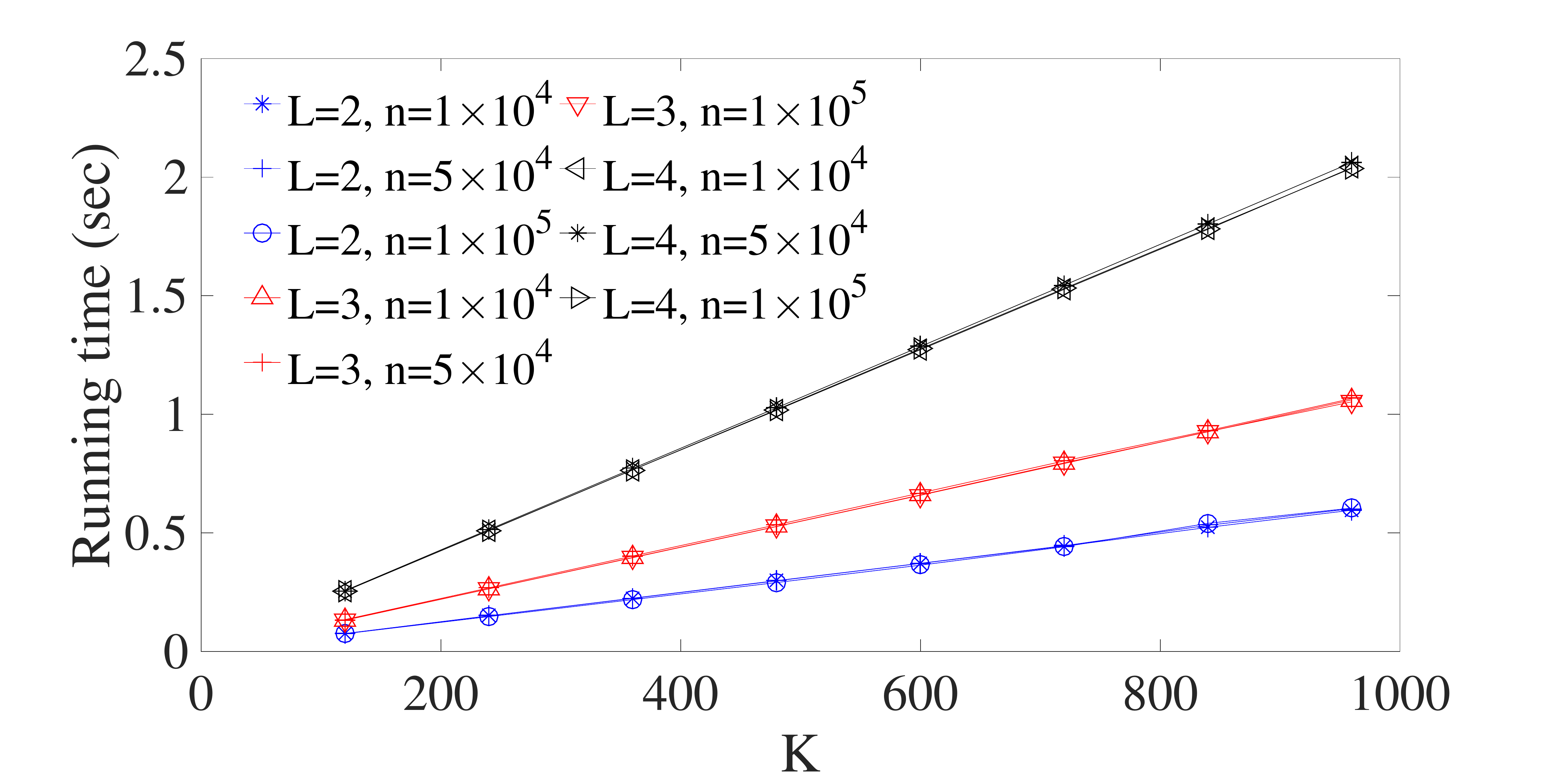}
        \caption{Time complexity}
        \label{subfig:noiseless_time}
    \end{subfigure}
	\caption{\small{Success probability and running time in the noiseless case. In both (a) and (b), for $L = 2$, we use $(d, R, V) = (15, 3, 3)$, for $ L=3 $, we use $(d, R, V)=(15, 5, 5)$, and for $ L=4 $, we use $(d,R,V)=(13, 8, 8)$. We increase the number of measurements $m$ by increasing the number of bins $M$. As we can see, the total number of measurements needed for successful recovery matches the sample complexities predicted by our theory in Table~\ref{tab:thmconst}. In (b), for $L=2$, we use $m=34.2K$ (i.e., $M=3.8K$); for $ L=3 $, we use $m=39K$ (i.e., $M=2.6K$); and for $L=4$, we use $m=43.2K$ (i.e., $M=1.8K$).}}
\label{fig:noiseless}
\end{figure}

Similar experiments are performed for the noisy case using the Robust Mixed-Coloring algorithm, under the quantization assumption. We still focus on the case where the two parameter vectors appear equally like and have the same sparsity. We use quantization step size $\Delta = 1$ and the quantized alphabet $\mathbb{D}=\{\pm1, \pm2, \ldots, \pm5\}$, and the values of the non-zero elements are chosen uniformly at random from $\mathbb{D}$. Figure~\ref{subfig:noisy_sample} shows the minimum number of queries $ m $ required for 100 consecutive successes, for different $ n $ and $ K $. We observe that the sample complexity is \emph{linear} in $K$ and \emph{sublinear} in $n$. The running time exhibits a similar behavior, as shown in Figure~\ref{subfig:noisy_time}.  Both observations agree with the prediction of our theory.

\begin{figure}[h]
    \centering
    \begin{subfigure}[b]{0.48\textwidth}
        \includegraphics[width=\textwidth]{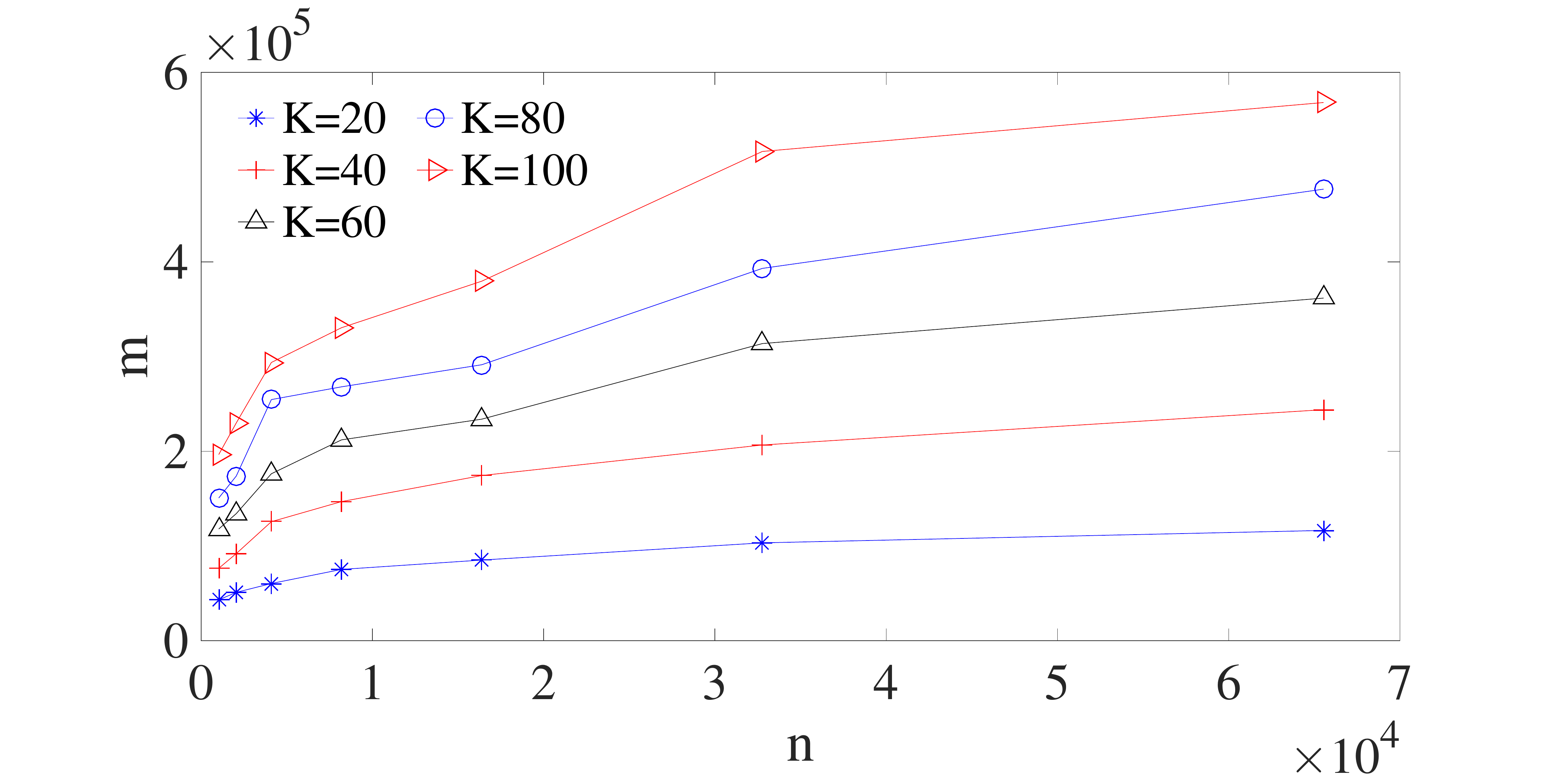}
        \caption{Sample complexity $(\Delta=1, \sigma=0.2)$}
        \label{subfig:noisy_sample}
    \end{subfigure}
    \quad 
    \begin{subfigure}[b]{0.48\textwidth}
        \includegraphics[width=\textwidth]{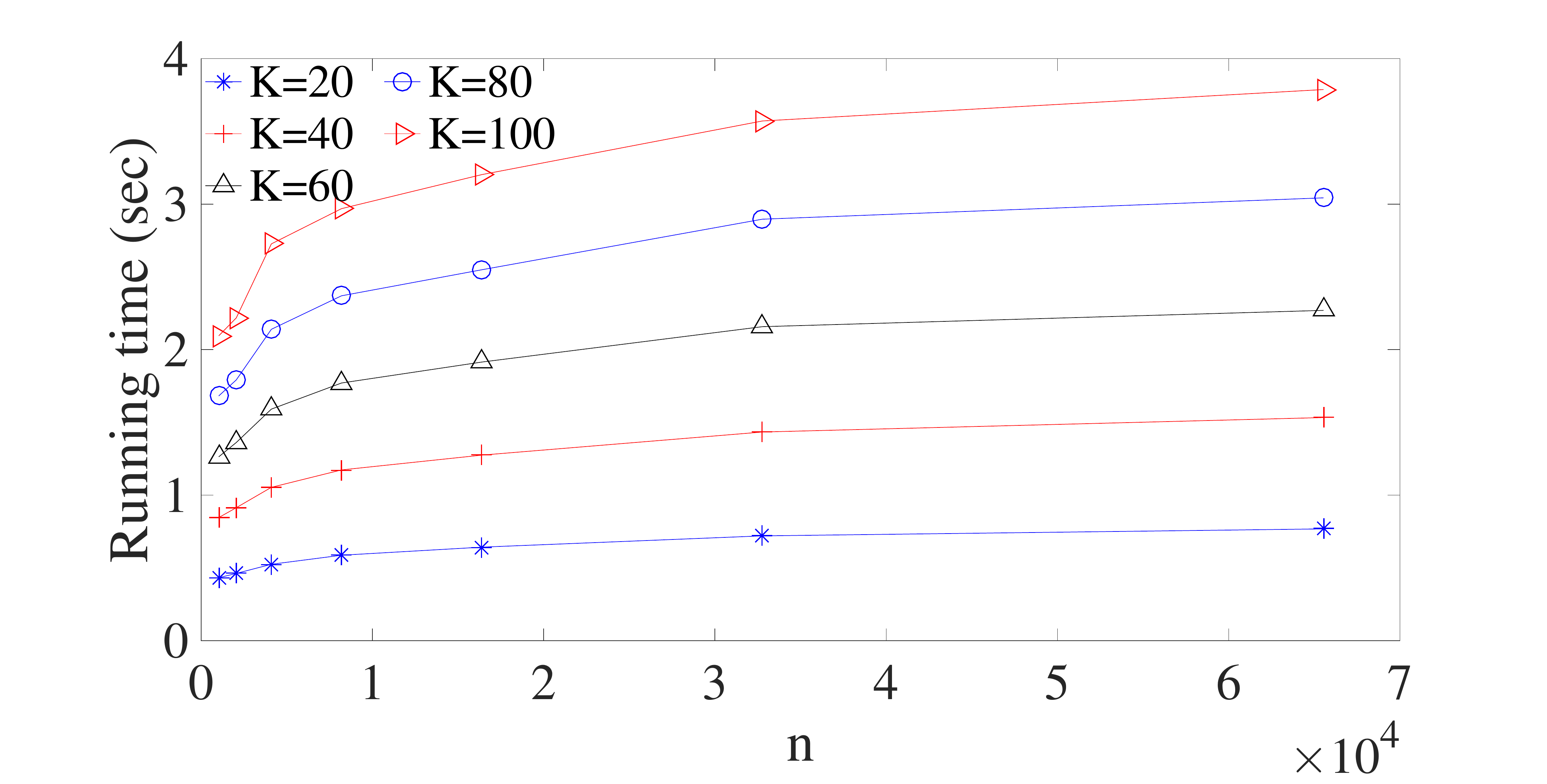}
        \caption{Time complexity $(\Delta=1, \sigma=0.2)$}
        \label{subfig:noisy_time}
    \end{subfigure}
	\caption{\small{Sample and time complexities of the Robust Mixed-Coloring algorithm. In both (a) and (b), we choose the quantization step size $\Delta=1$, the quantize alphabet $\mathbb{D}=\{\pm1, \pm2, \ldots, \pm5\}$, standard deviation of noise $\sigma=0.2$. The design parameters are chosen as follows: left degree $d=15$, number of bins $M=3K$, number of singleton verification query vectors: $0.3\log_2(n)$. In (a), we vary the number of repetitions of each query vector $N$ to find the minimum number of query vectors needed for successful recovery. In (b), we fix $N=\log_2(n)$ and measure the time cost. The experiment on time complexity is conducted in the setting where successful recovery is guaranteed.}}
\label{fig:noisy}
\end{figure}

We also compare the Mixed-Coloring algorithm with a state-of-the-art EM-style algorithm (equivalent to alternating minimization in the noiseless setting) from~\cite{yi2014alternating}. These comparisons are not entirely fair, since our algorithm is based on carefully designed query vectors, while the algorithm in~\cite{yi2014alternating} uses random design, i.e., the entries of $\vect{x}_i$'s are i.i.d. Gaussian. However, this is exactly where the intellectual value of our work lies: we expose the gains available by careful design. We consider four test cases with $(L, n, K) = (2, 100, 20), (2, 500, 50), (2, 100, 100), (2, 500, 500)$, with the first two cases being sparse problems and the last two being relatively dense problems. We find the minimum number of queries that leads to a 100\% successful rate in 100 trials, and the average running time. For the Mixed-Coloring algorithm, we use $d=15$, $R=V=3$ and $M=3.8K$. The parameters of the EM-style algorithm are chosen as suggested in the original paper~\cite{yi2014alternating}. As shown in Table~\ref{tab:noiseless_comp}, in both sparse and dense problems, our Mixed-Coloring algorithm is several orders of magnitude faster. As for the sample complexity, our algorithm requires smaller number of samples in the sparse cases, while in dense problems, the sample complexity of our algorithm is within a constant factor (about 3) of that of the alternating minimization algorithm. For the noisy setting, our algorithm is most powerful in the high dimensional setting, i.e., large $n$, due to the $\text{polylog}(n)$ factors. However, in this setting, it takes prohibitively long time for the state-of-the-art algorithms such as~\cite{stadler2010} to converge, and thus, we do not present the comparison in the noisy setting.
 
\begin{table}[h]
\centering
\begin{center}
  \begin{tabular}{ c | c | c | c }
    \hline
Problem type  &  $(n,K)$  &  $\frac{\text{sample(M-C)}}{\text{sample(EM)}}$ & $\frac{\text{run-time(M-C)}}{\text{run-time(EM)}}$  \\  \hline
   \multirow{2}{*}{Sparse} & $(100,20)$  &  $0.57$  &  $0.00806$  \\ 
 &   $(500,50)$  &  $0.33$  &  $0.00272$  \\ \hline
\multirow{2}{*}{Dense}   &  $(100,100)$  &  $2.78$  &  $0.0526$  \\
   & $(500,500)$  &  $3.00$  &  $0.0270$  \\ \hline   
\end{tabular}
\end{center}
\caption{Comparison of the Mixed-Coloring algorithm (M-C) and the EM-style algorithm (EM). Mixed-Coloring algorithm is advantageous in time complexity for both sparse and dense problems, and is advantageous in sample complexity for sparse problems.}
\label{tab:noiseless_comp}
\end{table}

\begin{figure}[h]
\centering
\includegraphics[width=0.48\textwidth]{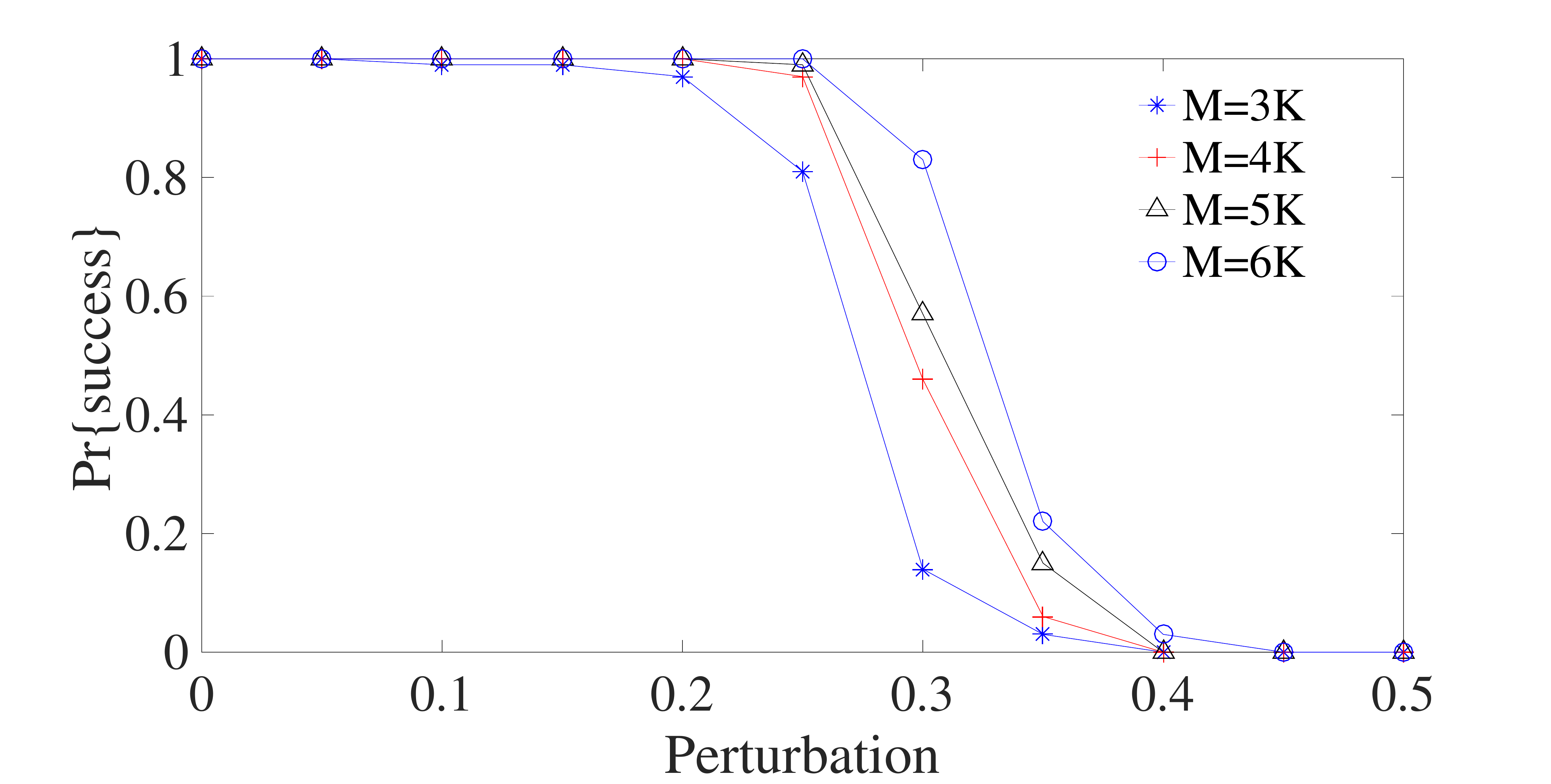}
\caption{\small{Performance of Robust Mixed-Coloring algorithm with quantization assumption violated. We vary the number of bins $M$ to test the empirical probability of success, and also keep $d=5M/K$. Other parameters: $n=4096$, $K=50$, quantization level $\Delta=1$, standard deviation of noise $\sigma=0.1$, number of singleton verification query vectors: $0.3\log_2(n)$, and $R=\log_2(n)$.}}
\label{fig:perturbation}
\end{figure}

We further test the Robust Mixed-Coloring algorithm when the quantization assumption is violated. For any $\beta\in\REAL$, we define $D(\beta)=\arg\min_{a\in\mathbb{D}}\ABSL{a-\beta}\vect{1}(\beta\neq 0)$, where $\vect{1}(\cdot)$ denotes the indicator function. This means that $D(\beta)$ is the element in $\mathbb{D}$ which is the closest one to $\beta$, when $\beta\neq0$. For a vector $\vect{\beta}\in\REAL^n$, we define $D(\vect{\beta}) = \{D(\beta_j)\}_{j=1}^n$. We define the \emph{perturbation} of a vector $\vect{\beta}$ as $\text{Perturbation}(\vect{\beta}) = \max_{j\in[n]} \ABSL{\beta_j-D(\beta_j)}/\Delta$. 

In this experiment, we generate sparse parameter vectors $\vect{\beta}^{(\ell)}$, $\ell\in[L]$ with a total number of $ K $ non-zero elements. These non-zero elements are generated randomly while keeping the perturbation of the parameter vectors under a certain level by adding bounded noise to the quantized non-zero elements. We record the probability of success for different number of bins $ M $ and different perturbation level. Here the success event is defined as recovery of $D(\vect{\beta}^{(\ell)})$ for all $\ell\in[L]$. The result is shown in Figure~\ref{fig:perturbation}. We see that the Robust Mixed-Coloring algorithm works without the quantization assumption as long as the perturbations are not too large.

\section{Conclusions}\label{sec:con}
We proposed the Mixed-Coloring algorithm as a query based learning algorithm for mixtures of sparse linear regressions. Our algorithm leverages the connection between modern coding theory and statistical inference. The design of the query vectors and the recovery algorithm are based on ideas from sparse graph codes. Our novel code design allows for both efficient demixing and parameter estimation. In the noiseless setting, for a constant number of sparse parameter vectors, our algorithm achieves the order-optimal sample and time complexities of $\Theta(K)$. In the presence of Gaussian noise, for the problem with two parameter vectors (i.e., $L=2$), we show that the Robust Mixed-Coloring algorithm achieves near-optimal $\Theta(K\polylog(n))$ sample and time complexities. Our experiments justified the theoretical results, and we observe that the run-time of our algorithm can be orders of magnitudes smaller than that of the state-of-the-art algorithms. In the noisy scenario, studying the Robust Mixed-Coloring algorithm with more than two parameter vectors and obtaining theoretical results for the continuous alphabet case are two important future directions.

\ifCLASSOPTIONcaptionsoff
  \newpage
\fi
\bibliographystyle{ieeetr}
\bibliography{refs}

\begin{thebibliography}{10}

\bibitem{harville2002framework}
M.~Harville, ``A framework for high-level feedback to adaptive, per-pixel,
  mixture-of-{G}aussian background models,'' in {\em Proceedings of the 7th
  European Conference on Computer Vision}, pp.~543--560, Springer, 2002.

\bibitem{reynolds2000speaker}
D.~A. Reynolds, T.~F. Quatieri, and R.~B. Dunn, ``Speaker verification using
  adapted {G}aussian mixture models,'' {\em Digital Signal Processing},
  vol.~10, no.~1, pp.~19--41, 2000.

\bibitem{zhang2012guess}
A.~Zhang, N.~Fawaz, S.~Ioannidis, and A.~Montanari, ``Guess who rated this
  movie: identifying users through subspace clustering,'' in {\em Proceedings
  of the 28th Conference on Uncertainty in Artificial Intelligence},
  pp.~944--953, AUAI Press, 2012.

\bibitem{de1989mixtures}
R.~De~Veaux, ``Mixtures of linear regressions,'' {\em Computational Statistics
  and Data Analysis}, vol.~8, no.~3, 1989.

\bibitem{candes2006robust}
E.~J. Cand{\`e}s, J.~Romberg, and T.~Tao, ``Robust uncertainty principles:
  Exact signal reconstruction from highly incomplete frequency information,''
  {\em IEEE Transactions on Information Theory}, vol.~52, no.~2, pp.~489--509,
  2006.

\bibitem{candes2006stable}
E.~J. Cand{\`e}s, J.~K. Romberg, and T.~Tao, ``Stable signal recovery from
  incomplete and inaccurate measurements,'' {\em Communications on Pure and
  Applied Mathematics}, vol.~59, no.~8, pp.~1207--1223, 2006.

\bibitem{blackwell2006applying}
E.~Blackwell, C.~F.~M. de~Leon, and G.~E. Miller, ``Applying mixed regression
  models to the analysis of repeated-measures data in psychosomatic medicine,''
  {\em Psychosomatic Medicine}, vol.~68, no.~6, 2006.

\bibitem{deb2002estimates}
P.~Deb and M.~Holmes, ``Estimates of use and costs of behavioural health care:
  a comparison of standard and finite mixture models,'' {\em Econometric
  Analysis of Health Data}, pp.~87--99, 2002.

\bibitem{viele2002modeling}
K.~Viele and B.~Tong, ``Modeling with mixtures of linear regressions,'' {\em
  Statistics and Computing}, vol.~12, no.~4, pp.~315--330, 2002.

\bibitem{gallager1962low}
R.~Gallager, ``Low-density parity-check codes,'' {\em IRE Transactions on
  Information Theory}, vol.~8, no.~1, pp.~21--28, 1962.

\bibitem{wainwright2009information}
M.~J. Wainwright, ``Information-theoretic limits on sparsity recovery in the
  high-dimensional and noisy setting,'' {\em IEEE Transactions on Information
  Theory}, vol.~55, no.~12, pp.~5728--5741, 2009.

\bibitem{akccakaya2010shannon}
M.~Ak{\c{c}}akaya and V.~Tarokh, ``Shannon-theoretic limits on noisy
  compressive sampling,'' {\em IEEE Transactions on Information Theory},
  vol.~56, no.~1, pp.~492--504, 2010.

\bibitem{lewicki1998review}
M.~S. Lewicki, ``A review of methods for spike sorting: the detection and
  classification of neural action potentials,'' {\em Network: Computation in
  Neural Systems}, vol.~9, no.~4, pp.~R53--R78, 1998.

\bibitem{jansen1992general}
R.~Jansen, ``A general mixture model for mapping quantitative trait loci by
  using molecular markers,'' {\em Theoretical and Applied Genetics}, vol.~85,
  no.~2-3, pp.~252--260, 1992.

\bibitem{chen2014convex}
Y.~Chen, X.~Yi, and C.~Caramanis, ``A convex formulation for mixed regression
  with two components: Minimax optimal rates,'' in {\em Proceedings of
  Conference on Learning Theory}, pp.~560--604, 2014.

\bibitem{balakrishnan2014statistical}
S.~Balakrishnan, M.~J. Wainwright, and B.~Yu, ``Statistical guarantees for the
  {EM} algorithm: From population to sample-based analysis,'' {\em The Annals
  of Statistics}, vol.~45, no.~1, pp.~77--120, 2017.

\bibitem{faria2010fitting}
S.~Faria and G.~Soromenho, ``Fitting mixtures of linear regressions,'' {\em
  Journal of Statistical Computation and Simulation}, vol.~80, no.~2,
  pp.~201--225, 2010.

\bibitem{stadler2010}
N.~St{\"a}dler, P.~B{\"u}hlmann, and S.~Van De~Geer, ``$\ell_1$-penalization
  for mixture regression models,'' {\em Test}, vol.~19, no.~2, pp.~209--256,
  2010.

\bibitem{yi2014alternating}
X.~Yi, C.~Caramanis, and S.~Sanghavi, ``Alternating minimization for mixed
  linear regression,'' in {\em Proceedings of the 31st International Conference
  on Machine Learning}, pp.~613--621, 2014.

\bibitem{yi2016solving}
X.~Yi, C.~Caramanis, and S.~Sanghavi, ``Solving a mixture of many random linear
  equations by tensor decomposition and alternating minimization,'' {\em arXiv
  preprint arXiv:1608.05749}, 2016.

\bibitem{chaganty2013spectral}
A.~T. Chaganty and P.~Liang, ``Spectral experts for estimating mixtures of
  linear regressions,'' in {\em Proceedings of the 30th International
  Conference on Machine Learning}, pp.~1040--1048, 2013.

\bibitem{sun2014learning}
Y.~Sun, S.~Ioannidis, and A.~Montanari, ``Learning mixtures of linear
  classifiers,'' in {\em Proceedings of the 31st International Conference on
  Machine Learning}, pp.~721--729, 2014.

\bibitem{arikan2009channel}
E.~Arikan, ``Channel polarization a method for constructing capacity-achieving
  codes for symmetric binary-input memoryless channels,'' {\em IEEE
  Transactions on Information Theory}, vol.~55, no.~7, pp.~3051--3073, 2009.

\bibitem{li2015active}
X.~Li and K.~Ramchandran, ``An active learning framework using sparse-graph
  codes for sparse polynomials and graph sketching,'' in {\em Proceedings of
  Advances in Neural Information Processing Systems}, pp.~2161--2169, 2015.

\bibitem{pedarsani2015sparse}
R.~Pedarsani, K.~Lee, and K.~Ramchandran, ``Sparse covariance estimation based
  on sparse-graph codes,'' in {\em Proceedings of IEEE Annual Allerton
  Conference on Communication, Control, and Computing}, pp.~612--619, 2015.

\bibitem{ubaru2015low}
S.~Ubaru, A.~Mazumdar, and Y.~Saad, ``Low rank approximation using error
  correcting coding matrices,'' in {\em Proceedings of the 32nd International
  Conference on Machine Learning}, pp.~702--710, 2015.

\bibitem{ermon2014low}
S.~Ermon, C.~Gomes, A.~Sabharwal, and B.~Selman, ``Low-density parity
  constraints for hashing-based discrete integration,'' in {\em Proceedings of
  the 31st International Conference on Machine Learning}, pp.~271--279, 2014.

\bibitem{li2014sub}
X.~Li, D.~Yin, S.~Pawar, R.~Pedarsani, and K.~Ramchandran, ``Sub-linear time
  support recovery for compressed sensing using sparse-graph codes,'' {\em
  arXiv preprint arXiv:1412.7646}, 2014.

\bibitem{yincompressed}
D.~Yin, R.~Pedarsani, X.~Li, and K.~Ramchandran, ``Compressed sensing using
  sparse-graph codes for the continuous-alphabet setting,'' {\em Proceedings of
  IEEE Annual Allerton Conference on Communication, Control, and Computing},
  pp.~758--765, 2016.

\bibitem{pedarsani2014phasecode}
R.~Pedarsani, D.~Yin, K.~Lee, and K.~Ramchandran, ``Phasecode: Fast and
  efficient compressive phase retrieval based on sparse-graph codes,'' {\em
  IEEE Transactions on Information Theory}, vol.~63, no.~6, pp.~3663--3691,
  2017.

\bibitem{yin2015fast}
D.~Yin, K.~Lee, R.~Pedarsani, and K.~Ramchandran, ``Fast and robust compressive
  phase retrieval with sparse-graph codes,'' in {\em Proceedings of IEEE
  International Symposium on Information Theory}, pp.~2583--2587, 2015.

\bibitem{du2000combinatorial}
D.~Du, F.~K. Hwang, and F.~Hwang, {\em Combinatorial group testing and its
  applications}, vol.~12.
\newblock World Scientific, 2000.

\bibitem{d2000new}
A.~G. D’yachkov, A.~J. Macula, and V.~V. Rykov, ``New applications and
  results of superimposed code theory arising from the potentialities of
  molecular biology,'' in {\em Numbers, Information and Complexity},
  pp.~265--282, Springer, 2000.

\bibitem{cheraghchi2010derandomization}
M.~Cheraghchi, ``Derandomization and group testing,'' in {\em Proceedings of
  IEEE Annual Allerton Conference on Communication, Control, and Computing},
  pp.~991--997, 2010.

\bibitem{lee2015saffron}
K.~Lee, R.~Pedarsani, and K.~Ramchandran, ``{S}{A}{F}{F}{R}{O}{N}: A fast,
  efficient, and robust framework for group testing based on sparse-graph
  codes,'' in {\em Proceedings of IEEE International Symposium on Information
  Theory}, pp.~2873--2877, 2016.

\bibitem{mazumdar2016nonadaptive}
A.~Mazumdar, ``Nonadaptive group testing with random set of defectives,'' {\em
  IEEE Transactions on Information Theory}, vol.~62, no.~12, pp.~7522--7531,
  2016.

\bibitem{dhillon2011nearest}
I.~S. Dhillon, P.~K. Ravikumar, and A.~Tewari, ``Nearest neighbor based greedy
  coordinate descent,'' in {\em Proceedings of Advances in Neural Information
  Processing Systems}, pp.~2160--2168, 2011.

\bibitem{pilanci2014iterative}
M.~Pilanci and M.~J. Wainwright, ``Iterative {H}essian sketch: Fast and
  accurate solution approximation for constrained least-squares,'' {\em Journal
  of Machine Learning Research}, vol.~17, no.~1, pp.~1842--1879, 2016.

\bibitem{achlioptas2015stochastic}
D.~Achlioptas and P.~Jiang, ``Stochastic integration via error-correcting
  codes,'' in {\em Proceedings of the 31st Conference on Uncertainty in
  Artificial Intelligence}, pp.~22--31, 2015.

\bibitem{sipser1996expander}
M.~Sipser and D.~A. Spielman, ``Expander codes,'' {\em IEEE Transactions on
  Information Theory}, vol.~42, no.~6, pp.~1710--1722, 1996.

\bibitem{jafarpour2009efficient}
S.~Jafarpour, W.~Xu, B.~Hassibi, and R.~Calderbank, ``Efficient and robust
  compressed sensing using optimized expander graphs,'' {\em IEEE Transactions
  on Information Theory}, vol.~55, no.~9, pp.~4299--4308, 2009.

\bibitem{richardson2008modern}
T.~Richardson and R.~Urbanke, {\em Modern coding theory}.
\newblock Cambridge University Press, 2008.

\bibitem{RU01}
T.~Richardson and R.~Urbanke, ``The capacity of low-density parity-check codes
  under message-passing decoding,'' {\em IEEE Transactions on Information
  Theory}, vol.~47, pp.~599--618, 2001.

\bibitem{johnson1984extensions}
W.~B. Johnson and J.~Lindenstrauss, ``Extensions of {L}ipschitz mappings into a
  {H}ilbert space,'' {\em Contemporary Mathematics}, vol.~26, no.~189-206,
  p.~1, 1984.

\bibitem{erd6s1960evolution}
P.~Erdos and A.~Renyi, ``On the evolution of random graphs,'' {\em Publications
  of the Mathematical Institute of the Hungarian Academy of Sciences}, vol.~5,
  pp.~17--61, 1960.

\bibitem{richardson2001capacity}
T.~J. Richardson and R.~L. Urbanke, ``The capacity of low-density parity-check
  codes under message-passing decoding,'' {\em IEEE Transactions on Information
  Theory}, vol.~47, no.~2, pp.~599--618, 2001.

\bibitem{baraniuk2008simple}
R.~Baraniuk, M.~Davenport, R.~DeVore, and M.~Wakin, ``A simple proof of the
  restricted isometry property for random matrices,'' {\em Constructive
  Approximation}, vol.~28, no.~3, pp.~253--263, 2008.

\end{thebibliography}

\appendices
\section{Proof of Theorem \ref{thm:main_noiseless}} \label{sec:prf_noiseless}
\subsection{Proof Outline}\label{sec:prfoutline}
We prove Theorem~\ref{thm:main_noiseless} in this section. The proof includes two major steps: (i) show that the expectation of the fraction of non-zero elements which are not recovered can be arbitrarily small; (ii) show that this fraction concentrates around its mean with high probability. The first part mainly uses density evolution techniques which are commonly used in coding theory, and the second part uses Doob's martingale argument.

\subsection{Notation}\label{sec:notation}
We briefly recall the Mixed-Coloring algorithm in the noiseless case and declare the notation that we use for the rest of the proof. 

Recall that the parameter vector $\vect{\beta}^{(\ell)}$ has $K_\ell$ non-zero elements. We call these $K_\ell$ non-zero elements \emph{balls} in \emph{color} $\ell$. We design a $d$-left regular bipartite graph with $n$ left nodes and $M$ right nodes, representing the $n$ coordinates and the $M$ \emph{bins}, respectively. We denote the $i$-th bin by $\mathcal{B}_i$. We use the matrix $\mat{H}\in\{0,1\}^{M\times n}$ to represent the biadjacency matrix of the bipartite graph, i.e., $H_{i,j}=1$ if and only if the $i$-th bin is associated with the $j$-th coordinate. Recall that we design three query vectors in the form of~\eqref{eq:example_ratio_test}, for the purpose of ratio test. The third query vectors is the summation of the first two and is used for summation check. We repeat the first two query vectors $R$ times, respectively, and get $R$ type-I and $R$ type-II index measurements. We repeat the third query vector $V$ times and get $V$ verification measurements. For the $j$-th verification measurement of the $i$-th bin, we define a \emph{sub-bin} $\mathcal{B}_i^j$. If we can find one type-I index measurement and one type-II index measurement such that the summation of the two measurements is equal to the $j$-th verification measurement, we know that these three measurements are generated by the same parameter vector, say $\vect{\beta}^{(\ell)}$. The two index measurements are called a consistent pair. Then, we say that the sub-bin $\mathcal{B}_i^j$ has color $\ell$. We define the \emph{color set} $\mathcal{C}_i^j$ of $\mathcal{B}_i^j$. If we can find a consistent pair corresponding to the $j$-th verification measurement, we let $\mathcal{C}_i^j= \{\ell\}$, otherwise $\mathcal{C}_i^j = \emptyset$. We further define the color set of bin $\mathcal{B}_i$ as $\mathcal{C}_i = \cup_{j=1}^V\mathcal{C}_i^j$.

\subsection{Number of Singleton Balls}\label{sec:prf_singleton}
In this section, we analyze the number of singleton balls in color $\ell$ found in the first stage of the algorithm. We can show that this number is concentrated around a constant fraction of $K_\ell$ with high probability.
\begin{lemma}\label{lem:single}
Let $K_s^{(\ell)}$ be the number of singleton balls in color $\ell$ found in the first stage. Then, there exists a constant\footnote{Recall that in our paper, constants are defined as quantities which do not depend on $n$ and $K$.} $q_s^{(\ell)}$ such that for any constant $\delta>0$,
\begin{equation}\label{eq:num_single}
\PROL{\ABSL{K_s^{(\ell)}-K_{\ell}q_s^{(\ell)}} \le \delta K_{\ell}} \ge 1-2\exp(-2\delta^2K_{\ell}).
\end{equation}
\end{lemma}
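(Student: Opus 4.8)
The plan is to write $K_s^{(\ell)}$ as a sum of indicator variables over the color-$\ell$ balls, pin down its mean by an exchangeability argument, and then establish concentration through the bounded-differences (McDiarmid) inequality, since the indicators are correlated through shared bins.

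First I would describe the relevant randomness. Index the $K_\ell$ non-zero elements of $\vect{\beta}^{(\ell)}$ by $j\in\SUPP{\vect{\beta}^{(\ell)}}$, and let $Z_j\subseteq[M]$ be the multiset of $d$ bins that ball $j$ is connected to; the $Z_j$ are i.i.d. For each bin $i$ let $A_i$ be the indicator that bin $i$ produces a consistent pair of color $\ell$; since the measurement labels in different bins are drawn independently, the $A_i$ are i.i.d. $\mathrm{Bernoulli}(Q_\ell)$ and independent of $\{Z_j\}$. In the noiseless setting the ratio test is exact almost surely (a multiton passing it, or a true singleton failing it, is a measure-zero event in the continuous coefficients $r_j$), so ball $j$ is found as a singleton ball in the first stage precisely when some bin $i\in Z_j$ has $A_i=1$ and has $j$ as its unique incident color-$\ell$ ball. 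Write $X_j$ for this event, so that $K_s^{(\ell)}=\sum_j X_j$, and note that only these $K_\ell$ balls — not the other coordinates, nor the other colors — are relevant to color-$\ell$ singletons.

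For the mean: relabelling the bins and relabelling the balls leaves the joint law of $(\{Z_j\},\{A_i\})$ invariant, so $\PROL{X_j}$ is the same for every $j$; calling this common value $q_s^{(\ell)}$ gives $\EXPL{K_s^{(\ell)}}=K_\ell q_s^{(\ell)}$ exactly. In the asymptotic regime with $\lambda_\ell=K_\ell d/M$ held fixed, the degree of a bin reached along an edge converges to the Poisson$(\lambda_\ell)$ weight used in Section~\ref{sec:decode}, so $q_s^{(\ell)}\to 1-(1-\rho_1^{(\ell)})^d$ with $\rho_1^{(\ell)}=Q_\ell e^{-\lambda_\ell}$, a constant bounded away from $0$; the minor informality of identifying $q_s^{(\ell)}$ with this limit costs only an $o(K_\ell)$ shift of the centering, which is harmless since $\delta$ is a free constant. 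The remaining task is then to show $K_s^{(\ell)}$ concentrates around its mean.

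The concentration step is the heart of the proof. The $X_j$ are not independent, so I would regard $K_s^{(\ell)}$ as a deterministic function $g\big(\{Z_j\}_{j\in\SUPP{\vect{\beta}^{(\ell)}}},\{A_i\}_{i\in[M]}\big)$ of its $K_\ell+M$ independent inputs and bound its bounded-differences constant: re-assigning a single $Z_j$ changes the degree of at most $2d$ bins, and a change in one bin's degree can flip the singleton status of at most one other ball, so $g$ moves by at most $2d+1=\Theta(1)$; flipping a single $A_i$ moves $g$ by at most $1$. McDiarmid's inequality then yields $\PROL{\ABSL{K_s^{(\ell)}-K_\ell q_s^{(\ell)}}\ge\delta K_\ell}\le 2\exp(-c\,\delta^2 K_\ell)$ for a constant $c$ depending only on $d$ and $\lambda_\ell$, which is the asserted form of the bound (the sharper exponent written in the statement can be recovered by conditioning on $\{A_i\}$ and running a Doob martingale only over the ball-variables, or by a finer accounting of the differences). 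I expect the main obstacle to be exactly this bounded-difference estimate — correctly isolating which balls' statuses can change when one ball is re-routed — together with making the Poisson limit for $q_s^{(\ell)}$ precise, i.e.\ controlling the deviation of the finite-$K$ bin-degree distribution from its limit, which follows from concentration of the bin degrees around $\lambda_\ell$.
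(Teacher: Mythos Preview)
Your proposal is correct and, if anything, more careful than the paper's own argument. The paper derives the same constant $q_s^{(\ell)}=1-(1-\rho_1^{(\ell)})^d$ via the Poisson approximation of bin degrees, but then obtains concentration by simply invoking Hoeffding's inequality after an informal ``asymptotic argument'' that in the limit $K_\ell\to\infty$ the $d$ edges incident to a ball are independent --- in effect treating the indicators $X_j$ as i.i.d.\ Bernoulli$(q_s^{(\ell)})$, which immediately gives the sharp exponent $2\delta^2 K_\ell$. Your bounded-differences route is the rigorous version of this step: it acknowledges the dependence among the $X_j$ through shared bins and pays for it with a weaker (but still $\Theta(1)$) constant in the exponent. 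What the paper's shortcut buys is the explicit constant $2$ in the statement; what your approach buys is a proof that does not lean on an unproved asymptotic decoupling of the edge events. Your remark that the sharper exponent can be recovered via a Doob martingale (or by conditioning on the bin colors and then exposing balls one at a time) is exactly the standard fix, and is in the spirit of the martingale argument the paper invokes later for Lemma~\ref{lem:concentration}.
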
 
 
\begin{proof}
We first specify some terminologies here. For a bin $\mathcal{B}_i$, we say that this bin \emph{has} color $\ell$ when $\ell\in\mathcal{C}_i$. One should notice that if there are more than one sub-bins in color $\ell$ in bin $\mathcal{B}_i$, these sub-bins are identical. Therefore, we can say that a bin $\mathcal{B}_i$ \emph{contains} $k$ balls in color $\ell$, when $\mathcal{B}_i$ has at least one sub-bin $\mathcal{B}_i^j$ in color $\ell$, and the sub-bin is associated with $k$ non-zero elements in $\vect{\beta}^{(\ell)}$. Equivalently, the coded parameter vector $\tilde{\vect{\beta}}_i^\ell=\DIAG{\vect{h}_i}\vect{\beta}^{(\ell)}$ satisfies $\ABSL{\SUPP{\tilde{\vect{\beta}}_i^\ell}}=k$, $k\ge 0$.

First, we analyze the probability $Q_{\ell}$ that a particular bin $\mathcal{B}_i$ has color $\ell$. According to our model, the measurements are generated independently, therefore, we have
$$
Q_{\ell} = [1-(1-q_{\ell})^V][1-(1-q_{\ell})^R]^2.
$$
Then, we use $\xi_{k}^{(\ell)}$ to denote the probability of the event that a particular bin contains $k$ balls in color $\ell$. Since each ball is associated with $d$ bins among the $M$ bins independently and uniformly at random, the number of balls in color $\ell$ that a bin contains is binomial distributed with parameters $K_\ell$ and $\frac{d}{M}$, and we have
$$
\xi_{k}^{(\ell)} = Q_{\ell} {K_{\ell}\choose k} \left(\frac{d}{M}\right)^k \left(1-\frac{d}{M}\right)^{K_{\ell}-k}.
$$
In addition, we can use Poisson distribution to approximate the binomial distribution when $\lambda_{\ell}:=\frac{K_{\ell}d}{M}$ is a constant and $K_{\ell}$ approaches infinity. In the following analysis, we use the approximation
$$
\xi_{k}^{(\ell)} \approx Q_{\ell} \frac{\lambda_{\ell}^ke^{-\lambda_{\ell}}}{k!}.
$$
Consider the bipartite graph representing the association between the balls in color $\ell$ and the $M$ bins. We know that there are $K_{\ell}d$ edges connected to the balls in color $\ell$, and we use $\rho_{k}^{(\ell)}$ to denote the expected fraction of these $K_{\ell}d$ edges which are connected to a bin which contains $k$ balls in color $\ell$, $k\ge 1$. Then, we have
$$
\rho_{k}^{(\ell)} = \frac{kM}{K_{\ell}d}\xi_k^{(\ell)}=Q_{\ell}\frac{\lambda_{\ell}^{k-1}e^{-\lambda_{\ell}}}{(k-1)!},
$$
and equivalently, $\rho_{k}^{(\ell)}$ is also the probability that an edge, which is chosen from the $K_{\ell}d$ edges uniformly at random, is connected to a bin $\mathcal{B}_i$ containing $k$ balls in color $\ell$.

Let $q_s^{(\ell)}$ be the probability that a ball in color $\ell$ is a singleton ball. The event that this ball is a singleton ball is equivalent to the event that at least one of its $d$ associated bins contains one ball with color $\ell$. Then, when $K_\ell$ approaches infinity, we have
$$
q_s^{(\ell)} = 1-(1-\rho_1^{(\ell)})^d,
$$
and this is because in the limit $K_{\ell}\rightarrow\infty$, the correlations between the $d$ edges connected to a ball become negligible; this technique is often used in the theoretical analysis of density evolution in coding theory, and we use this type of asymptotic argument several times in the proofs. Let $K_s^{(\ell)}$ be the number of singleton balls in color $\ell$, then we have $\EXPL{K_s^{(\ell)}}=K_{\ell}q_s^{(\ell)}$.
Using the asymptotic argument and by Hoeffding's inequality, we also have for any constant $\delta>0$,
$$
\PROL{\ABSL{K_s^{(\ell)}-K_{\ell}q_s^{(\ell)}} \le \delta K_{\ell}} \ge 1-2\exp(-2\delta^2K_{\ell}),
$$
and this means that the number of singleton balls in color $\ell$ is highly concentrated around $K_{\ell}q_s^{(\ell)}$.
\end{proof}

\subsection{Initial Fractions}\label{sec:prf_init}
We construct the graph $\mathcal{G}_\ell$ whose nodes correspond to the singleton balls in color $\ell$ found in the previous stage, and analyze the number of edges in $\mathcal{G}_\ell$, which is equal to the number of strong doubletons in color $\ell$. Here, for clarification, we emphasize that the graph $\mathcal{G}_\ell$ corresponds to a sub-graph with the same color in Figure~\ref{fig:algorithm}, rather than the bipartite graph that we use to design the query vectors. Then, we can show that the number of strong doubletons is concentrated around a constant fraction of $M$ with high probability.
\begin{lemma}\label{lem:edge}
Let $M_s^{(\ell)}$ be the number of strong doubletons in color $\ell$ found in the second stage. Then, there exists a constant $\nu_\ell>0$ such that for any constant $\delta>0$,
\begin{equation}\label{eq:num_strong}
\PROL{\ABSL{ M_s^{(\ell)}-M\nu_\ell } \le \delta M} \ge1-2\exp(-2\delta^2 M). 
\end{equation}
\end{lemma}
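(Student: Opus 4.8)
The plan is to mirror the two-step template used in the proof of Lemma~\ref{lem:single}: first identify the constant $\nu_\ell$ as the per-bin probability of producing a strong doubleton of color $\ell$, so that $\EXPL{M_s^{(\ell)}}=M\nu_\ell(1+o(1))$, and then upgrade this to a high-probability bound by a bounded-difference (Doob-martingale) argument. Recall that, by construction, a strong doubleton of color $\ell$ is a bin $\mathcal{B}_i$ that (a) has color $\ell$ and whose color-$\ell$ consistent pair is associated with \emph{exactly two} non-zero elements of $\vect{\beta}^{(\ell)}$, and (b) both of these two balls were already detected as singleton balls in the first stage of the algorithm. Condition (b) is precisely what makes the guess-and-check-and-peel step succeed: peeling off one of the two recovered singletons from the consistent pair leaves a genuine singleton, which passes the ratio test and is recognized as the other recovered ball, so the two balls get joined by an edge in $\mathcal{G}_\ell$.

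For the expectation, fix a bin $\mathcal{B}_i$. The computation already carried out in Section~\ref{sec:prf_singleton}, together with its Poissonization, gives that $\mathcal{B}_i$ has color $\ell$ and contains exactly two color-$\ell$ balls with probability $\xi_2^{(\ell)}\to Q_\ell\lambda_\ell^2 e^{-\lambda_\ell}/2$. Conditioned on this event, pick one of the two balls $b$; since $\mathcal{B}_i$ already contains two color-$\ell$ balls, $b$ is a singleton ball iff at least one of its remaining $d-1$ bins has color $\ell$ and contains no color-$\ell$ ball other than $b$, an event of asymptotic probability $\rho_1^{(\ell)}$ per bin. By the usual local-tree (density-evolution) decorrelation in the limit $K_\ell\to\infty$ these $d-1$ events become independent, and moreover the singleton-ball events of the two balls become independent of each other because their remaining $d-1$ edges land in disjoint bins. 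Hence $\PROL{\mathcal{B}_i \text{ is a strong doubleton of color }\ell}\to \nu_\ell:=\xi_2^{(\ell)}\bigl(1-(1-\rho_1^{(\ell)})^{d-1}\bigr)^2$, a constant, and summing the indicators over the $M$ bins yields $\EXPL{M_s^{(\ell)}}=M\nu_\ell(1+o(1))$.

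For concentration, write $M_s^{(\ell)}=\sum_{i=1}^M Z_i$, where $Z_i$ indicates that $\mathcal{B}_i$ is a strong doubleton of color $\ell$, and reveal the randomness incrementally --- the $d$ bin-assignments of each of the $K$ non-zero coordinates one coordinate at a time, followed by the generating labels of the $m=\Theta(K)$ measurements --- to obtain a Doob martingale that starts at $\EXPL{M_s^{(\ell)}}$ and ends at $M_s^{(\ell)}$. The key estimate is that each such exposure changes the conditional expectation of $M_s^{(\ell)}$ by at most an absolute constant $c$: a single ball touches only $d=\Theta(1)$ bins, so on the high-probability event that every bin contains at most $O(\polylog K)$ balls (a standard balls-into-bins bound) reassigning one ball can flip the strong-doubleton status of only $O(1)$ bins directly and $O(1)$ further bins through changing which balls are singletons, and likewise for a single label. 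Since there are $\Theta(K)=\Theta(M)$ exposures, Azuma--Hoeffding gives $\PROL{\ABSL{M_s^{(\ell)}-\EXPL{M_s^{(\ell)}}}\ge \tfrac{\delta}{2}M}\le 2\exp(-\Omega(\delta^2 M))$; absorbing the $o(M)$ gap between $\EXPL{M_s^{(\ell)}}$ and $M\nu_\ell$ into the remaining $\delta/2$ yields~\eqref{eq:num_strong}. (Alternatively, exactly as in the proof of Lemma~\ref{lem:single}, one may invoke the asymptotic independence of the $Z_i$ and apply Hoeffding's inequality to $M$ Bernoulli$(\nu_\ell)$ variables directly.)

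The main obstacle is the bounded-difference bound itself: in the worst case a bin can be heavily overloaded, so that perturbing one ball's assignment alters the singleton status of many balls and hence the strong-doubleton count of many bins, which would destroy the $O(1)$ Lipschitz constant. The remedy is to condition on the overwhelmingly likely event that all bin degrees are polylogarithmic (or, equivalently, to phrase the whole argument through the local-tree asymptotics that already drive the expectation computation). Everything else --- writing down $\nu_\ell$, the Poisson approximation, and the martingale setup --- is routine and parallels Lemma~\ref{lem:single}.
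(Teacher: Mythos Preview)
Your proposal is correct and follows essentially the same two–step template as the paper: identify the per-bin probability $\nu_\ell$ that a bin is a strong doubleton of color~$\ell$, so that $\EXPL{M_s^{(\ell)}}\approx M\nu_\ell$, and then concentrate. There are two minor differences worth flagging. First, the explicit constant: the paper conditions on the event $D=\{\text{the ball lies in \emph{some} doubleton}\}$ and computes $q_d^{(\ell)}=\PROL{B\mid D}$ via inclusion--exclusion over all $d$ edges, obtaining
\[
q_d^{(\ell)}=\frac{1-(1-\rho_1^{(\ell)})^d-(1-\rho_2^{(\ell)})^d+(1-\rho_1^{(\ell)}-\rho_2^{(\ell)})^d}{1-(1-\rho_2^{(\ell)})^d},
\qquad \nu_\ell=\xi_2^{(\ell)}\,(q_d^{(\ell)})^2,
\]
whereas you condition on the \emph{specific} bin $\mathcal{B}_i$ being the doubleton and look at the remaining $d-1$ edges, giving $\nu_\ell=\xi_2^{(\ell)}\bigl(1-(1-\rho_1^{(\ell)})^{d-1}\bigr)^2$. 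Your conditioning is arguably the more direct one for the quantity $\PROL{\mathcal{B}_i\text{ is a strong doubleton}}$; the paper's is a closely related asymptotic surrogate borrowed from the PhaseCode analysis. Second, for concentration the paper does exactly what you list as the alternative: it invokes the same asymptotic-independence-plus-Hoeffding argument used in Lemma~\ref{lem:single}, without the Doob-martingale machinery. Your bounded-difference route is more work but also more honest about the dependence between bins; the paper simply accepts the heuristic independence at the stated level of rigor.
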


\begin{proof}
We know that the expected number of doubletons in color $\ell$ is $M\xi_{2}^{(\ell)}$. Then, we analyze the probability that a doubleton is a strong doubleton. Similar to the analysis in \cite{pedarsani2014phasecode}, for a particular ball in color $\ell$, we let $B$ denote the event that this ball is in a singleton, and $D$ denote the event that this ball is in a doubleton. We have the conditional probability that a ball in a doubleton is also a singleton ball:
$$
\begin{aligned}
q_d^{(\ell)} &:= \PROL{B|D} = \frac{\PROL{D\bigcap B}}{\PROL{D}} \nonumber\\
&= \frac{ 1-\PROL{\bar{B}}-\PROL{\bar{D}}+\PROL{\bar{B}\bigcap \bar{D}} }{ 1-\PROL{\bar{D}} } \nonumber\\
&= \frac{ 1-(1-\rho_1^{(\ell)})^d-(1-\rho_2^{(\ell)})^d+(1-\rho_1^{(\ell)}-\rho_2^{(\ell)})^d }{ 1-(1-\rho_2^{(\ell)})^d }. \nonumber
\end{aligned}
$$
Then we know the probability that a doubleton is a strong doubleton is $(q_d^{(\ell)})^2$, and the expected number of strong doubletons in color $\ell$ is $M\xi_{2}^{(\ell)}(q_d^{(\ell)})^2$.
Let $\nu_\ell=\xi_{2}^{(\ell)}(q_d^{(\ell)})^2$ and $M_s^{(\ell)}$ be the number of edges in graph $\mathcal{G}_\ell$. The expectation of $M_s^{(\ell)}$ is $\EXPL{M_s^{(\ell)}}=M\nu_\ell$, and according to Hoeffding's inequality, we have for any $\delta>0$
$$
\PROL{\ABSL{ M_s^{(\ell)}-M\nu_\ell } \le \delta M}\ge1-2\exp(-2\delta^2M), 
$$
meaning that the number of edges is highly concentrated around $M\nu_\ell$. 
\end{proof}

Then, we get the following result on the size of the giant component of $\mathcal{G}_\ell$, using the asymptotic behavior of the Erdos-Renyi random graphs.
\begin{lemma}\label{lem:giant}
Let $K^{(\ell)}_G$ be the size of the largest connected component (giant component) of $\mathcal{G}_\ell$.
If the parameters of the Mixed-Coloring algorithm satisfy
\begin{equation}\label{eq:con_comp}
\frac{2M\nu_\ell}{K_{\ell}q_s^{(\ell)}}>1,
\end{equation}
then, for any constant $\delta>0$, with probability $1-\BIGO(1/K_{\ell})$, initial fraction of the balls in color $\ell$ which are recovered after the second stage satisfies
\begin{equation}\label{eq:init}
\ABS{\frac{K^{(\ell)}_G}{K_\ell}-\zeta_{\ell}q_s^{(\ell)}} \le \delta,
\end{equation}
where the constant $\zeta_{\ell}$ is the unique solution of the equation 
$$
\zeta_{\ell}+\exp\left( -2\frac{\zeta_{\ell}M\nu_\ell}{K_\ell q_s^{(\ell)}}\right)=1,
$$
and other connected components in $\mathcal{G}_\ell$ are of sizes $\BIGO(\log(K_{\ell}))$. 
\end{lemma}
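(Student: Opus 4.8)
The plan is to show that $\mathcal{G}_\ell$ is, in the asymptotic regime $K_\ell\to\infty$ with $\lambda_\ell=K_\ell d/M$ fixed, distributionally close to an Erd\H{o}s--R\'enyi random graph, and then to read off \eqref{eq:init} and the $\BIGO(\log K_\ell)$ bound on the small components from the classical phase-transition theorem for such graphs. First I would condition on the high-probability events of Lemmas~\ref{lem:single} and~\ref{lem:edge}: with probability $1-2\exp(-2\delta_1^2K_\ell)-2\exp(-2\delta_1^2M)$ the graph $\mathcal{G}_\ell$ has $N:=K_s^{(\ell)}=(1+o(1))K_\ell q_s^{(\ell)}$ vertices (the singleton balls in color $\ell$) and $M_s^{(\ell)}=(1+o(1))M\nu_\ell$ edges (the strong doubletons). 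The key structural claim is that, conditioned on these counts, the $M_s^{(\ell)}$ edges are asymptotically uniform over the $\binom{N}{2}$ vertex pairs with vanishing pairwise dependence: a strong doubleton corresponds to a bin whose color-$\ell$ support is a pair of singleton balls, and since each ball is placed into $d$ of the $M=\Theta(K)$ bins independently and uniformly, the probability that a fixed ball lies in two such ``doubleton'' bins at once is $\BIGO(1/K_\ell^2)$ — the same locally-tree-like heuristic already used to derive $q_s^{(\ell)}$ and $\nu_\ell$. Thus $\mathcal{G}_\ell$ can be coupled to $G(N,p)$ with $p=M_s^{(\ell)}/\binom{N}{2}$, equivalently with mean degree $c_\ell:=2M_s^{(\ell)}/N\to c_\ell^\star:=2M\nu_\ell/(K_\ell q_s^{(\ell)})$.

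Next I would invoke the standard giant-component result for supercritical Erd\H{o}s--R\'enyi graphs (as in \cite{pedarsani2014phasecode} and references therein): since hypothesis \eqref{eq:con_comp} says precisely $c_\ell^\star>1$, with probability $1-\BIGO(1/N)$ the graph $G(N,p)$ has a \emph{unique} connected component of size $(\theta+o(1))N$, where $\theta=\theta(c_\ell^\star)\in(0,1)$ is the survival probability of a Poisson$(c_\ell^\star)$ Galton--Watson process, i.e.\ the unique positive root of $\theta=1-e^{-c_\ell^\star\theta}$, and all remaining components have size $\BIGO(\log N)$. Substituting $c_\ell^\star=2M\nu_\ell/(K_\ell q_s^{(\ell)})$ turns $\theta=1-e^{-c_\ell^\star\theta}$ into exactly the displayed defining equation for $\zeta_\ell$; uniqueness of the positive root follows from strict concavity of $\theta\mapsto 1-e^{-c\theta}$ together with $\frac{d}{d\theta}(1-e^{-c\theta})\big|_{\theta=0}=c>1$, so $\zeta_\ell=\theta(c_\ell^\star)$.

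Assembling, the giant component of $\mathcal{G}_\ell$ has size $K^{(\ell)}_G=(\zeta_\ell+o(1))N=(\zeta_\ell+o(1))(1+o(1))K_\ell q_s^{(\ell)}$; dividing by $K_\ell$ and choosing the slack $\delta_1$ in Lemmas~\ref{lem:single} and~\ref{lem:edge} small enough relative to $\delta$ gives $|K^{(\ell)}_G/K_\ell-\zeta_\ell q_s^{(\ell)}|\le\delta$ for $K_\ell$ large, and the $\BIGO(\log N)=\BIGO(\log K_\ell)$ bound on the non-giant components transfers verbatim since $N=\Theta(K_\ell)$. A union bound over all failure events leaves probability $1-\BIGO(1/K_\ell)$, the polynomially small terms (from the Erd\H{o}s--R\'enyi comparison and the Poisson approximation of the binomial) dominating the exponentially small ones. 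The main obstacle is precisely the first step — rigorously controlling the dependence among edges of $\mathcal{G}_\ell$, both the correlations induced by balls shared across bins and the correlation between the event ``ball $j$ is a singleton ball'' and the edges incident to $j$. The cleanest route I see is to run a breadth-first exploration of $\mathcal{G}_\ell$ from a uniformly random singleton ball and show that its offspring distribution, over $\BIGO(\log K_\ell)$ generations, can be sandwiched between near-matching Poisson$(c_\ell^\star\pm o(1))$ branching processes; this simultaneously pins down the giant-component fraction and the logarithmic size of the small components, and is exactly the asymptotic device used for the analogous lemma in \cite{pedarsani2014phasecode}.
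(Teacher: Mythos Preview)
Your proposal is correct and follows essentially the same route as the paper: condition on the counts $K_s^{(\ell)}$ and $M_s^{(\ell)}$ from Lemmas~\ref{lem:single} and~\ref{lem:edge}, identify $\mathcal{G}_\ell$ with an Erd\H{o}s--R\'enyi graph of mean degree $c_\ell^\star=2M\nu_\ell/(K_\ell q_s^{(\ell)})$, and read off the giant-component fraction $\zeta_\ell$ and the $\BIGO(\log K_\ell)$ bound on the small components from the classical phase-transition theorem, then union-bound over the conditioning events.

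The one point of divergence is the justification of the Erd\H{o}s--R\'enyi reduction itself. The paper's proof, after conditioning on $K_s^{(\ell)}$ and $M_s^{(\ell)}$, simply asserts that ``by symmetry'' the $M_s^{(\ell)}$ edges are uniformly distributed over the $\binom{K_s^{(\ell)}}{2}$ possible pairs and hence $\mathcal{G}_\ell$ \emph{is} an Erd\H{o}s--R\'enyi random graph; no further discussion of edge dependence is given. You, by contrast, treat this as the principal technical step, noting the correlations induced by balls shared across bins and by conditioning on singleton status, and you propose a BFS exploration sandwiched between Poisson$(c_\ell^\star\pm o(1))$ branching processes to control it. Your formulation is more cautious and arguably more honest about what needs to be shown; the paper's symmetry claim relies on the exchangeability of balls under the random $d$-left-regular construction but does not spell out why exchangeability alone suffices for the Erd\H{o}s--R\'enyi conclusion. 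Either way, the downstream computation --- solving $\zeta_\ell=1-e^{-c_\ell^\star\zeta_\ell}$ and assembling the $1-\BIGO(1/K_\ell)$ probability bound via total probability --- is identical.
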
 

\begin{proof}
This result is a direct corollary of the asymptotic behavior of the Erdos-Renyi random graphs~\cite{erd6s1960evolution}, 
and we only give a brief proof here. First, we condition on the number of singleton balls that we find in the first stage, i.e., $K_s^{(\ell)}$ and the number of edges in $\mathcal{G}_\ell$, i.e., $M_s^{(\ell)}$. By symmetry, we know that the $M_s^{(\ell)}$ edges are uniformly chosen from the $K_s^{(\ell)}\choose 2$ possible edges. Therefore, the graph $\mathcal{G}_\ell$ is an Erdos-Renyi random graph. According to the results on the giant component of Erdos-Renyi random graphs, we know that if the limit
$$
\theta:=\lim_{K_s^{(\ell)}\rightarrow \infty} K_s^{(\ell)} \frac{M_s^{(\ell)}}{{{K_s^{(\ell)}}\choose 2}}>1,
$$
then with probability at least $1-\BIGO(1/K_s^{(\ell)})$, the size of the giant component of graph $\mathcal{G}_\ell$ is linear in $K_s^{(\ell)}$, and other connected components have sizes $\BIGO(\log(K_s^{(\ell)}))$. By (\ref{eq:num_single}) and (\ref{eq:num_strong}), we know that for any constant $\epsilon_1>0$, there exists a constant $\alpha_1>0$, such that, with probability at least $1-\BIGO(\exp(-\alpha_1 K_{\ell}))$, 
$$
K_s^{(\ell)} \in I_K = [(q_s^{(\ell)} - \epsilon_1)K_\ell, (q_s^{(\ell)} + \epsilon_1)K_\ell],
$$
and that, for any constant $\epsilon_2>0$, there exists a constant $\alpha_2>0$, such that, with probability at least $1-\BIGO(\exp(-\alpha_2 M))$,
$$
M_s^{(\ell)} \in I_M = [(\nu_\ell - \epsilon_2)M_\ell, (\nu_\ell + \epsilon_2)M_\ell].
$$
We also know that when $K_s^{(\ell)} \in I_K$ and $M_s^{(\ell)} \in I_M$ happen, the limit $\theta$ approaches $\frac{2M\nu_\ell}{K_{\ell}q_s^{(\ell)}}$. Let $A$ be the event that the size of the largest connected component (giant component) of $\mathcal{G}_\ell$, i.e., $K_G^{(\ell)}$ satisfies~\eqref{eq:init}, and other connected components in $\mathcal{G}_\ell$ are of sizes $\BIGO(\log(K_{\ell}))$. Then, according to the aforementioned property of Erdos-Renyi random graphs, conditioned on $K_s^{(\ell)} \in I_K$ and $M_s^{(\ell)} \in I_M$, we have
$$
\PROL{A\mid K_s^{(\ell)} \in I_K, M_s^{(\ell)} \in I_M} \ge 1-\BIGO(1/K_s^{(\ell)}).
$$
Then, we have
\begin{align*}
\PROL{\bar{A}} = &\PROL{\bar{A}\mid K_s^{(\ell)} \in I_K, M_s^{(\ell)} \in I_M}\PROL{K_s^{(\ell)} \in I_K, M_s^{(\ell)} \in I_M} \\
&+ \PROL{\bar{A}\mid K_s^{(\ell)} \notin I_K \text{ or } M_s^{(\ell)} \notin I_M}\PROL{K_s^{(\ell)} \notin I_K \text{ or } M_s^{(\ell)} \notin I_M} \\
\le & \BIGO(1/K_s^{(\ell)}) + \BIGO(\exp(-\alpha_1 K_{\ell})) + \BIGO(\exp(-\alpha_2 M)) \\
\le & \BIGO(1/K_s^{(\ell)}),
\end{align*}
which completes the proof.
\end{proof}

\subsection{Tree-like Assumption}\label{sec:tree}

By Lemma \ref{lem:giant}, we know that we can recover a constant fraction of the non-zero elements with probability $1-\BIGO(1/K_\ell)$. Then, we study the iterative decoding process. The analysis is based on density evolution, which is a common and powerful technique in coding theory. Similar to the density evolution analysis of many modern error-correcting codes~\cite{richardson2001capacity}, 
our derivation of density evolution is based on a tree-like assumption. Here, we state the tree-like assumption first and provide the results on the probability that the tree-like assumption holds.

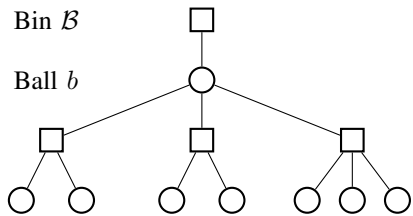
\begin{figure}[!h]
\centering
\begin{tikzpicture}[node distance=0.7cm,>=stealth',bend angle=45,auto]
\tikzstyle{ball}=[circle,thick,draw=black,fill=white,minimum size=3mm]
\tikzstyle{bin}=[rectangle,thick,draw=black,fill=white,minimum size=3mm]
\begin{scope}		
\node [bin] (C1) at (0.0, 0.0) {};
\node [text width=4cm] at (-0.5, 0.0) {Bin $\mathcal{B}$};
\node [ball] (V1) at (0.0, -0.8) {};
\node [text width=4cm] at (-0.5,-0.8) {Ball $b$};
\node [bin] (C2) at (-2.0, -1.6) {};
\node [bin] (C3) at (0.0, -1.6) {};
\node [bin] (C4) at (2.0, -1.6) {};
\node [ball] (V2) at (-2.4, -2.4) {};
\node [ball] (V3) at (-1.6, -2.4) {};
\node [ball] (V4) at (-0.4, -2.4) {};
\node [ball] (V5) at (0.4, -2.4) {};
\node [ball] (V6) at (1.4, -2.4) {};
\node [ball] (V7) at (2.0, -2.4) {};
\node [ball] (V8) at (2.6, -2.4) {};
\path (V1) edge (C1);
\path (V1) edge (C2);
\path (V1) edge (C3);
\path (V1) edge (C4);
\path (C2) edge (V2);
\path (C2) edge (V3);
\path (C3) edge (V4);
\path (C3) edge (V5);
\path (C4) edge (V6);
\path (C4) edge (V7);
\path (C4) edge (V8);
\end{scope}
\end{tikzpicture}
\caption{\small{Level-2 neighborhood of edge $(b, \mathcal{B})$.}}
\label{fig:density}
\end{figure}

As we have mentioned, the association between the balls in color $\ell$ (non-zero elements in $\vect{\beta}^{(\ell)}$) and the bins can be represented by a $d$-left regular bipartite graph. We label the edges by an ordered pair of a ball $b$ and a bin $\mathcal{B}$, denoted by $e=(b, \mathcal{B})$. We define the \emph{level}-$C^*$ neighborhood of $e$, denoted by $N_e^{C^*}$ as the subgraph of all the edges and nodes on paths with length less than or equal to $C^*$, which start from $b$ and the first edge of the paths are not $e$~\cite{pedarsani2014phasecode}. We have the following results on the probability that $N_e^{C^*}$ is a tree, or equivalently, cycle-free, for a constant $C^*$.
\begin{lemma}\label{lem:tree}
\cite{pedarsani2014phasecode} For a fixed constant $C^*$, $N_e^{2C^*}$ is a tree with probability at least $1-\BIGO(\log(K_\ell)^{C^*}/K_\ell)$.
\end{lemma}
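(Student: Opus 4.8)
The claim is the classical statement that, in a sparse random bipartite graph, the depth-$\BIGO(1)$ neighbourhood of an edge is cycle-free with high probability; my plan is to prove it by a first-moment argument over short closed structures. I would observe first that $N_e^{2C^*}$ is connected by construction (every node it contains lies on a path emanating from $b$), so it fails to be a tree precisely when it contains a cycle. I would then record the elementary equivalence: $N_e^{2C^*}$ contains a cycle if and only if there exist two \emph{distinct} paths $P\neq P'$ in $G$ from $b$ to a common vertex $v$, each of length at most $2C^*$ and neither starting with the edge $e$. Indeed, if such $P,P'$ exist then $P,P'\subseteq N_e^{2C^*}$ and $P\cup P'$ contains a cycle; conversely, a cycle inside $N_e^{2C^*}$ forces two $b$-rooted paths of length $\le 2C^*$ reaching some cycle vertex with different prefixes.

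The point of this reformulation is that $S:=P\cup P'$ is a connected subgraph rooted at $b$ with at least one independent cycle and at most $4C^*$ edges, hence at most $4C^*$ vertices, so there are only finitely many (a number depending on $C^*$) isomorphism types of such bipartite subgraphs. I would then union-bound over all embeddings of a fixed type $S$ with the root mapped to $b$: the number of ways to place the non-root ball-vertices and the bin-vertices is at most $K_\ell^{|V_{\mathrm{ball}}(S)|-1}\,M^{|V_{\mathrm{bin}}(S)|}$, while — since distinct balls choose their $d$ incident bins independently of one another — the probability that all edges of a fixed embedding are present factorizes over balls and is $\BIGO(M^{-|E(S)|})$. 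For a connected graph with at least one cycle, $|E(S)|\ge|V(S)|=|V_{\mathrm{ball}}(S)|+|V_{\mathrm{bin}}(S)|$, so the expected number of copies of $S$ through $b$ is $\BIGO\!\big(K_\ell^{|V_{\mathrm{ball}}(S)|-1}M^{-|V_{\mathrm{ball}}(S)|}\big)=\BIGO\!\big((K_\ell/M)^{|V_{\mathrm{ball}}(S)|-1}/M\big)=\BIGO(1/K)$, using that $|V_{\mathrm{ball}}(S)|\le 4C^*$ is bounded and $K_\ell=\Theta(M)=\Theta(K)$. Summing over the finitely many types and applying Markov's inequality gives $\PROL{N_e^{2C^*}\text{ is not a tree}}=\BIGO(1/K)=\BIGO(1/K_\ell)$, which in fact beats the stated $\BIGO(\log(K_\ell)^{C^*}/K_\ell)$; the weaker polylog form also follows from a breadth-first exploration argument, since each bin has $\mathrm{Binomial}(K_\ell,d/M)\approx\mathrm{Poisson}(\lambda_\ell)$ children, so off a $1/\poly(K_\ell)$-probability event $|N_e^{2C^*}|=\BIGO((\log K_\ell)^{C^*})$ and a collision during the exploration has probability $\BIGO(|N_e^{2C^*}|/K)$ per revealed edge.

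I do not expect a serious obstacle here; the two things that need care are (i) getting the combinatorial equivalence of the first paragraph exactly right — in particular tracking the ``first edge is not $e$'' restriction and the fact that two distinct $b$-to-$v$ paths always create a cycle in their union — and (ii) the observation that makes the bound decay like $1/K$ rather than being $\Theta(1)$, namely the strict edge surplus $|E(S)|\ge|V(S)|$ for any subgraph containing a cycle, which is exactly what cancels one factor of $K$ against the $K_\ell^{|V_{\mathrm{ball}}(S)|-1}$ embedding count. The precise form stated is taken from \cite{pedarsani2014phasecode}, and I would cite it for the remaining details.
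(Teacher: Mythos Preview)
Your proposal is correct. Note, however, that the paper does not actually prove this lemma: it is quoted verbatim from \cite{pedarsani2014phasecode} and invoked as a black box, so there is no ``paper's own proof'' to compare against. Your argument is therefore strictly more than what the paper supplies.

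A couple of remarks on content. Your first-moment bound over connected rooted subgraphs with edge surplus at least one is the right idea, and as you note it actually yields $\BIGO(1/K_\ell)$ rather than the weaker $\BIGO(\log(K_\ell)^{C^*}/K_\ell)$ stated. The polylog slack in the statement is the typical artifact of the alternative BFS-exploration proof (your last paragraph): one first bounds the branching of the exploration tree by $\BIGO(\log K_\ell)$ per bin-node via a Poisson tail bound, and then union-bounds the collision probability over the $\BIGO((\log K_\ell)^{C^*})$ revealed edges. The sharper subgraph-counting route avoids this detour. The one place to be careful in writing it up is step (i) as you flag: the cleanest way to formalize ``cycle in $N_e^{2C^*}$ $\Leftrightarrow$ two distinct short $b$-rooted paths to a common vertex'' is to run BFS from $b$ (excluding $e$) and observe that a cycle is present iff some revealed edge hits an already-discovered vertex, which immediately gives the two paths. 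Your factorization of the edge-probability over balls is fine since $d$ is constant, so the probability that a fixed ball is incident to $k\le d$ prescribed bins is $\BIGO(M^{-k})$ regardless of dependence among that ball's $d$ choices.
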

We conduct the density evolution analysis conditioned on the event that $N_e^{2C^*}$ is a tree for an edge $e$ which is chosen from the $K_\ell d$ edges uniformly at random. Then, we take the complementary event into consideration and complete the analysis.

\subsection{Density Evolution}
Recall that in the first iteration, we find all the singletons, and in the second iteration, we find the strong doubletons and form the giant component. Let $p_j^{(\ell)}$ be the probability that at the $j$th iteration of the learning algorithm, a ball in color $\ell$, which is chosen from the $K_\ell$ balls uniformly at random, is not recovered, $j\ge 2$. Here, $p_2^{(\ell)}$ corresponds to the probability that after the second iteration, a randomly chosen ball in color $\ell$ is not in the giant component. According to the previous section, we know that by choosing parameters which satisfy (\ref{eq:con_comp}), we have $p_2^{(\ell)}=\frac{K_{\ell}-K_G^{(\ell)}}{K_{\ell}}=\Theta(1)$ with probability $1-\BIGO(1/K_\ell)$. Now we analyze the relationship between $p_{j+1}^{(\ell)}$ and $p_j^{(\ell)}$ for $j\ge 2$ .

Consider the iterative decoding process as a \emph{message passing} process. First, we know that at iteration $j+1$, a ball in color $\ell$ passes a message to a bin through an edge claiming that it is colored, if and only if at least one of the other $d-1$ neighborhood bins contains a resolvable multiton in color $\ell$. Second, a sub-bin in color $\ell$ becomes a resolvable multiton if and only if all the other balls in this sub-bin are colored. This message passing process is illustrated in Figure \ref{fig:density}. Under the tree-like assumption, the messages passed among the balls and bins are independent, we have
$$
p_{j+1}^{(\ell)}=(1-\sum_{i=2}^\infty \rho_i^{(\ell)}(1-p_j^{(\ell)})^{i-1} )^{d-1},
$$
which gives us
\begin{equation}\label{eq:density}
p_{j+1}^{(\ell)}=(1-Q_{\ell}(e^{-\lambda_{\ell} p_j^{(\ell)}}-e^{-\lambda_{\ell}}))^{d-1}.
\end{equation}
As we can see, the major difference between the density evolution of the Mixed-Coloring algorithm and the PhaseCode algorithm in~\cite{pedarsani2014phasecode} (for compressive phase retrieval via sparse-graph codes) is that there is a constant probability $Q_{\ell}$ that a bin has a sub-bin in color $\ell$. 

Next, we show that after a constant number of iterations, $p_j^{(\ell)}$ can be arbitrarily small.
\begin{lemma}\label{lem:errorfloor}
If we choose parameters satisfying
\begin{equation}\label{eq:derive_cond}
(d-1)Q_{\ell}\lambda_{\ell} e^{-\lambda_{\ell} t}>1,
\end{equation}
then for any constant $\delta>0$, there exists a constant $T$, such that $p_T^{(\ell)}<\delta$.
\end{lemma}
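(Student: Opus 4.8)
The plan is to read Lemma~\ref{lem:errorfloor} as a deterministic statement about the scalar recursion~\eqref{eq:density}: writing $f(p):=\bigl(1-Q_{\ell}(e^{-\lambda_{\ell}p}-e^{-\lambda_{\ell}})\bigr)^{d-1}$, we have $p_{j+1}^{(\ell)}=f(p_j^{(\ell)})$ for $j\ge 2$, started from the constant $p_2^{(\ell)}=\Theta(1)<1$ produced by Lemma~\ref{lem:giant}. Everything reduces to the behaviour of the one-dimensional map $f$ on $[0,1]$, so the first step is to record its elementary structure: $f$ is continuous; it maps $[0,1]$ into $(0,1]$; and it is strictly increasing, since $p\mapsto e^{-\lambda_\ell p}$ is decreasing, hence $p\mapsto 1-Q_\ell(e^{-\lambda_\ell p}-e^{-\lambda_\ell})$ is increasing and stays in $(0,1]$, and raising to the power $d-1\ge 1$ preserves monotonicity. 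Two fixed points are visible for free: $f(1)=1$, and $f(0)=\bigl(1-Q_\ell(1-e^{-\lambda_\ell})\bigr)^{d-1}>0$, so $0$ is \emph{not} a fixed point. The latter fact is the intrinsic ``error floor'': the iterates cannot converge to $0$, which is exactly why the statement asks only for $p_T^{(\ell)}<\delta$ with the design parameters $d,R,V,M$ (equivalently $Q_\ell$ and $\lambda_\ell$) chosen in a $\delta$-dependent way.

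The crux is to show that, under condition~\eqref{eq:derive_cond}, the graph of $f$ lies strictly below the diagonal on the whole interval $(\delta,p_2^{(\ell)}]$ --- equivalently, $f$ has no fixed point in $[\delta,p_2^{(\ell)}]$. I would establish this in two pieces. Near $p=1$ a first-order expansion gives $f(p)-p\approx (f'(1)-1)(p-1)$ with $f'(1)=(d-1)Q_\ell\lambda_\ell e^{-\lambda_\ell}$, so $f(p)<p$ just below $1$ precisely when $f'(1)>1$; condition~\eqref{eq:derive_cond} (whose left-hand side is smallest, and hence binding, at the largest value of $t$ in the relevant range) is exactly this slope condition, and more generally it forces the derivative term $Q_\ell\lambda_\ell e^{-\lambda_\ell(\cdot)}$ to dominate throughout the range. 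Choosing the parameters so that $Q_\ell(1-e^{-\lambda_\ell})$ is close enough to $1$ then pushes the small nontrivial fixed point (the error floor) below $\delta$; this is where the ``$C=\BIGO(\log(1/p^*))$'' oversampling dependence is hidden. One also needs $p_2^{(\ell)}$ to start strictly inside the region where $f(p)<p$, i.e.\ below the next fixed point above the floor, which is guaranteed by Lemma~\ref{lem:giant} in the same parameter regime (and if $p_2^{(\ell)}<\delta$ already, there is nothing to prove). Verifying that there is genuinely no \emph{spurious} additional fixed point of $f$ in $(\delta,p_2^{(\ell)}]$ --- the shape of $f$ is not obviously convex --- is the main technical obstacle, and is precisely what condition~\eqref{eq:derive_cond} together with the choices of Lemma~\ref{lem:giant} is designed to rule out.

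Granted the inequality $f(p)<p$ on $(\delta,p_2^{(\ell)}]$, the conclusion follows by a short monotonicity-and-compactness argument. Since $f$ is increasing, $p_3^{(\ell)}=f(p_2^{(\ell)})<p_2^{(\ell)}$, and inductively the sequence $\{p_j^{(\ell)}\}$ is strictly decreasing as long as it stays above $\delta$. If it never dropped below $\delta$ it would be decreasing and bounded below, hence convergent to some $p_\infty\in[\delta,p_2^{(\ell)}]$ with $f(p_\infty)=p_\infty$, contradicting the absence of fixed points there; so $p_T^{(\ell)}<\delta$ for some finite $T$. Finally $T$ is a genuine constant: $p-f(p)$ is continuous and strictly positive on the compact set $[\delta,p_2^{(\ell)}]$, hence bounded below by some $\eta>0$, so each iteration decreases $p^{(\ell)}$ by at least $\eta$ and $T\le \lceil (p_2^{(\ell)}-\delta)/\eta\rceil+1$, a quantity depending only on $\delta$ and the constant design parameters, not on $n$ or $K$.
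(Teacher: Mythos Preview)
Your overall approach matches the paper's: both analyze the one-dimensional map $f(p)=(1-Q_\ell(e^{-\lambda_\ell p}-e^{-\lambda_\ell}))^{d-1}$, observe that $f(1)=1$, $f(0)>0$, $f$ is increasing, and that condition~(\ref{eq:derive_cond}) is precisely $f'(1)>1$, forcing $f(p)<p$ just below $1$ and hence the existence of a fixed point $p_\ell^*\in(0,1)$. Your compactness argument for the finiteness of $T$ is in fact more explicit than the paper's, which simply defers this step to~\cite{pedarsani2014phasecode}.

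There is one concrete issue in how you propose to drive the error floor below $\delta$. You suggest making $Q_\ell(1-e^{-\lambda_\ell})$ close to $1$; but this requires $\lambda_\ell\to\infty$ (and $Q_\ell\to 1$), and then $(d-1)Q_\ell\lambda_\ell e^{-\lambda_\ell}\to 0$ for any fixed $d$, so condition~(\ref{eq:derive_cond}) is destroyed rather than maintained. The paper's mechanism is different and is the missing ingredient: keep $\lambda_\ell$ fixed and increase $d$ to $\tilde{C}d$ for a constant $\tilde{C}>1$. This only strengthens~(\ref{eq:derive_cond}) (its left side grows linearly in $d$), and the paper argues that the new largest fixed point is bounded above by $(p_\ell^*)^{\tilde{C}}$, which can be pushed below any prescribed $\delta$. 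This is also what yields $d=\BIGO(\log(1/\delta))$ and the $C=\BIGO(\log(1/p^*))$ oversampling dependence you allude to. Once this is in place, your worry about ``spurious intermediate fixed points'' disappears for free: $p_\ell^*$ is by definition the \emph{largest} fixed point of $f$ in $(0,1)$, so once $p_\ell^*<\delta$ there is no fixed point anywhere in $[\delta,p_2^{(\ell)}]\subset(p_\ell^*,1)$, and your monotonicity/compactness argument applies verbatim.
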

\begin{proof}
Let $f_{\ell}(t)=(1-Q_{\ell}(e^{-\lambda_{\ell} t}-e^{-\lambda_{\ell}}))^{d-1}$, then we have $p_{j+1}^{(\ell)}=f_{\ell}(p_j^{(\ell)})$. It is easy to see that $f_{\ell}(1)=1$, $f_{\ell}(0)>0$, and $f_{\ell}$ is a monotonically increasing function. We also have
$$
f_{\ell}^\prime(t) = (d-1)Q_{\ell}\lambda_{\ell} e^{-\lambda_{\ell} t}(1-Q_{\ell}(e^{-\lambda_{\ell} t}-e^{-\lambda_{\ell}}))^{d-2}.
$$
We know that if there is
\begin{equation}\label{eq:derivitive}
f_{\ell}^\prime(1) = (d-1)Q_{\ell}\lambda_{\ell} e^{-\lambda_{\ell} t}>1,
\end{equation}
then there exists at least one fixed point $t\in(0,1)$ such that $f_{\ell}(t)=t$. We use $p_{\ell}^*$ to represent the largest fixed point of $f_{\ell}(t)$ in $(0,1)$. 
Now we argue that the fixed point can be made arbitrarily small by choosing proper parameters. Suppose that for a certain set of parameters $\lambda_{\ell}$ and $d$, the fixed point is $p_{\ell}^*$, then if we keep $\lambda_{\ell}$ and increase $d$ to $\tilde{C}d$, where $\tilde{C}>1$ is a constant, then we can see that the new fixed point is upper bounded by $(p_{\ell}^*)^{\tilde{C}}$, and in this way, the fixed point can be made an arbitrarily small constant.
As shown in \cite{pedarsani2014phasecode}, as long as we can choose parameters to make the fixed point $p_{\ell}^*<\delta/2$, then, there exists a constant number of iterations $T$, depending on $\delta$, such that $p_T^{(\ell)}<\delta$.

Then, we investigate how the sample complexity depends on $p_{\ell}^*$. First, since $p_{\ell}^*$ is a fixed point of the iteration~\eqref{eq:density}, we have
$$
(1-Q_{\ell}(e^{-\lambda_{\ell} p_{\ell}^*}-e^{-\lambda_{\ell}}))^{d-1} = p_{\ell}^*.
$$
Since $p_{\ell}^*$ is usually very small, we use the approximation $e^{-\lambda_{\ell} p_{\ell}^*} \approx 1$, and thus we have
$$
(1-Q_{\ell}(1-e^{-\lambda_{\ell}}))^{d-1} \approx p_{\ell}^*,
$$
which gives us $d =\BIGO( \log(1/p_{\ell}^*) )$. Further, since we keep $\lambda_\ell = \frac{K_\ell d}{M}$ as a constant, we know that $M = \BIGO( \log(1/p_{\ell}^*) )$ as  a function of $p_{\ell}^*$.

\end{proof}

Then, we can prove the following lemma showing that the number of uncolored balls in color $\ell$ is concentrated around $K_{\ell}p_T^{(\ell)}$ with high probability.
\begin{lemma}\label{lem:concentration}
Let $Z_{\ell}$ be the number of uncolored balls in color $\ell$ after $T$ iterations. Then for any $\delta>0$, there exists constant $c_1$, such that when conditioned on the event that $p_2^{(\ell)}=\Theta(1)$, and $K_\ell$ is large enough,
\begin{equation}\label{eq:finalexp}
\ABS{\EXP{Z_{\ell}}-K_{\ell}p_T^{(\ell)}}<K_{\ell}\delta/2,
\end{equation}
\begin{equation}\label{eq:finalconcentration}
\PRO{\ABS{Z_{\ell}-K_{\ell}p_T^{(\ell)}}>K_{\ell}\delta}<2\exp\{-c_1 \delta^2 K_{\ell}^{1/(4T+1)}\}.
\end{equation}
\end{lemma}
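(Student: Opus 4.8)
The plan is to establish the two displayed bounds separately, working throughout conditionally on the high-probability event $\{p_2^{(\ell)}=\Theta(1)\}$ guaranteed by \lemmaref{lem:giant} (probability $1-\BIGO(1/K_\ell)$), and on the events $K_s^{(\ell)}\in I_K$, $M_s^{(\ell)}\in I_M$ from \lemmaref{lem:single} and \lemmaref{lem:edge}. For the expectation bound \eqref{eq:finalexp} I would use linearity of expectation together with the tree-like analysis already in place. Writing $Z_\ell=\sum_{b}\mathbf{1}\{b\text{ uncolored after }T\text{ iterations}\}$ over the $K_\ell$ balls of color $\ell$, pick a ball $b$ uniformly at random and let $e$ be an incident edge; on the event $\mathcal{T}_b$ that $N_e^{2C^*}$ is a tree for a constant $C^*$ depending only on $T$, the message-passing recursion of Section~\ref{sec:tree}--\ref{sec:prf_init} is exact, so $\PROL{b\text{ uncolored}\mid\mathcal{T}_b}$ equals the density-evolution value $p_T^{(\ell)}$ of \eqref{eq:density}, while \lemmaref{lem:tree} gives $\PROL{\mathcal{T}_b^c}=\BIGO(\log(K_\ell)^{C^*}/K_\ell)=o(1)$. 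Hence $\ABSL{\PROL{b\text{ uncolored}}-p_T^{(\ell)}}\le\PROL{\mathcal{T}_b^c}<\delta/2$ once $K_\ell$ is large enough, and summing over the $K_\ell$ balls yields \eqref{eq:finalexp}.

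For the concentration bound \eqref{eq:finalconcentration} I would set up a Doob martingale on all the randomness that determines $Z_\ell$: the $d$ i.i.d.\ uniform bin-assignments of each of the $K_\ell$ balls (a total of $K_\ell d=\Theta(K)$ choices) together with the $m=\Theta(K)$ latent labels $\ell_i$ that decide which sub-bins acquire color $\ell$. Exposing these in groups one at a time gives a martingale $X_0=\EXPL{Z_\ell},X_1,X_2,\dots$ ending at $Z_\ell$. The martingale difference at a given step is bounded by the number of balls whose $T$-round decoding trajectory can change when that single group is re-randomized, i.e.\ by the size of a depth-$\Theta(T)$ neighborhood of the corresponding edge or bin in the bipartite graph. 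On the tree-like event such a neighborhood has size only $(d\cdot\deg_{\max})^{\Theta(T)}$, but $\deg_{\max}$ and the cycle-free property hold only with high probability, so no uniform constant Lipschitz bound is available — this is the crux of the difficulty, and it is precisely what forces the concentration exponent to be a small power of $K_\ell$ rather than $K_\ell$ itself.

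To get around this I would run the argument by induction on the iteration index $j$, proving that the number of color-$\ell$ balls still uncolored after $j$ rounds concentrates around its mean with deviation probability at most $2\exp\{-c\,\delta^2K_\ell^{1/(4j+1)}\}$ for a suitable constant $c$, the base case $j=2$ being the Erdos-Renyi giant-component concentration already obtained in \lemmaref{lem:edge}--\lemmaref{lem:giant}. In the inductive step, the round-$j$ concentration shows that, off an event of probability $\exp\{-\Theta(\delta^2K_\ell^{1/(4j+1)})\}$, all the depth-$\Theta(1)$ neighborhoods relevant to round $j+1$ are \emph{simultaneously} of size $\BIGO(K_\ell^{1/(4(j+1)+1)})$ (using the earlier Poisson/degree estimates for the random $d$-left-regular graph); restricting to this good set, the round-$(j+1)$ exposure martingale has bounded differences of that order, and Azuma--Hoeffding gives the round-$(j+1)$ bound, the factor of $4$ lost in the reciprocal exponent coming from the squared Lipschitz constant. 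Taking $j=T$ yields \eqref{eq:finalconcentration}, and combining it with \eqref{eq:finalexp} via $\ABSL{Z_\ell-K_\ell p_T^{(\ell)}}\le\ABSL{Z_\ell-\EXPL{Z_\ell}}+\ABSL{\EXPL{Z_\ell}-K_\ell p_T^{(\ell)}}$ and a rescaling of $\delta$ completes the proof. The main obstacle, as indicated, is making the bounded-difference estimate rigorous uniformly over the low-probability ``bad'' configurations — equivalently, checking that the $d$-left-regular bipartite graph meets the hypotheses of the general concentration-of-random-processes results from modern coding theory — which is exactly the step responsible for the weaker $K_\ell^{1/(4T+1)}$ rate.
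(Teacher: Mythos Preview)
Your high-level plan---tree-like analysis for \eqref{eq:finalexp}, Doob martingale plus Azuma for \eqref{eq:finalconcentration}---matches the paper exactly; the paper gives no detailed proof here, stating only that the argument is identical to that in \cite{pedarsani2014phasecode} via precisely these tools. Your treatment of the expectation bound is correct and complete.

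For \eqref{eq:finalconcentration}, however, your induction on the iteration index $j$ has a gap as written. The step ``round-$j$ concentration shows that all depth-$\Theta(1)$ neighborhoods relevant to round $j+1$ are simultaneously of size $\BIGO(K_\ell^{1/(4(j+1)+1)})$'' conflates two unrelated objects: the round-$j$ concentration is a statement about the \emph{count of uncolored balls}, whereas the neighborhood sizes that control the martingale increments are purely structural properties of the random $d$-left-regular bipartite graph, fixed before any peeling occurs and independent of which balls happen to have been colored. Concentration of the uncolored count at round $j$ says nothing about those graph-neighborhood sizes, so the inductive link you need is missing. The argument in \cite{pedarsani2014phasecode} (in the Richardson--Urbanke tradition) instead runs a \emph{single} edge-exposure Doob martingale for the full $T$-round process at once: one reveals the $\Theta(K_\ell)$ edge endpoints sequentially, bounds each martingale difference by the size of the depth-$2T$ neighborhood of the just-revealed edge in the bipartite graph, and the weak exponent $K_\ell^{1/(4T+1)}$ then comes from balancing the Azuma bound on the event that every such neighborhood stays below a chosen threshold against the probability that some neighborhood exceeds it---not from any iteration-by-iteration degradation.
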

The proof of Lemma \ref{lem:concentration} is the same as in \cite{pedarsani2014phasecode}, and uses Doob's martingale argument and Azuma's concentration bound. We should also notice that the event that the tree-like assumption does not hold is already considered in (\ref{eq:finalexp}). Now combining Lemmas \ref{lem:giant}, \ref{lem:errorfloor}, and \ref{lem:concentration}, we have shown that for a specific $\ell\in[L]$, there exists proper parameters of the algorithm such that after a constant number of iterations, the Mixed-Coloring algorithm can recover an arbitrarily large fraction of the balls in color $\ell$ with probability $1-\BIGO(1/K_{\ell})$. Since $L$ is a constant and $K_\ell = \Theta(K)$, the results above implies that for an arbitrarily small constant $p^*\in(0,1)$,
$$
\mathbb{P} \big\{\ABSL{\SUPP{\hat{\vect{\beta}}^{(\ell)}}} \ge (1-p^*)\ABSL{\SUPP{\vect{\beta}^{(\ell)}}} \big\} \ge 1- \BIGO(1/K).
$$
Then, we turn to the first and the third properties in Theorem~\ref{thm:main_noiseless}. According to our ratio test scheme, as long as we have a singleton, we find the exact location and value of the non-zero element, and thus our algorithm has no false discovery. As for the element-wise recovery, one can see that due to the use of $d$-left regular random bipartite graph (each left node is connected to $d$ right nodes uniformly at random), the recovered $(1-p^*)$ fraction of the support is also uniformly distributed on the support of $\vect{\beta}^{(\ell)}$. Thus, for each $j\in\SUPP{\vect{\beta}^{(\ell)}}$, $\PROL{\hat{\beta}^{(\ell)}_j = \beta^{(\ell)}_j~|~\ABSL{\SUPP{\hat{\vect{\beta}}^{(\ell)}}} \ge (1-p^*)\ABSL{\SUPP{\vect{\beta}^{(\ell)}}} }\ge 1-p^*$. Then, by total law of probability, we have
\begin{equation}\label{eq:element_wise}
\begin{aligned}
\PROL{\hat{\beta}^{(\ell)}_j \neq \beta^{(\ell)}_j} =& \PROL{\hat{\beta}^{(\ell)}_j \neq \beta^{(\ell)}_j~|~\ABSL{\SUPP{\hat{\vect{\beta}}^{(\ell)}}} \ge (1-p^*)\ABSL{\SUPP{\vect{\beta}^{(\ell)}}} } \PROL{\ABSL{\SUPP{\hat{\vect{\beta}}^{(\ell)}}} \ge (1-p^*)\ABSL{\SUPP{\vect{\beta}^{(\ell)}}}} \\
&+\PROL{\hat{\beta}^{(\ell)}_j \neq \beta^{(\ell)}_j~|~\ABSL{\SUPP{\hat{\vect{\beta}}^{(\ell)}}} < (1-p^*)\ABSL{\SUPP{\vect{\beta}^{(\ell)}}} } \PROL{\ABSL{\SUPP{\hat{\vect{\beta}}^{(\ell)}}} < (1-p^*)\ABSL{\SUPP{\vect{\beta}^{(\ell)}}}} \\
\le& p^* + \BIGO(1/K).
\end{aligned}
\end{equation}
Thus, we have proved the three properties in Theorem~\ref{thm:main_noiseless}.

\subsection{Time Complexity}\label{sec:prf_time_complexity}
In this section, we analyze the time complexity of the algorithm. First, note that there are $M=\Theta(K)$ bins and each bin has a constant number of sub-bins. Since refining the measurements of each bin takes $\Theta(1)$ operations, the time complexity of refining measurements is $\Theta(K)$. Next, to find all the singletons, we need to check all the colored sub-bins, and checking each sub-bin takes $\Theta(1)$ operations, the time complexity of this stage is $\Theta(K)$. In the third stage, we find all the strong doubletons. We know that there are $\Theta(K)$ singleton balls and for each singleton ball, there are $d$ bins connected to it. For each of the bins, we subtract the measurements contributed by the singleton ball from the refined measurements in the sub-bins, and do the ratio test to see if it is a strong doubleton. Therefore, processing each bin takes $\Theta(1)$ operations and since $d$ is also a constant, the time complexity of finding strong doubletons is also $\Theta(K)$. Then, we get the graph with $\Theta(K)$ nodes and $\Theta(K)$ edges, corresponding to the singleton balls and strong doubletons, respectively. Using breadth-first search algorithm, the time complexity of finding the connected components is $\Theta(K)$.
In the last stage, we iteratively find other uncolored balls. For each unprocessed sub-bin, since we do not know the color of the sub-bin, there are $L$ possible remaining measurements. Each time when we find a new ball, we update at most $dV$ remaining measurements and do the ratio test. Therefore, it takes $\Theta(1)$ operations when coloring a new ball. Since there are $\Theta(K)$ uncolored balls after finding the giant components, the time complexity of the last stage is also $\Theta(K)$. Thus, we have shown that the time complexity of Mixed-Coloring algorithm is $\Theta(K)$, which completes the proof of Theorem \ref{thm:main_noiseless}.

\section{Computing the Constants in the Sample Complexity}\label{sec:constant}
In this section, we give exact constants in the sample complexity results. For simplicity, we assume that $K_{\ell}=K/L$ and $q_{\ell}=1/L$ for all $\ell\in[L]$. We let $c:=M/K$, and thus we have $\lambda_{\ell}=\frac{K_{\ell}d}{M} =\frac{d}{Lc}$. We analyze the minimum number of measurements that we need to reach a certain reliability target. More precisely, we set the maximum error floor to be $p_{\max}^*$, and numerically calculate the error floor for different values of $d$, $c$, $R$, and $V$. Then, we minimize the number of total measurements, which is proportional to $(2R+V)c$ with the constraint that the error floor $p^*\le p_{\max}^*$. As we have shown in previous parts, the parameters should also satisfy (\ref{eq:con_comp}) and (\ref{eq:derive_cond}). We know that if (\ref{eq:con_comp}) is satisfied, when $K$ is large enough, there should be a giant component with size linear in $K$ for each color, where $\theta>1$ is a threshold that we can choose. Therefore, we select optimal parameters with three constraints, which are (\ref{eq:derive_cond}), (\ref{eq:con_comp}), and $p^*\le p_{\max}^*$.

\begin{table*}[!t]
\caption{Constants in the results of sample complexity.}\label{tab:const}
\centering
\begin{tabular}{c|c|cccccccc}
\hline
\multirow{6}{0.08\linewidth}{$L=2$} & $d$  & 11 & 12 & 13 & 14 & \textbf{15} & 16 & 17 & 18 \\
& $p^*/10^{-6}$  & 6.7 & 8.7 & 1.9 & 3.1 & \textbf{5.1} & 1.6 & 0.5 & 7.4 \\ 
& $M/K$  & 2.95 & 3.17 & 3.23 & 3.46 & \textbf{3.71} & 3.78 & 3.86 & 4.37 \\
& $R$  & 4 & 4 & 4 & 3 & \textbf{3} & 3 & 3 & 3\\
& $V$ & 4 & 3 & 3 & 4 & \textbf{3} & 3 & 3 & 2 \\
& $m/K$ & 35.4 & 34.87 & 35.53 & 34.6 & \textbf{33.39} & 34.02 & 34.74 & 34.96\\
\hline
\multirow{6}{0.08\linewidth}{$L=3$} & $d$  & 11 & 12 & 13 & 14 & \textbf{15} & 16 & 17 & 18\\
& $p^*/10^{-6}$  & 4.4 & 5.2 & 2.7 & 9.2 & \textbf{8.8} & 2.8 & 6.2 & 2.3\\ 
& $M/K$  & 1.94 & 2.08 & 2.17 & 2.39 & \textbf{2.52} & 2.56 & 2.76 & 2.81\\
& $R$ & 7 & 6 & 6 & 5 & \textbf{5} & 5 & 5 & 5\\
& $V$  & 7 & 7 & 6 & 6 & \textbf{5} & 5 & 4 & 4\\
& $m/K$ & 40.74 & 39.52 & 39.06 & 38.24 & \textbf{37.80} & 38.4 & 38.64 & 39.34\\
\hline
\multirow{6}{0.08\linewidth}{$L=4$} & $d$ & 11 & 12 & \textbf{13} & 14 & 15 & 16 & 17 & 18\\
& $p^*/10^{-6}$ & 7.8 & 8.7 & \textbf{8.1} & 5.6 & 4.2 & 3.3 & 4.0 & 5.0\\ 
& $M/K$ & 1.48 & 1.59 & \textbf{1.68} & 1.76 & 1.85 & 1.93 & 2.04 & 2.16\\
& $R$  & 9 & 9 & \textbf{8} & 8 & 7 & 7 & 7 & 6\\
& $V$ & 11 & 8 & \textbf{8} & 7 & 8 & 7 & 6 & 7\\
& $m/K$ & 42.92 & 41.34 & \textbf{40.32} & 40.48 & 40.7 & 40.53 & 40.8 & 41.04\\
\hline
\end{tabular}
\end{table*}

The results of the numerical calculation are shown in Table \ref{tab:const}. In these experiments, we set $p_{\max}^*=10^{-5}$, $\theta=2$, and we fix the left degree $d$ and choose different values of $c$, $R$, and $V$ to minimize the number of measurements with the three constraints. Then we compare the optimal number of measurements over different choices of $d$ and find the optimal $d$. As we can see, to reach the same reliability level, for $L=2,3,4$, the optimal number of measurements we need is $33.39K$, $37.80K$, and $40.32K$, respectively. The number of measurements we need only increases slightly with $L$, and the optimal $d$ is around 13 and 15.
\section{Proof of Theorem~\ref{thm:main_noisy}}\label{sec:prf_noisy}
In this section, we analyze the performance of the Robust Mixed-Coloring algorithm and prove Theorem~\ref{thm:main_noisy}. Recall that the overall structure of the Robust Mixed-Coloring algorithm is the same as its noiseless counterpart. Suppose that one can always perfectly find the consistent sets of measurements, and the correct location and value of the non-zero elements, then, the recovery guarantee in the noisy setting will be exactly the same as in the noiseless setting. Further, in the noisy setting, finding the correct consistent sets of measurements, and the correct location and value of the non-zero elements relies on the success of two events: 1) the EM-based algorithm has to always find the correct denoised measurements, and 2) the verification procedure has to identify all the singletons, and cannot misclassify other consistent sets as singletons.

We provide details below. We define error events: $E_\ell^{1}$, as the event that there exists one incidence that the EM algorithm does not find the correct denoised measurements, and $E_\ell^2$, as the event that there exists one incidence where the verification query vectors make a misclassification between singleton and non-singleton. According to Theorem~\ref{thm:em}, the failure probability of each EM operation is $\BIGO(1/\poly(n))$. According to the proof in Appendix~\ref{sec:prf_time_complexity}, there are $\Theta(K)$ bin-level operations during the algorithm, and when processing each bin, we need $\Theta(\log^2(n))$ EM operations, since there are $\Theta(\log^2(n))$ query vectors in each bin. Therefore, the total amount of EM operations is $\Theta(K\log^2(n))$. By union bound, we know that $\PROL{E_\ell^{1}} \le \BIGO(K\log^2(n)/\poly(n)) = \BIGO(1/\poly(n))$. According to Lemma~\ref{lem:verification}, we know that each verification has failure probability $\BIGO(1/\poly(n))$, and using a similar union bound argument, we know that $\PROL{E_\ell^{2}} \le \BIGO(1/\poly(n))$. When both error events $E_\ell^{1}$ and $E_\ell^{2}$ do not happen, the algorithm always find the correct location and value of the recovered elements, and in this case there is no false discovery. By union bound, we know that $\PROL{\overline{E_\ell^{1}} \cap \overline{E_\ell^{2}}} \ge 1 - \BIGO(1/\poly(n))$. Therefore, the probability that there is no false discovery is $1 - \BIGO(1/\poly(n))$, which proves the first property in the theorem.

Then, we turn to prove the second property. We define the error event $E_\ell$ that fewer than $1-p^*$ fraction of the $K_\ell$ non-zero elements of the parameter vector $\vect{\beta}^{(\ell)}$ are recovered by the algorithm. Suppose that none of $E_\ell^{1}$ and $E_\ell^{2}$ happens, then, the analysis of the robustified algorithm becomes exactly the same as in the noiseless setting. Therefore, according to Theorem~\ref{thm:main_noiseless}, we know that $\PROL{E_\ell | \overline{E_\ell^{1}} \cap \overline{E_\ell^2}} \le \BIGO(1/K_\ell)$. Then, we can apply total law of probability and get
\begin{align*}
\PROL{E_\ell} &= \PROL{E_\ell | \overline{E_\ell^{1}} \cap \overline{E_\ell^2}} \PROL{\overline{E_\ell^{1}} \cap \overline{E_\ell^2}}+\PROL{E_\ell | E_\ell^1\cup E_\ell^2}\PROL{E_\ell^1\cup E_\ell^2}  \\
& \le \PROL{E_\ell | \overline{E_\ell^{1}} \cap \overline{E_\ell^2}} + \PROL{E_\ell^1\cup E_\ell^2} \\
& \le \PROL{E_\ell | \overline{E_\ell^{1}} \cap \overline{E_\ell^2}} + \PROL{E_\ell^1} + \PROL{E_\ell^2} \\
& = \BIGO(1/K_\ell) \\
&= \BIGO(1/K),
\end{align*}
which proves the second property in the theorem. The third property in the theorem can be derived using the method in~\eqref{eq:element_wise} in the proof of Theorem~\ref{thm:main_noiseless}, and we omit the details here. Thus, we have proved the three properties in Theorem~\ref{thm:main_noisy}. The time complexity can be analyzed using the same method as in the noiseless case, provided in Appendix~\ref{sec:prf_time_complexity}. The only difference is that, the bin-level operation takes $\Theta(1)$ time in the noiseless setting, while in the noisy setting it takes $\Theta(\polylog(n))$ time. Therefore, the time complexity of the Robust Mixed-Coloring algorithm is $\Theta(K\polylog(n))$.

\section{Proof of Lemma~\ref{lem:verification}}\label{sec:prf_verification}
We first provide a simplified interpretation of Lemma~\ref{lem:verification}.
\begin{lemma}\label{lem:verification_2}
Let $\mat{V}\in\{0,1\}^{P_2\times n}$ be a Rademacher matrix with $P_2=\Theta(\log(n))$. Denote the $j$-th column of $\mat{V}$ by $\vect{v}_j$. Suppose that $\vect{h}\in\{0,1\}^n$ and $\vect{\beta}\in\mathbb{D}^n$, where $\mathbb{D} = \{\pm\Delta, \pm2\Delta,\ldots,\pm b\Delta\}$.
Let $\vect{y} = \mat{V}\DIAG{\vect{h}}\vect{\beta}$. Suppose that $\DIAG{\vect{h}}\vect{\beta} \neq a\Delta\vect{e}_j$ for some $a\Delta \in \mathbb{D}$ and canonical basis vector $\vect{e}_j$. Then, with probability $1-\BIGO(1/\poly(n))$, $\vect{y} \neq a\Delta \vect{v}_j$.
\end{lemma}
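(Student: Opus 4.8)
The plan is to reduce the claim to a one-line anti-concentration fact about random sign vectors. Write $\vect{u} := \DIAG{\vect{h}}\vect{\beta} - a\Delta\vect{e}_j \in \REAL^n$. Since $\mat{V}\vect{e}_j = \vect{v}_j$, we have $\vect{y} - a\Delta\vect{v}_j = \mat{V}\DIAG{\vect{h}}\vect{\beta} - a\Delta\mat{V}\vect{e}_j = \mat{V}\vect{u}$, so the event $\{\vect{y} = a\Delta\vect{v}_j\}$ is exactly $\{\mat{V}\vect{u} = \vect{0}\}$. By hypothesis $\vect{u}$ is a \emph{fixed, nonzero} vector (its coordinates are irrelevant beyond being not all zero), and all randomness sits in the Rademacher matrix $\mat{V}$, which is drawn independently of $\vect{h}$, $\vect{\beta}$, and of the candidate pair $(j,a\Delta)$ produced by the binary-indexing step. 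Thus it suffices to bound $\PROL{\mat{V}\vect{u}=\vect{0}}$ for an arbitrary fixed nonzero $\vect{u}$.

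First I would fix a coordinate $i_0$ with $u_{i_0}\neq 0$. For each row $\vect{v}^{(r)}$ of $\mat{V}$, $r\in[P_2]$, condition on all entries $v^{(r)}_i$ with $i\neq i_0$; then $\langle \vect{v}^{(r)},\vect{u}\rangle = v^{(r)}_{i_0}u_{i_0} + s$ for some constant $s$, and because $u_{i_0}\neq 0$ this equals $0$ for at most one of the two equally likely values $v^{(r)}_{i_0}=\pm 1$. Hence $\PROL{\langle\vect{v}^{(r)},\vect{u}\rangle = 0}\le \frac12$ for every $r$. (This is the standard Littlewood--Offord-type step; it can equivalently be read off from the Johnson--Lindenstrauss concentration bound, since a near-isometric random projection of a nonzero vector cannot be the zero vector.) The rows of $\mat{V}$ are mutually independent, so $\PROL{\mat{V}\vect{u}=\vect{0}} = \prod_{r=1}^{P_2}\PROL{\langle\vect{v}^{(r)},\vect{u}\rangle = 0} \le 2^{-P_2}$.

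Taking $P_2 = C\log_2 n$ with $C$ a sufficiently large constant gives $\PROL{\vect{y}=a\Delta\vect{v}_j}\le n^{-C} = \BIGO(1/\poly(n))$, which is Lemma~\ref{lem:verification_2}; if one wants the conclusion to hold simultaneously for \emph{every} wrong candidate $(j,a\Delta)\in[n]\times\mathbb{D}$, a union bound over the $2bn$ such pairs (with $b=\Theta(1)$) costs only a factor of $n$ and is absorbed by enlarging $C$. Lemma~\ref{lem:verification} then follows at once: if the consistent set is not in fact a singleton located at $j$ with value $a\Delta$, then $\DIAG{\vect{h}}\vect{\beta}\neq a\Delta\vect{e}_j$, so with probability $1-\BIGO(1/\poly(n))$ the verification measurements fail to match $a\Delta\vect{v}_j$ and the candidate is correctly rejected; conversely a true singleton always matches.

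There is essentially no hard step here — the argument is elementary. The only points needing care are: (i) observing that $\mat{V}$ is independent of the candidate $(j,a\Delta)$ (so there is no circularity in conditioning on it) and of $(\vect{h},\vect{\beta})$; (ii) reading the Rademacher matrix as having entries in $\{1,-1\}$, as in the main text (the $\{0,1\}$ in the statement of Lemma~\ref{lem:verification_2} should be $\{1,-1\}$); and (iii) carrying out the anti-concentration estimate by conditioning on all-but-one coordinate rather than by a naive union bound over sign patterns, which would be far too lossy.
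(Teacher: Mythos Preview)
Your argument is correct, and it is genuinely different from the paper's. The paper reduces the event $\{\mat{V}\vect{u}=\vect{0}\}$ (with $\vect{u}=\DIAG{\vect{h}}\vect{\beta}-a\Delta\vect{e}_j\neq\vect{0}$) to a Johnson--Lindenstrauss concentration statement: it bounds
\[
\PROL{\mat{V}\vect{u}=\vect{0}}\;\le\;\PRO{\Bigl|\,\bigl\|P_2^{-1/2}\mat{V}\vect{u}\bigr\|_2^2-\|\vect{u}\|_2^2\,\Bigr|\ge \tfrac12\|\vect{u}\|_2^2}\;\le\;2e^{-P_2/24},
\]
and then takes $P_2=\Theta(\log n)$. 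You instead argue row by row: for each row $\vect{v}^{(r)}$, conditioning on all but one coordinate where $\vect{u}$ is nonzero gives $\PROL{\langle\vect{v}^{(r)},\vect{u}\rangle=0}\le\tfrac12$, and independence across rows yields $\PROL{\mat{V}\vect{u}=\vect{0}}\le 2^{-P_2}$. Your route is more elementary (it needs no concentration machinery, only a one-step Littlewood--Offord observation) and gives a sharper bound, so fewer verification rows suffice for the same polynomial failure probability. The paper's JL-based proof has the virtue of being a direct citation of a standard black box, but your argument is the cleaner one for this particular event. Your side remarks---that $\mat{V}$ is independent of the candidate $(j,a\Delta)$ because the latter is determined solely by the binary-indexing rows $\mat{B}$, and that the $\{0,1\}$ in the lemma statement is a typo for $\{1,-1\}$---are both correct and worth noting.
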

Here, $\vect{\beta}$ is the parameter vector that generates the consistent set of measurements, and $\vect{h}$ denotes the association between the bin and the coordinates. Define $\tilde{\vect{\beta}} = \DIAG{\vect{h}}\vect{\beta} \in \mathbb{D}^n$, and we have $\vect{y} = \mat{V}\tilde{\vect{\beta}} $. Our goal is to justify that, when $\tilde{\vect{\beta}}\neq a\Delta\vect{e}_j$, with high probability, $\vect{y} \neq a\Delta \vect{v}_j$.

Suppose that $\tilde{\vect{\beta}} \neq a\Delta\vect{e}_j$ but $\vect{y} = a\Delta \vect{v}_j$. Then we have $\mat{V}(\tilde{\vect{\beta}} - a\Delta\vect{e}_j) = \vect{0}$. According to a corollary of the Johnson-Lindenstrauss Lemma~\cite{johnson1984extensions} (one can also refer to Section 4 in~\cite{baraniuk2008simple}), we know that
\begin{align*}
\PROL{\mat{V}(\tilde{\vect{\beta}} - a\Delta\vect{e}_j) = \vect{0}} \le & \PRO{\ABS{ \|\frac{1}{\sqrt{P_2}}\mat{V}(\tilde{\vect{\beta}} - a\Delta\vect{e}_j)\|_2^2 -  \|\tilde{\vect{\beta}} - a\Delta\vect{e}_j\|_2^2} \ge \frac{1}{2}\|\tilde{\vect{\beta}} - a\Delta\vect{e}_j\|_2^2} \le 2e^{-P_2/24}.
\end{align*}
Therefore, we can see that by having $P_2=\Theta(\log(n))$ verification query vectors, we can guarantee that with probability at least $1-\BIGO(1/\poly(n))$, we won't identify $\tilde{\vect{\beta}}$ as $a\Delta\vect{e}_j$, and this means that we won't misclassify a non-singleton as a singleton.

\section{Proof of Theorem~\ref{thm:em}}\label{sec:em}
In this section, we provide a method to estimate the parameters of a mixture of two Gaussian random variables, and give the theoretical analysis to prove Theorem~\ref{thm:em}. This estimation method is based on EM algorithm with method of moments initialization. 

Recall the setting of Theorem~\ref{thm:em}. Let $z_i$'s be i.i.d. samples of Bernoulli$(\frac{1}{2})$ distribution, and $w_i$'s be i.i.d. samples of Gaussian distribution with mean zero and variance $\sigma^2$, independently of $z_i$'s, $i\in[N]$. Suppose that random variables $y_i$'s are generated in the following way:
$$
y_i=\mu_1(1-z_i)+\mu_2z_i+w_i,~i\in[N].
$$
Then, we can consider $y_i$ as a mixture of two Gaussian random variables with means $\mu_1$ and $\mu_2$, respectively. We assume that $\sigma^2$ is known, and the parameters $\mu_1$ and $\mu_2$ are unknown and take value in a finite and quantized set
$\mathbb{D}=\{k\Delta:k\in\mathbb{Z}, \ABSL{k}\le b\}$, for some $\Delta>0$. Without loss of generality, we assume that $\mu_1\le\mu_2$. (Note that we allow $\mu_1=\mu_2$ here.) Our goal is to get accurate estimation of $\mu_1$ and $\mu_2$.

The first step is to compute the sample mean of the first $N_1$ samples, i.e., $\bar{y}=\frac{1}{N_1}\sum_{i=1}^{N_1} y_i$. Since we know that the mean of $y_i$'s takes value in the set $\mathbb{D}^+ = \{\frac{k}{2}\Delta:k\in\mathbb{Z}, \ABSL{k}\le 2b\}$, we find the element in $\mathbb{D}^+$ which is the closest one to $\bar{y}$ as the estimator of the mean of $y_i$, i.e., $\frac{1}{2}(\mu_1+\mu_2)$,
$$
\hat{\mu} = \arg\min_{\mu\in\mathbb{D}^+}\ABSL{\bar{y}-\mu}.
$$
We have the following result on the accuracy of the estimator $\hat{\mu}$.
\begin{lemma}\label{lem:hatmu}
There exist universal constants $c_1$ and $c_2$ such that for any $\delta>0$, if 
\begin{equation}\label{eq:lem1n1}
N_1\ge \max\{c_1b^2, c_2\frac{\sigma^2}{\Delta^2}\}\log(\frac{1}{\delta}),
\end{equation}
we have $\hat{\mu}=\frac{1}{2}(\mu_1+\mu_2)$ with probability at least $1-6\delta$. 
\end{lemma}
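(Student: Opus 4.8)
The plan is to reduce the lemma to a single scalar concentration statement: rounding $\bar y$ to the nearest point of $\mathbb{D}^+$ recovers the true mean $\bar\mu:=\tfrac12(\mu_1+\mu_2)$ exactly, provided $\bar y$ lands within half the grid spacing of $\bar\mu$. First I would record two elementary facts. (i) Since $\mu_1,\mu_2\in\mathbb{D}=\{k\Delta:|k|\le b\}$, we have $\bar\mu=\tfrac{k_1+k_2}{2}\Delta$ with $|k_1+k_2|\le 2b$, so $\bar\mu\in\mathbb{D}^+$; this also covers the degenerate case $\mu_1=\mu_2$. (ii) The grid $\mathbb{D}^+$ has minimum spacing $\Delta/2$, so on the event $\mathcal E:=\{|\bar y-\bar\mu|<\Delta/4\}$ the point $\bar\mu$ is the unique closest element of $\mathbb{D}^+$ to $\bar y$, whence $\hat\mu=\arg\min_{\mu\in\mathbb{D}^+}|\bar y-\mu|=\bar\mu$. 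It therefore suffices to show $\mathbb{P}(\mathcal E)\ge 1-6\delta$ under the stated lower bound on $N_1$.

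Second, I would decompose the deviation. Writing $y_i-\bar\mu=\tfrac{\mu_2-\mu_1}{2}(2z_i-1)+w_i$ and averaging over $i\in[N_1]$ gives $\bar y-\bar\mu=\tfrac{\mu_2-\mu_1}{2}\cdot\tfrac{1}{N_1}\sum_{i=1}^{N_1}(2z_i-1)+\tfrac{1}{N_1}\sum_{i=1}^{N_1}w_i$, and I would bound the two terms so that each is below $\Delta/8$. For the first term, $|\tfrac{\mu_2-\mu_1}{2}|\le b\Delta$ and $\tfrac{1}{N_1}\sum(2z_i-1)$ is an average of i.i.d.\ Rademacher variables, so Hoeffding's inequality gives $\mathbb{P}(|\tfrac{1}{N_1}\sum(2z_i-1)|>\tfrac{1}{8b})\le 2\exp(-N_1/(128b^2))\le 2\delta$ once $N_1\ge 128b^2\log(1/\delta)$; on the complement the first term is $<\Delta/8$. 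For the second term, $\tfrac{1}{N_1}\sum w_i\sim\mathcal N(0,\sigma^2/N_1)$, so the Gaussian tail bound gives $\mathbb{P}(|\tfrac{1}{N_1}\sum w_i|>\Delta/8)\le 2\exp(-N_1\Delta^2/(128\sigma^2))\le 2\delta$ once $N_1\ge 128\tfrac{\sigma^2}{\Delta^2}\log(1/\delta)$. Taking $c_1=c_2=128$ (any sufficiently large universal constants work) and a union bound, $\mathcal E$ fails with probability at most $4\delta$, so $\mathbb{P}(\mathcal E)\ge 1-4\delta\ge 1-6\delta$, which is the claim. An alternative one-shot route is to note that $y_i-\bar\mu$ is sub-Gaussian with variance proxy $b^2\Delta^2+\sigma^2$ (a bounded term plus a Gaussian), hence $\bar y-\bar\mu$ has proxy $(b^2\Delta^2+\sigma^2)/N_1$, and read off $\mathbb{P}(\mathcal E)$ directly; the two-term split is used only to make the $\max\{c_1b^2,c_2\sigma^2/\Delta^2\}$ form transparent.

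There is no real obstacle here: the only points needing care are checking that $\bar\mu$ genuinely lies on $\mathbb{D}^+$ and that the rounding radius $\Delta/4$ is strictly smaller than half the grid spacing, both of which are immediate. The slack between the $1-4\delta$ one naturally obtains and the claimed $1-6\delta$ is harmless, leaving room for a looser division of the $\Delta/4$ budget or a cruder tail bound; and the $\log(1/\delta)$ dependence of $N_1$ together with the separate $b^2$ and $\sigma^2/\Delta^2$ scales is exactly what the Rademacher-sum and Gaussian-sum tail estimates produce.
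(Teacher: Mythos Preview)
Your proof is correct and follows essentially the same approach as the paper: reduce to the rounding event $|\bar y-\bar\mu|<\Delta/4$, split the deviation into a label part and a Gaussian-noise part, and apply Hoeffding plus a Gaussian tail bound with a union bound. The only cosmetic difference is that the paper decomposes the label contribution into two separate terms $\tfrac{N_{11}}{N_1}\mu_1-\tfrac{\mu_1}{2}$ and $\tfrac{N_{12}}{N_1}\mu_2-\tfrac{\mu_2}{2}$ (each controlled to $\Delta/12$) rather than your single Rademacher sum, which is why their union bound yields $6\delta$ instead of your $4\delta$.
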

We prove Lemma \ref{lem:hatmu} in Appendix \ref{prf:hatmu}. In the second step, we subtract $\hat{\mu}$ from the other $N-N_1$ samples, and get centered random variables $\tilde{y}_i=y_i-\hat{\mu}$, $i=N_1+1,\ldots,N$. We assume that $\hat{\mu}$ is the actual mean of the $y_i$'s, meaning that $\hat{\mu}=\frac{1}{2}(\mu_1+\mu_2)$. Then, we know that if $\mu_1=\mu_2$, the centered random variables $\tilde{y}_i$'s are i.i.d. $\mathcal{N}(0, \sigma^2)$ distributed; otherwise $\tilde{y}_i$'s are i.i.d. mixtures of two Gaussian distributions
$$
\tilde{y}_i\sim\begin{cases}
\mathcal{N}(\theta_*, \sigma^2)&\quad\text{with probability }\frac{1}{2}\\
\mathcal{N}(-\theta_*, \sigma^2)&\quad\text{with probability }\frac{1}{2},
\end{cases}
$$
where $\theta_*=\frac{1}{2}(\mu_2-\mu_1)\ge 0$. Then, we make an initial estimation of $\theta_*$ using $N_2$ of the $N-N_1$ centered random variables. Specifically, we compute
$$
\theta_0 = \begin{cases}
\sqrt{\frac{1}{N_2}\sum_{i=N_1+1}^{N_1+N_2} \tilde{y}_i^2 - \sigma^2} &\quad\text{if }\frac{1}{N_2}\sum_{i=N_1+1}^{N_1+N_2} \tilde{y}_i^2 - \sigma^2>0\\
0 &\quad\text{otherwise}.
\end{cases}
$$
We have the following result on $\theta_0$:
\begin{lemma}\label{lem:theta0}
Condition on the event that $\hat{\mu}=\frac{1}{2}(\mu_1+\mu_2)$. There exist universal constants $c_3$ and $c_4$, such that for any $\delta>0$, when
\begin{equation}\label{eq:lem2n2}
N_2\ge\max\{c_3\frac{\sigma^2}{\Delta^2}(1+\frac{4\sigma^2}{\Delta^2}), c_4\}\log(\frac{1}{\delta}),
\end{equation}
then $\theta_0$ satisfies:\\
(1) if $\mu_1=\mu_2$, $\theta_0<\frac{\Delta}{4}$ with probability at least $1-2\delta$;\\
(2) if $\mu_1\neq\mu_2$, $\ABSL{\theta_0-\theta_*}<\frac{\theta_*}{4}$ with probability at least $1-2\delta$.
\end{lemma}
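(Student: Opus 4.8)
The plan is to view $\theta_0$ as a truncated method-of-moments estimator and to reduce the entire statement to a single concentration inequality for a normalized sum of squares. The starting observation is that, conditioned on the good event $\hat\mu=\tfrac12(\mu_1+\mu_2)$, the centered samples take the form $\tilde y_i=\epsilon_i\theta_*+w_i$ with $\epsilon_i$ uniform on $\{\pm1\}$ (and $\theta_*=0$ in case (1)), so $\EXPL{\tilde y_i^2}=\theta_*^2+\sigma^2$. Writing $\hat S:=\tfrac1{N_2}\sum_{i=N_1+1}^{N_1+N_2}\tilde y_i^2$, we have $\theta_0=\sqrt{(\hat S-\sigma^2)_+}$ and $\theta_*^2=\EXPL{\hat S}-\sigma^2$, so both parts of the lemma follow once $\ABSL{\hat S-\theta_*^2-\sigma^2}$ is controlled. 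I would first note that the $N_2$ samples feeding $\hat S$ are independent of the first $N_1$ samples defining $\hat\mu$, so the conditioning event lies in the $\sigma$-algebra of the first block and leaves $\tilde y_{N_1+1},\dots,\tilde y_{N_1+N_2}$ i.i.d.\ from the centered (mixture or Gaussian) law; i.i.d.\ concentration may then be invoked verbatim under the conditioning.

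The key step is a Bernstein bound for $\hat S$. I would decompose $\tilde y_i^2-(\theta_*^2+\sigma^2)=(w_i^2-\sigma^2)+2\theta_*\epsilon_i w_i$: the first term is centered sub-exponential with $\SEPL{w_i^2-\sigma^2}=\Theta(\sigma^2)$, the second centered sub-Gaussian with $\SGSL{2\theta_*\epsilon_i w_i}=\Theta(\theta_*\sigma)$, and the two are uncorrelated, so the sum has variance $\Theta(\sigma^4+\theta_*^2\sigma^2)$ and $\psi_1$-scale $\Theta(\sigma^2+\theta_*^2)$. Bernstein's inequality for independent centered sub-exponential variables then gives, for a universal $c>0$ and all $t>0$,
\[
\PROL{\ABSL{\hat S-\theta_*^2-\sigma^2}\ge t}\ \le\ 2\exp\!\Big(-cN_2\min\Big\{\tfrac{t^2}{\sigma^4+\theta_*^2\sigma^2},\ \tfrac{t}{\sigma^2+\theta_*^2}\Big\}\Big).
\]

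It then remains to choose $t$ in each case and compare with the hypothesis, writing $r:=\sigma^2/\Delta^2$. In case (1) ($\theta_*=0$), the bad event $\{\theta_0\ge\Delta/4\}$ entails $\hat S-\sigma^2\ge\Delta^2/16$, so I take $t=\Delta^2/16$; the exponent becomes $-cN_2\min\{(256r^2)^{-1},(16r)^{-1}\}$, which falls below $\log(2\delta)$ once $N_2\gtrsim\max\{r^2,r\}\log(1/\delta)$, and since $\max\{r^2,r\}\le r(1+4r)$ the hypothesis $N_2\ge\max\{c_3r(1+4r),c_4\}\log(\tfrac1\delta)$ suffices. In case (2), $\mu_1,\mu_2\in\mathbb{D}$ with $\mu_1\neq\mu_2$ forces $\theta_*=\tfrac12(\mu_2-\mu_1)\ge\Delta/2>0$, hence $\theta_0=\sqrt{(\hat S-\sigma^2)_+}$ and the target $\{\ABSL{\theta_0-\theta_*}<\theta_*/4\}$ equals $\{\hat S-\sigma^2\in(\tfrac9{16}\theta_*^2,\tfrac{25}{16}\theta_*^2)\}$, which is implied by $\{\ABSL{\hat S-\sigma^2-\theta_*^2}<\tfrac7{16}\theta_*^2\}$; taking $t=\tfrac7{16}\theta_*^2$ and using that $\theta\mapsto\theta^4/(\sigma^2(\sigma^2+\theta^2))$ and $\theta\mapsto\theta^2/(\sigma^2+\theta^2)$ are increasing, the rate in the exponent is at least its value at $\theta_*=\Delta/2$, namely $\Theta\big(\min\{(r(1+4r))^{-1},(1+4r)^{-1}\}\big)$; the same $N_2$ (with the $c_4$ floor covering the small-$r$ corner where this rate is merely $\Theta(1)$) pushes the probability below $2\delta$. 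Combining the two cases proves the lemma.

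I expect the main obstacle to be the constant bookkeeping that makes the single hypothesis $\tfrac{\sigma^2}{\Delta^2}(1+\tfrac{4\sigma^2}{\Delta^2})$ cover both regimes: the term quadratic in $\sigma^2/\Delta^2$ is exactly what the sub-Gaussian ``mean-cross'' contribution $2\theta_*\epsilon_i w_i$ to the noncentral-$\chi^2$ fluctuation requires (it governs the variance proxy $\sigma^4+\theta_*^2\sigma^2$ when $\theta_*\gtrsim\sigma$), while the additive floor $c_4\log(1/\delta)$ absorbs the small-noise corner where $\theta_*^2\sigma^2$ degenerates and merely guarantees a usable sample size. A secondary point is to keep the conditioning on the Lemma~\ref{lem:hatmu} event clean, which the block-split into the first $N_1$ and the next $N_2$ samples handles automatically.
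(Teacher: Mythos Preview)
Your proposal is correct and follows essentially the same strategy as the paper: reduce both cases to a concentration bound on $\hat S=\tfrac{1}{N_2}\sum\tilde y_i^2$ around its mean $\theta_*^2+\sigma^2$, then translate the deviation on $\hat S$ into the claimed statements on $\theta_0$. The only notable difference is the concentration tool. The paper observes that $\tfrac{1}{\sigma^2}\sum\tilde y_i^2$ is (since $\epsilon_i^2\equiv1$) exactly a central $\chi^2_{N_2}$ in case~(1) and a noncentral $\chi^2_{N_2}(\nu)$ with $\nu=N_2\theta_*^2/\sigma^2$ in case~(2), and then invokes off-the-shelf Laurent--Massart tail bounds for these distributions. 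You instead decompose $\tilde y_i^2-(\theta_*^2+\sigma^2)=(w_i^2-\sigma^2)+2\theta_*\epsilon_i w_i$ and apply a generic sub-exponential Bernstein inequality. Both yield the same variance proxy $\Theta(\sigma^4+\theta_*^2\sigma^2)$ and scale $\Theta(\sigma^2+\theta_*^2)$, hence the same sample-size requirement $N_2\gtrsim\max\{c_3\,r(1+4r),c_4\}\log(1/\delta)$ with $r=\sigma^2/\Delta^2$; your route is slightly more self-contained, while the paper's is a touch shorter by quoting the named $\chi^2$ bounds directly.
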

We prove Lemma \ref{lem:theta0} in Appendix \ref{prf:theta0}. If $\theta_0<\frac{\Delta}{4}$, we claim that $\mu_1=\mu_2$, and give estimators $\hat{\mu}_1=\hat{\mu}_2=\hat{\mu}$. Otherwise, we run a standard EM algorithm with the remaining $N_3:=N-(N_1+N_2)$ samples using $\theta_0$ as an initialization to estimate $\theta_*$. Here, we briefly review the procedures of standard EM algorithm for mixtures of Gaussian distributions. For $t=0,1,2,\ldots$, conduct the following two steps:\\
\textbf{E step:} compute the expected log-likelihood.
$$
L(\theta|\theta_t) = -\frac{1}{2N_3} \sum_{i=N_1+N_2+1}^N [p(\tilde{y}_i|\theta_t) (\tilde{y}_i-\theta_t)^2 + (1-p(\tilde{y}_i|\theta_t))(\tilde{y}_i+\theta_t)^2],
$$
where
$$
p(y|\theta_t) = e^{-\frac{(y-\theta_t)^2}{2\sigma^2}}\left[ e^{-\frac{(y-\theta_t)^2}{2\sigma^2}}+e^{-\frac{(y+\theta_t)^2}{2\sigma^2}} \right]^{-1}.
$$
\textbf{M step:} compute
$$
\theta_{t+1} = \arg\max_{\theta}L(\theta|\theta_t)=\frac{1}{N_3}\left[ 2\sum_{i=N_1+N_2+1}^N p(\tilde{y}_i|\theta_t)\tilde{y}_i - \sum_{i=N_1+N_2+1}^N \tilde{y}_i \right].
$$
We run the EM algorithm for $T$ iterations, and find the element in $\mathbb{D}^+$ which is the closest one to $\theta_t$ as the estimator of the mean of $\theta_*$, i.e., $\hat{\theta}_*=\arg\min_{\theta\in\mathbb{D}^+}\ABSL{\theta-\theta_T}$. Then, we output the estimation of $\mu_1$ and $\mu_2$ by $\hat{\mu}_1 = \hat{\mu}-\hat{\theta}_*$ and $\hat{\mu}_2 = \hat{\mu}+\hat{\theta}_*$.

Here, we review the results in \cite{balakrishnan2014statistical} which characterizes the performance of the EM algorithm.
\begin{lemma}\label{lem:em_siva}\cite{balakrishnan2014statistical}
Suppose that $\mu_1<\mu_2$. Conditioned on the event that $\hat{\mu}=\frac{1}{2}(\mu_1+\mu_2)$ and the event that $\ABSL{\theta_0-\theta_*}<\frac{\theta_*}{4}$. Suppose that $\eta:=\frac{\theta_*}{\sigma} \ge \frac{4}{\sqrt{3}}$. Then, there exist universal constants $c_5$, $c_6$, and $c_7$, such that when $N_3\ge c_5\log(\frac{1}{\delta})$, for any $\delta>0$, we have
$$
\ABSL{\theta_t-\theta_*} \le \kappa^t \ABSL{\theta_0-\theta_*} + \frac{c_6}{1-\kappa}\theta_*\sqrt{\theta_*^2+\sigma^2}\sqrt{\frac{1}{N_3}\log(\frac{1}{\delta})},
$$
with probability at least $1-\delta$, where $\kappa\le \exp(-c_7\eta^2)$.
\end{lemma}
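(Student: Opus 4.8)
The plan is to follow the general EM analysis template of \cite{balakrishnan2014statistical}: decompose each iterate into a \emph{population} update plus a \emph{sampling error}, show the population update contracts toward $\theta_*$ on a neighborhood, and control the sampling error uniformly over that neighborhood. Throughout we condition on the two stated events, so that the $N_3$ samples feeding the EM iterations are fresh i.i.d.\ draws from the symmetric mixture $\frac12\mathcal{N}(\theta_*,\sigma^2)+\frac12\mathcal{N}(-\theta_*,\sigma^2)$ with $\theta_*$ genuinely equal to $\frac12(\mu_2-\mu_1)$; this legitimizes treating $\hat\mu$ and $\theta_0$ as fixed and the E/M steps as operating on a clean mixture.

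First I would introduce the population M-step map $M(\theta)=\EXP{2p(\tilde y\mid\theta)\tilde y-\tilde y}$, so that the M step reads $\theta_{t+1}=M_{N_3}(\theta_t)$ with $M_{N_3}$ its empirical version on the $N_3$ samples. The technical heart is the \emph{contraction bound}: for every $\theta$ with $\ABSL{\theta-\theta_*}\le\theta_*/2$ one has $\ABSL{M(\theta)-\theta_*}\le\kappa\ABSL{\theta-\theta_*}$ with $\kappa\le\exp(-c_7\eta^2)$ whenever $\eta=\theta_*/\sigma\ge 4/\sqrt3$. Since $\theta_*$ is a fixed point of $M$, this follows by bounding $\sup\ABSL{M'(\theta)}$ over the neighborhood (differentiating under the integral) and showing the posterior weight $p(\cdot\mid\theta)$ becomes exponentially sharp as the separation $\eta$ grows; the threshold $4/\sqrt3$ is precisely what makes $\kappa<1$ for a basin of radius $\theta_*/2$. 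Next I would bound the uniform sampling error $\sup_{\theta:\ABSL{\theta-\theta_*}\le\theta_*/2}\ABSL{M_{N_3}(\theta)-M(\theta)}$: each summand $2p(\tilde y\mid\theta)\tilde y-\tilde y$ has $\ABSL{2p-1}\le1$ and $\tilde y$ sub-Gaussian with parameter $\BIGO(\theta_*^2+\sigma^2)$, hence is sub-exponential, so Bernstein's inequality plus a one-dimensional Lipschitz/net argument over $[\theta_*/2,\,3\theta_*/2]$ yields, with probability $1-\delta$, $\sup\ABSL{M_{N_3}(\theta)-M(\theta)}\le\varepsilon_{N_3}(\delta):=\BIGO\!\left(\theta_*\sqrt{\theta_*^2+\sigma^2}\,\sqrt{\tfrac{1}{N_3}\log\tfrac1\delta}\right)$.

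Combining the two ingredients, induction on $t$ keeps the iterates inside the basin: if $\ABSL{\theta_t-\theta_*}\le\theta_*/2$ then $\ABSL{\theta_{t+1}-\theta_*}\le\ABSL{M(\theta_t)-\theta_*}+\ABSL{M_{N_3}(\theta_t)-M(\theta_t)}\le\kappa\ABSL{\theta_t-\theta_*}+\varepsilon_{N_3}(\delta)$, and requiring $N_3\ge c_5\log(1/\delta)$ makes $\varepsilon_{N_3}(\delta)$ small (say at most $(1-\kappa)\theta_*/4$), so the right side stays below $\theta_*/2$, with the base case supplied by $\ABSL{\theta_0-\theta_*}<\theta_*/4$. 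Unrolling the recursion $\ABSL{\theta_{t+1}-\theta_*}\le\kappa\ABSL{\theta_t-\theta_*}+\varepsilon_{N_3}(\delta)$ then gives $\ABSL{\theta_t-\theta_*}\le\kappa^t\ABSL{\theta_0-\theta_*}+\frac{\varepsilon_{N_3}(\delta)}{1-\kappa}$, which is the claimed bound with $c_6$ absorbing the constant inside $\varepsilon_{N_3}$.

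I expect the exponential contraction rate $\kappa\le\exp(-c_7\eta^2)$ of the first step to be the main obstacle: it needs a sharp estimate of how concentrated the posterior label probability is in the well-separated regime, and it is exactly this estimate that forces $\eta\ge 4/\sqrt3$. Since both the population contraction and the sample-error bound for this precise symmetric two-Gaussian model are established in \cite{balakrishnan2014statistical}, the cleanest route in practice is to invoke their statements verbatim, checking only that the three-way sample splitting makes the conditioning on $\hat\mu$ and $\theta_0$ valid and that the constants line up after the harmless rescaling $\tilde y\mapsto\tilde y/\sigma$.
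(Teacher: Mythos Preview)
Your proposal is correct and, in fact, goes further than the paper: the paper does not prove this lemma at all but simply quotes it from \cite{balakrishnan2014statistical}, exactly as you yourself suggest doing at the end. Your sketch of the population-contraction plus uniform-sampling-error argument is the standard route taken in that reference, so there is nothing to add or correct.
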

Then, we have the direct corollary:
\begin{corollary}\label{cor:em}
Under the same condition that $\hat{\mu}=\frac{1}{2}(\mu_1+\mu_2)$, $\ABSL{\theta_0-\theta_*}<\frac{\theta_*}{4}$, and that $\eta=\frac{\theta_*}{\sigma} \ge \frac{4}{\sqrt{3}}$ as in Theorem \ref{thm:em}, then, when 
\begin{equation}\label{eq:n3}
N_3>\max\{c_5,\frac{16c_6^2}{(1-\kappa)^2}b^2(b^2\Delta^2+\sigma^2)\}\log(\frac{1}{\delta}),
\end{equation}
and 
\begin{equation}\label{eq:t}
T>\frac{\log(b)}{\log(1/\kappa)},
\end{equation}
we have $\hat{\theta}_*=\theta_*$ with probability at least $1-\delta$, for any $\delta>0$.
\end{corollary}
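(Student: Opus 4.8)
The plan is to deduce Corollary~\ref{cor:em} directly from Lemma~\ref{lem:em_siva} by iterating the EM error bound out to step $T$ and checking that the remaining error is small enough for the final projection onto $\mathbb{D}^+$ to return $\theta_*$ exactly. First I would record two elementary facts about the quantized alphabet. Since $\mu_1,\mu_2\in\mathbb{D}=\{\pm\Delta,\ldots,\pm b\Delta\}$, we have $\theta_*=\tfrac12(\mu_2-\mu_1)=\tfrac{k}{2}\Delta$ for some integer $k$ with $1\le k\le 2b$ (the lower bound because $\eta=\theta_*/\sigma\ge 4/\sqrt3>0$ forces $\mu_1<\mu_2$); hence $\tfrac{\Delta}{2}\le\theta_*\le b\Delta$ and, crucially, $\theta_*\in\mathbb{D}^+$. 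Moreover consecutive points of $\mathbb{D}^+$ are spaced $\Delta/2$ apart, so any real number within distance $\Delta/4$ of $\theta_*$ is rounded to $\theta_*$ by the nearest-point projection onto $\mathbb{D}^+$. As a side remark, the hypothesis $|\theta_0-\theta_*|<\theta_*/4$ gives $\theta_0>\tfrac34\theta_*\ge\tfrac{3\Delta}{8}>\tfrac{\Delta}{4}$, so the condition $\theta_0<\Delta/4$ does not trigger and the algorithm proceeds to run EM.

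Next I would invoke Lemma~\ref{lem:em_siva}: the hypotheses of the corollary are precisely its hypotheses, so as long as $N_3\ge c_5\log(1/\delta)$ we obtain, with probability at least $1-\delta$,
$$
|\theta_T-\theta_*|\le \kappa^{T}|\theta_0-\theta_*|+\frac{c_6}{1-\kappa}\,\theta_*\sqrt{\theta_*^2+\sigma^2}\,\sqrt{\frac{1}{N_3}\log(1/\delta)}.
$$
I would then bound the two terms using $\theta_*\le b\Delta$ and $|\theta_0-\theta_*|<\theta_*/4\le b\Delta$. The statistical term is at most $\frac{c_6}{1-\kappa}\,b\Delta\sqrt{b^2\Delta^2+\sigma^2}\,\sqrt{\tfrac1{N_3}\log(1/\delta)}$, and the stated condition $N_3>\frac{16c_6^2}{(1-\kappa)^2}b^2(b^2\Delta^2+\sigma^2)\log(1/\delta)$ forces it below $\Delta/4$; the geometric term is at most $\kappa^{T}b\Delta$, which the choice $T>\log(b)/\log(1/\kappa)$ (inflated by a universal constant if one wishes both contributions individually below $\Delta/8$) drives below $\Delta/8$. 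Hence $|\theta_T-\theta_*|<\Delta/4$ on this event.

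Finally, combining with the rounding observation, on the same $(1-\delta)$-probability event $\hat\theta_*=\arg\min_{\theta\in\mathbb{D}^+}|\theta-\theta_T|=\theta_*$, which is the claim. The only bookkeeping here is choosing the universal constants so that the geometric-decay term and the sampling-error term each land comfortably below $\Delta/4$; there is no genuine obstacle, since the substantive work---the linear contraction of the EM iteration in the well-separated regime $\eta\ge 4/\sqrt3$---is already packaged in Lemma~\ref{lem:em_siva}. The one point that genuinely matters is the arithmetic compatibility between the grid on which $\theta_*$ lives (spacing $\Delta/2$) and the attainable accuracy ($<\Delta/4$), which is exactly what upgrades the approximate guarantee of Lemma~\ref{lem:em_siva} to the exact identity $\hat\theta_*=\theta_*$.
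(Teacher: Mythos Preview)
Your proposal is correct and follows essentially the same route as the paper: invoke Lemma~\ref{lem:em_siva}, bound the geometric and statistical terms separately using $|\theta_0-\theta_*|<\theta_*/4\le b\Delta/4$ and $\theta_*\le b\Delta$, and conclude that $\theta_T$ rounds to $\theta_*$ on $\mathbb{D}^+$. Your observation that the grid spacing of $\mathbb{D}^+$ is $\Delta/2$ (so the rounding threshold is $\Delta/4$, not $\Delta/2$) is in fact sharper than the paper's own statement, and your remark about possibly inflating the constants to push each term below $\Delta/8$ correctly handles the resulting bookkeeping.
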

We prove Corollary \ref{cor:em} in Appendix \ref{prf:cor}. We have the following theorem to characterize the performance of the proposed estimation algorithm.
\begin{theorem}
If $N_1$, $N_2$, $N_3$, and $T$ satisfy (\ref{eq:lem1n1}), (\ref{eq:lem2n2}), (\ref{eq:n3}), and (\ref{eq:t}), respectively, and $\frac{\Delta}{\sigma}  \ge \frac{4}{\sqrt{3}} $, then the proposed estimation algorithm outputs correct estimations $\hat{\mu}_1=\mu_1$ and $\hat{\mu}_2=\mu_2$ with probability at least $1-9\delta$, for any $\delta>0$.
\end{theorem}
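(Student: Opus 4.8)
Proof proposal.

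The plan is to obtain this final theorem as a bookkeeping assembly of the three preceding results — Lemma~\ref{lem:hatmu}, Lemma~\ref{lem:theta0}, and Corollary~\ref{cor:em} — via a union bound, together with a short case analysis on whether $\mu_1=\mu_2$. Recall that the three blocks of samples used for $\hat\mu$, for $\theta_0$, and for the EM iterations are disjoint, so the conditional-probability statements of those three results chain cleanly. First I would condition on the event $\mathcal{A}_1=\{\hat\mu=\tfrac12(\mu_1+\mu_2)\}$: since $N_1$ satisfies \eqref{eq:lem1n1}, Lemma~\ref{lem:hatmu} gives $\PROL{\mathcal{A}_1}\ge 1-6\delta$. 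On $\mathcal{A}_1$ the centered samples $\tilde y_i=y_i-\hat\mu$ have exactly the distribution assumed in Lemma~\ref{lem:theta0}: i.i.d.\ $\mathcal{N}(0,\sigma^2)$ if $\mu_1=\mu_2$, and an equal mixture of $\mathcal{N}(\pm\theta_*,\sigma^2)$ with $\theta_*=\tfrac12(\mu_2-\mu_1)>0$ otherwise. Applying Lemma~\ref{lem:theta0} with $N_2$ obeying \eqref{eq:lem2n2}, I get an event $\mathcal{A}_2$ with $\PROL{\mathcal{A}_2\mid\mathcal{A}_1}\ge 1-2\delta$ on which $\theta_0<\Delta/4$ when $\mu_1=\mu_2$ and $|\theta_0-\theta_*|<\theta_*/4$ when $\mu_1\ne\mu_2$.

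Next comes the case split, which is the one point requiring care: one must check that the algorithm's threshold test $\theta_0\lessgtr\Delta/4$ routes both cases to the correct branch. Since $\mu_1,\mu_2$ are integer multiples of $\Delta$, whenever $\mu_1\ne\mu_2$ we have $|\mu_2-\mu_1|\ge\Delta$, hence $\theta_*\ge\Delta/2$; on $\mathcal{A}_1\cap\mathcal{A}_2$ this forces $\theta_0>\tfrac34\theta_*\ge\tfrac{3\Delta}{8}>\tfrac{\Delta}{4}$, so the algorithm correctly enters the EM branch, and one also verifies that the separation hypothesis $\eta=\theta_*/\sigma\ge\tfrac{4}{\sqrt3}$ of Corollary~\ref{cor:em} is in force under the assumption $\Delta/\sigma\ge\tfrac{4}{\sqrt3}$. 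Conversely, if $\mu_1=\mu_2$ then on $\mathcal{A}_1\cap\mathcal{A}_2$ we have $\theta_0<\Delta/4$, so the algorithm declares $\mu_1=\mu_2$ and returns $\hat\mu_1=\hat\mu_2=\hat\mu=\mu_1=\mu_2$, which is correct. In the remaining case $\mu_1\ne\mu_2$, conditioning further on $\mathcal{A}_1\cap\mathcal{A}_2$ and invoking Corollary~\ref{cor:em} with $N_3$ satisfying \eqref{eq:n3} and $T$ satisfying \eqref{eq:t} yields $\hat\theta_*=\theta_*$ with probability at least $1-\delta$, whence $\hat\mu_1=\hat\mu-\hat\theta_*=\mu_1$ and $\hat\mu_2=\hat\mu+\hat\theta_*=\mu_2$ (up to the labeling convention $\mu_1\le\mu_2$).

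Finally I would collect the failure modes: the output is incorrect only if $\mathcal{A}_1$ fails, or $\mathcal{A}_1$ holds and $\mathcal{A}_2$ fails, or both hold and the EM stage fails (the last possible only when $\mu_1\ne\mu_2$). By the union bound the total failure probability is at most $6\delta+2\delta+\delta=9\delta$, giving success probability at least $1-9\delta$; taking $\delta=\BIGO(1/\poly(n))$ then makes $N_1+N_2+N_3=\Theta(\log(1/\delta))=\Theta(\polylog(n))$ and recovers the $1-\BIGO(1/\poly(n))$ guarantee of Theorem~\ref{thm:em}. I do not expect a genuine obstacle at this level: all the real work — sub-Gaussian concentration of $\bar y$ behind Lemma~\ref{lem:hatmu}, sub-exponential concentration of $\tfrac1{N_2}\sum\tilde y_i^2$ behind Lemma~\ref{lem:theta0}, and the Balakrishnan--Wainwright--Yu EM-contraction bound behind Lemma~\ref{lem:em_siva} and Corollary~\ref{cor:em} — is discharged in those lemmas. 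The only delicate step in the assembly itself is making the $\theta_0$-threshold case analysis airtight, i.e.\ confirming that $\theta_0$ lands on the correct side of $\Delta/4$ in both regimes and that the separation condition needed to invoke the EM contraction is implied by $\Delta/\sigma\ge\tfrac{4}{\sqrt3}$.
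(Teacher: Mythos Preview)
Your proposal is essentially the same as the paper's proof: define the success event for $\hat\mu$, then case-split on $\mu_1=\mu_2$ versus $\mu_1\ne\mu_2$, chain Lemma~\ref{lem:hatmu}, Lemma~\ref{lem:theta0}, and Corollary~\ref{cor:em} through conditional probabilities, and collect the $6\delta+2\delta+\delta$ failure budget. You are in fact more explicit than the paper in checking that the threshold test routes the two cases correctly.

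One caution on the sentence ``one also verifies that the separation hypothesis $\eta=\theta_*/\sigma\ge 4/\sqrt{3}$ \ldots\ is in force under the assumption $\Delta/\sigma\ge 4/\sqrt{3}$'': since $\theta_*=\tfrac12(\mu_2-\mu_1)$ can be as small as $\Delta/2$, the assumption $\Delta/\sigma\ge 4/\sqrt{3}$ only yields $\theta_*/\sigma\ge 2/\sqrt{3}$, not $4/\sqrt{3}$. The paper's own proof simply invokes Corollary~\ref{cor:em} without addressing this, so the factor-of-two slack is a wrinkle in the paper's stated constants rather than a defect unique to your argument; but your ``one also verifies'' should not be stated as if it closes cleanly.
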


\begin{proof}
Let $A_1$ and $A_2$ be the events that $\hat{\mu}=\frac{1}{2}(\mu_1+\mu_2)$ and that $\hat{\theta}_*=\theta_*$, respectively, and $A$ be the event that $\hat{\mu}_1=\mu_1$ and $\hat{\mu}_2=\mu_2$. Then, by Lemma \ref{lem:hatmu}, we know that $\PROL{A_1}\ge1-6\delta$. 

If $\mu_1=\mu_2$, by Lemma \ref{lem:theta0}, we know that $\PROL{A|A_1}\ge1-2\delta$. Then $\PROL{A}\ge\PROL{A|A_1}\PROL{A_1}\ge1-8\delta$.
If $\mu_1<\mu_2$, by Lemma \ref{lem:theta0}, we know that $\PROL{A_2|A_1}\ge1-2\delta$, and by Corollary \ref{cor:em}, we know that $\PROL{A_3|A_2,A_1}\ge1-\delta$. Then, $\PROL{A}\ge\PROL{A_1}\PROL{A_2|A_1}\PROL{A_3|A_2,A_1}\ge1-9\delta$.
\end{proof}

Then, we can derive Theorem~\ref{thm:em} in the main paper by setting $\delta=\BIGO(1/\poly(n))$ and $N=N_1+N_2+N_3$.

\subsection{Proof of Lemma \ref{lem:hatmu}}\label{prf:hatmu}
First, we can see that to get an accurate estimation, it suffices to have $\ABSL{\bar{y}-\frac{1}{2}(\mu_1+\mu_2)}<\frac{\Delta}{4}$. Let $N_{11}=\sum_{i=1}^{N_1}1-z_i$, and $N_{12}=\sum_{i=1}^{N_1}z_i$. We have
$$
\bar{y} = \frac{N_{11}}{N_1}\mu_1+\frac{N_{12}}{N_1}\mu_2+\frac{1}{N_1}\sum_{i=1}^{N_1}w_i.
$$
By Hoeffding's inequality, we have
\begin{equation}\label{eq:hfd1}
\PRO{\ABSL{\frac{N_{11}}{N_1}\mu_1-\frac{\mu_1}{2}}<\frac{\Delta}{12}} \ge 1-2\exp(-\frac{\Delta^2N_1}{72\mu_1^2})\ge 1-2\exp(-\frac{N_1}{72B^2}),
\end{equation}
and similarly
\begin{equation}\label{eq:hfd2}
\PRO{\ABSL{\frac{N_{12}}{N_1}\mu_2-\frac{\mu_2}{2}}<\frac{\Delta}{12}} \ge 1-2\exp(-\frac{\Delta^2N_1}{72\mu_2^2})\ge 1-2\exp(-\frac{N_1}{72b^2}).
\end{equation}
By Chernoff's inequality, we have
\begin{equation}\label{eq:chernoff}
\PRO{\ABSL{\frac{1}{N_1}\sum_{i=1}^{N_1}w_i}<\frac{\Delta}{12}}\ge 1-2\exp(-\frac{N_1\Delta^2}{288\sigma^2}).
\end{equation}
By triangle inequality and union bound, we get
$$
\PRO{\ABSL{\bar{y}-\frac{1}{2}(\mu_1+\mu_2)}<\frac{\Delta}{4}}\ge 1-4\exp(-\frac{N_1}{72b^2})-2\exp(-\frac{N_1\Delta^2}{288\sigma^2}),
$$
which completes the proof.

\subsection{Proof of Lemma \ref{lem:theta0}}\label{prf:theta0}
Let $A_1$ be the event that $\hat{\mu}=\frac{1}{2}(\mu_1+\mu_2)$. In this lemma, all the probabilities are conditioned on the event $A_1$. 

First, consider the case when $\mu_1=\mu_2$, i.e., $\theta_*=0$. Let $\tilde{y}:=\frac{1}{\sigma^2}\sum_{i=N_1+1}^{N_1+N_2} \tilde{y}_i^2$. Then, we know that $\tilde{y}$ is $\chi^2$ distributed with $N_2$ degrees of freedom. By the concentration result of $\chi^2$ distribution, we have for any $\epsilon>0$,
$$
\PRO{\ABSL{\frac{1}{N_2}\tilde{y}-1}\ge \epsilon | A_1}\le 2\exp(-\frac{N_2}{8}\min\{1, \epsilon^2\}).
$$
Then, we have
$$
\PRO{\theta_0<\frac{\Delta}{4} | A_1}\ge\PRO{\ABSL{\frac{\tilde{y}^2}{N_2} - 1} < \frac{\Delta^2}{16\sigma^2} | A_1}\ge 1-2\exp(-\frac{N_2}{8}\min\{1,\frac{\Delta^2}{16\sigma^2}\}),
$$
which implies that if 
\begin{equation}\label{eq:n2eq1}
N_2\ge 8\max\{1, 16\frac{\sigma^2}{\Delta^2}\}\log(\frac{1}{\delta}),
\end{equation}
conditioned on $A_1$, the probability that $\theta_0<\frac{\Delta}{4}$ is at least $1-2\delta$.

Then, we consider the case when $\mu_1\neq\mu_2$. In this case, we have $\theta_*\ge \frac{\Delta}{2}$, and we study the probability that $\ABSL{\theta_0-\theta_*}\le\frac{\theta_*}{4}$. We still define $\tilde{y}:=\frac{1}{\sigma^2}\sum_{i=N_1+1}^{N_1+N_2} \tilde{y}_i^2$. We can see that $\tilde{y}$ has noncentral $\chi^2$ distribution with $N_2$ degrees of freedom and noncentrality parameter $\nu=N_2\frac{\theta_*^2}{\sigma^2}$. According to the results of concentrations of non-central $\chi^2$ distribution, 
we have for all $\epsilon>0$,
\begin{equation}\label{eq:chi1}
\PRO{\tilde{y}\ge (N_2+\nu)+2\sqrt{(N_2+\nu)\epsilon}+2\epsilon | A_1} \le \exp(-\epsilon),
\end{equation}
\begin{equation}\label{eq:chi2}
\PRO{\tilde{y}\le (N_2+\nu)-2\sqrt{(N_2+2\nu)\epsilon} | A_1} \le \exp(-\epsilon).
\end{equation}
We analyze the probability that $\frac{\theta_0}{\theta_*}<\frac{5}{4}$. We substitute $\tilde{y}$ and $\nu$ in (\ref{eq:chi1}) with $N_2(\frac{\theta_0^2}{\sigma^2}+1)$ and $N_2\frac{\theta_*^2}{\sigma^2}$, respectively. By some rearrangements, we get
$$
\PRO{\frac{\theta_0^2}{\theta_*^2} \ge 1+2\frac{\sigma}{\theta_*^2}\sqrt{\frac{(\theta_*^2+\sigma^2)\epsilon}{N_2}} + \frac{2\sigma^2\epsilon}{\theta_*^2N_2} | A_1}\le\exp(-\epsilon).
$$
Then, we know that if $N_2$ is large enough such that $\frac{\sigma}{\theta_*^2}\sqrt{\frac{(\theta_*^2+\sigma^2)\epsilon}{N_2}} \le \frac{9}{64}$ and $\frac{\sigma^2\epsilon}{\theta_*^2N_2} \le \frac{9}{64}$, then we have
$$
\PRO{\frac{\theta_0^2}{\theta_*^2} \ge \frac{25}{16} | A_1} = \PRO{\frac{\theta_0}{\theta_*} \ge \frac{5}{4} | A_1} \le\exp(-\epsilon)
$$
By simple algebra and the fact that $\theta_*\ge\frac{\Delta}{2}$, one can see that there exists universal constants $c_3$ such that if $N_2$ satisfies
\begin{equation}\label{eq:n2eq2}
N_2\ge c_3\frac{\sigma^2}{\Delta^2}(1+\frac{4\sigma^2}{\Delta^2})\log(\frac{1}{\delta}),
\end{equation}
then the probability that $\frac{\theta_0}{\theta_*}<\frac{5}{4}$ conditioned on $A_1$ is at least $1-\delta$. Similarly, using (\ref{eq:chi2}), we know that when (\ref{eq:n2eq2}) is satisfied, we can guarantee that $\frac{\theta_0}{\theta_*}>\frac{3}{4}$ with probability at least $1-\delta$. We can complete the proof by union bound.

\subsection{Proof of Corollary \ref{cor:em}}\label{prf:cor}
To guarantee that $\hat{\theta}_*=\theta_*$, we need $\ABSL{\theta_T-\theta_*}<\frac{\Delta}{2}$. By Lemma \ref{lem:theta0}, it suffices to guarantee two facts:
$$
\kappa^T\ABSL{\theta_0-\theta_*}<\frac{\Delta}{4},
$$
and 
$$
\frac{c_6}{1-\kappa}\theta_*\sqrt{\theta_*^2+\sigma^2}\sqrt{\frac{1}{N_3}\log(\frac{1}{\delta})}<\frac{\Delta}{4}.
$$
Conditioning on the event that $\ABSL{\theta_0-\theta_*}<\frac{\theta_*}{4}$ and $\theta_*<b\Delta$, we know that it is sufficient to have $T>\frac{\log(b)}{\log(1/\kappa)}$ and $N_3>\frac{16c_6^2}{(1-\kappa)^2}b^2(b^2\Delta^2+\sigma^2)\log(\frac{1}{\delta})$.

\end{document}